\title{Indirect Inference for L\'evy-driven continuous-time GARCH models}
\author{Thiago do R\^ego Sousa\thanks{Center for Mathematical Sciences, Technical University of Munich,  85748 Garching, Boltzmannstr.~3, Germany, e-mail:  thiago.sousa@tum.de, cklu@tum.de, haug@tum.de}
\and Stephan Haug\footnotemark[1]
\and Claudia Kl\"uppelberg\footnotemark[1]
}
\numberwithin{equation}{section}
\newtheorem{theorem}{Theorem}[section]
\newtheorem{lemma}[theorem]{Lemma}
\newtheorem{remark}[theorem]{Remark}
\newtheorem{example}[theorem]{Example}
\newtheorem{proposition}[theorem]{Proposition}
\newtheorem{definition}[theorem]{Definition}
\newtheorem{corollary}[theorem]{Corollary}
\newcommand{\bthe}{\begin{theorem}}
\newcommand{\ethe}{\end{theorem}}
\newcommand{\ble}{\begin{lemma}}
\newcommand{\ele}{\end{lemma}}
\newcommand{\bde}{\begin{definition}\rm}
\newcommand{\ede}{\Chalmos\end{definition}}
\newcommand{\bco}{\begin{corollary}}
\newcommand{\eco}{\end{corollary}}
\newcommand{\bpr}{\begin{proposition}}
\newcommand{\epr}{\end{proposition}}
\newcommand{\brem}{\begin{remark}\rm}
\newcommand{\erem}{\Chalmos\end{remark}}
\newcommand{\bproof}{\begin{proof}}
\newcommand{\eproof}{\end{proof}}
\newcommand{\bexam}{\begin{example}\rm}
\newcommand{\eexam}{\Chalmos\end{example}}
\newcommand{\beao}{\begin{eqnarray*}}
\newcommand{\eeao}{\end{eqnarray*}\noindent}
\newcommand{\beam}{\begin{eqnarray}}
\newcommand{\eeam}{\end{eqnarray}\noindent}
\newcommand{\barr}{\begin{array}}
\newcommand{\earr}{\end{array}}
\def\N{{\mathbb N}}
\def\Z{{\mathbb Z}}
\def\P{{\mathbb P}}
\def\E{{\mathbb E}}
\def\R{{\mathbb R}}
\def\Q{{\mathbb Q}}
\def\calf{{\mathcal{F}}}
\def\calh{{\mathcal{H}}}
\def\call{{\mathcal{L}}}
\def\calm{{\mathcal{M}}}
 \def\1{\mathds{1}}
\newcommand{\stp}{\stackrel{P}{\rightarrow}}
\newcommand{\std}{\stackrel{d}{\rightarrow}}
\newcommand{\stas}{\stackrel{\rm a.s.}{\rightarrow}}
\newcommand{\eqd}{\stackrel{\mathrm{d}}{=}}
\newcommand{\tto}{{t\to\infty}}
\newcommand{\nto}{{n\to\infty}}
\newcommand{\htzero}{{h\to 0}}
\newcommand{\si}{{\sigma}}
\newcommand{\vp}{\varphi}
\DeclareMathOperator*{\argminA}{arg\,min} 
\newcommand{\var}{\mathbb{V}{\rm ar}}
\newcommand{\cov}{\mathbb{C}{\rm ov}}
\newcommand{\ls}{{\rm LS}}
\newcommand{\yw}{{\rm YW}}
\newcommand{\ii}{{\rm II}}
\newcommand{\iistar}{{\rm II}^\ast}
\newcommand{\mm}{{\rm MM}}
\newcommand{\opb}{{\rm OPB}}
\newcommand{\bstheta}{{\bm{\theta}}}
\newcommand{\bsTheta}{{\Theta}}
\newcommand{\bsa}{{\bm{a}}}
\newcommand{\bspi}{{\bm{\pi}}}
\newcommand{\bsx}{{\bm{x}}}
\newcommand{\hatbspi}{\hat{\bm{\pi}}}
\newcommand{\tildebsw}{{\tilde{\bm{W}}_n}}
\newcommand{\tildebsy}{{\tilde{\bm{Y}}_n}}
\newcommand{\tildevp}{{\tilde{\vp}}}
\newcommand{\barbsw}{{\bar{\bm{W}}_n}}
\newcommand{\barbsy}{{\bar{\bm{Y}}_n}}        
\newcommand{\sumzerous}{\sum_{0 < u \leq s}}
\newcommand{\sumzerout}{\sum_{0 < u \leq t}}
\newcommand{\diff}{{\rm d}}
\newcommand{\gcont}{G^{(c)}}
\newcommand{\xcont}{X^{(c)}}
\newcommand{\ThetaHCone}{\Theta^{(p(1+\epsilon))}}
\newcommand{\ThetaHCtwo}{\Theta^{(2p(1+\epsilon))}}
\newcommand{\supThetaOnePLoc}{\sup_{\substack{0 < \|\bstheta_1-\bstheta_2\| < 1 \\ \bstheta_1,\bstheta_2 \in \ThetaHCone}}}
\newcommand{\supThetaTwoPLoc}{\sup_{\substack{0 < \|\bstheta_1-\bstheta_2\| < 1 \\ \bstheta_1,\bstheta_2 \in \ThetaHCtwo}}}
\newcommand{\gradtheta}{\nabla_{\bstheta}}
\newcommand{\gradtwo}{\nabla_{\eta,\varphi}}
\newcommand{\partiall}{\frac{\partial}{\partial \bstheta_l}}
\newcommand{\partialkl}{\frac{\partial^2}{\partial \bstheta_k \bstheta_l}}
\newcommand{\ov}{\overline}
\newcommand{\Chalmos}{\quad\hfill\mbox{$\Box$}}  
\begin{document}
\maketitle

\begin{abstract}
We advocate the use of an Indirect Inference method to estimate the parameter of a COGARCH(1,1) process for equally spaced observations. 
This requires that the true model can be simulated and a reasonable estimation method for an approximate auxiliary model. 
We follow previous approaches and use linear projections leading to an auxiliary autoregressive model for the squared COGARCH returns. 
The asymptotic theory of the Indirect Inference estimator relies {on a uniform SLLN and asymptotic normality of the parameter estimates of the auxiliary model,
which require continuity and differentiability of the COGARCH process} with respect to its parameter and which we prove via Kolmogorov's continuity criterion. 
This leads to consistent and asymptotically normal Indirect Inference estimates under moment conditions on the driving L\'evy process. 
A simulation study shows that the method yields a substantial finite sample bias reduction compared to previous estimators. 
\end{abstract}

\noindent
{\em AMS 2010 Subject Classifications:}  primary:\,
62F12,     
62M10,   
91G70\,,  	
secondary: \,
37M10,   
62P05\,.  	
\noindent
{\em Keywords:}
Asymptotic normality, Bias reduction, COGARCH, Continuous-time GARCH, Indirect Inference estimation, Projection methods, SLLN

\section{Introduction}

The COGARCH(1,1) process was introduced in \citet{Kluppelberg04} as a continuous time analog of the discrete time GARCH(1,1) process. 
It is defined as
\begin{equation}\label{eq:def:Pt}
P_t(\bstheta) = \int_0^t \sigma_s(\bstheta) \diff L_s, \quad t\ge 0,
\end{equation}
with parameter $\bm{\theta}$ (to be specified in Section~\ref{s2}), $L$ is a L\'evy process with L\'{e}vy measure $\nu_L\not\equiv 0$ and having c\`adl\`ag  sample paths.
The volatility process $(\sigma_s(\bstheta))_{s \geq 0}$ is predictable and its stochasticity depends only on $L$. 
The COGARCH process satisfies many stylized features of financial time series and is suited for modeling high-frequency data (see \citet{Bayraci14}, \citet{Bibbona15}, \citet{Haug07}, \citet{Maller08}, \citet{Kluppelberg10}, and \citet{Muller10}).

In many practical problems, one observes the log-price process $(P_{i\Delta}(\bstheta_0))_{i=1}^n$ on a
fixed grid of size $\Delta>0$ and the question of interest is how to estimate the true parameter $\bstheta_0$.
The data used for estimation are returns $(G_i(\bstheta_0))_{i=1}^n$, where
\begin{equation}\label{eq:def:Gi}
G_i(\bstheta_0) := P_{{\Delta i}}(\bstheta_0) - P_{{(i-1)\Delta}}(\bstheta_0) = \int_{(i-1)\Delta}^{i\Delta} \sigma_s(\bstheta_0) \diff L_s.
\end{equation}
Several methods have been proposed to estimate the parameter of a COGARCH process. 
A method of moments was proposed in \citet{Haug07}, 
\citet{Bibbona15} used prediction based estimation as developed in \citet{Sorensen00}, and 
\citet{Maller08} proposed a pseudo maximum likelihood (PML) method which also works for non-equally spaced observations. 
Both moment and prediction based estimators  are consistent and asymptotically normal under certain regularity conditions.
The asymptotic properties of the PML estimator were studied in \citet{Iannace14} and in \citet{Kim13}, which require that $\Delta\downarrow 0$ as $n\to\infty$. For the COGARCH process, \citet{Bayraci14} used Indirect Inference  with an auxiliary discrete-time GARCH model with Gaussian residuals. No theoretical results were proved, but their simulation study suggests that Indirect Inference estimators achieve a similar performance as the PML estimator of \citet{Maller08} for fixed $\Delta>0$. Furthermore, \citet{Muller10} proposed a Markov chain Monte Carlo method, when $L$ is a compound Poisson process. 

In this paper we advocate an Indirect Inference method, different to the one suggested in \citet{Bayraci14}, to estimate the COGARCH parameter and derive the asymptotic properties of the estimator.
Such methods were introduced in \citet{Smith93} and generalized in \citet{Gourieroux93}, and offer a way to overcome many estimation problems by a clever simulation method. 
In short, it only requires that the true model can be simulated and a reasonable estimation method for an approximate auxiliary model. 

Indirect Inference was originally introduced for complex econometric models to overcome the estimation problem of an intractable likelihood function, as e.g. for continuous time models with stochastic volatility (see \citet{Bianchi96}, \citet{Jiang98}, \citet{Laurini13}, \citet{Raknerud12}, and \citet{Wahlberg15}). 
Indirect Inference can also be used as a vehicle to produce estimators which are robust, when there are outliers in the observations (see \citet{deLuna01} for robust estimation of a discrete time ARMA and \citet{Fasen18II} of a continuous time ARMA). 
Another motivation is given in \citet{Gourieroux00, Gourieroux10}, where it is shown that Indirect Inference can reduce the finite sample bias considerably.
This is our motivation to study the asymptotic properties of Indirect Inference estimators (IIE) in the context of COGARCH estimation.

The Indirect Inference procedure works as follows. Let $\bspi$ denote the parameter of an auxiliary model chosen for the COGARCH returns $(G_i(\bstheta_0))_{i=1}^n$ or some transformed random variables. 
From this data we estimate $\bspi$ and obtain $\hatbspi_n$. 
For many different $\bstheta \in \Theta$ we simulate $K\ge 1$ independent samples of size $n$ of COGARCH returns $(G_i^{(k)}(\bstheta))_{i=1}^n$ and compute the estimators $\hatbspi_{n,k}(\bm{\theta})$ for $k = 1,\dots,K$. 
The IIE of $\bstheta$ is then defined as
\begin{equation}\label{eq:1.3}
\hat{\bm{\theta}}_{n,\ii} := \argminA_{\bm{\theta} \in \Theta}  \bigg(\hatbspi_n - \frac{1}{K} \sum_{k=1}^K \hatbspi_{n,k}(\bstheta) \bigg)^\top \bm{\Omega}\bigg(\hatbspi_n - \frac{1}{K} \sum_{k=1}^K \hatbspi_{n,k}(\bstheta) \bigg),
\end{equation}
where $\bm{\Omega}$ is a symmetric and positive definite weight matrix.
Under certain regularity conditions, IIEs are consistent and asymptotically normal. 
These regularity conditions are mainly related to three aspects: 
(A) find an auxiliary model whose parameter is connected to the COGARCH parameter through a one-to-one binding function, (B) prove strong consistency and asymptotic normality of $\hatbspi_n$, and (C) prove that the estimator $\hatbspi_{n}(\bstheta)$, as a function of $\bstheta$, satisfies conditions for the application of a uniform strong law of large numbers (SLLN) and a delta method for the asymptotic normality.

The starting point (A) is an appropriate auxiliary model that provides a one-to-one binding function. 
We follow previous approaches and use linear projections leading to an auxiliary autoregressive (AR) model of appropriate order for the squared COGARCH returns $(G^2_i(\bstheta))_{i \in \N}$. 
Often the properties of the binding function are assessed via simulation (see \citet{Lombardi08} and \citet{Garcia11}), but for our models the binding function can be proved to be one-to-one.

Part (B), strong consistency and asymptotic normality of the estimator $\hatbspi_n$ of the AR model parameter $\bspi$, is obtained in a similar way as in classical time series analysis (see e.g. \citet{Brockwell13}), extending the theory to residuals, which may not be white noise, but an arbitrary stationary and ergodic process with finite variance.
The SLLN and asymptotic normality of $\hatbspi_n$ will then be a consequence of the fact that  $(G^2_i(\bstheta))_{i \in \N}$ is also strong mixing with appropriate mixing coefficients.

(C) is related to regularity conditions of the map $\bstheta \mapsto \hatbspi_{n}(\bstheta)$. 
To achieve strong consistency of the IIE we need to show that
\begin{equation*}
\sup_{\bm{\theta} \in \Theta}  \| \hatbspi_{n}(\bstheta) - \bm{\pi}_{\bstheta}  \| \overset{\text{a.s.}}{\rightarrow} 0, \quad \nto.
\end{equation*}
To move from point-wise to uniform convergence we use a uniform SLLN in a compact parameter space $\Theta$. 
For the estimator we study here, the application of a uniform SLLN holds provided that $G_i(\bm{\theta})$ is a continuous function in $\bm{\theta}$ and $
\E \sup_{\bm{\theta} \in \Theta}  G^4_i(\bm{\theta}) < \infty$ for all $i \in \N$. 
The continuity of this map does not follow directly from the continuity of $\sigma_s(\bstheta)$ for fixed $s$, because the L\'{e}vy process in the stochastic integral in \eqref{eq:def:Gi} may have infinite variation. 
Under conditions on the moments and the characteristic exponent of the driving L\'{e}vy process, we find a version of $G_i(\bstheta)$ which is continuous by Kolmogorov's continuity criterion, and as a result we conclude strong consistency of the IIE $\hat{\bm{\theta}}_{n,\ii}$. 
A Taylor expansion of $\hatbspi_n(\bstheta)$ around the true parameter $\bstheta_0$ yields asymptotic normality by the delta method.
This will require continuous differentiability of $G_i(\bstheta)$ in $\bstheta$, which will follow from a result of \citet{James84} together with Kolmogorov's continuity criterion. 

Our paper is organised as follows. 
We start in Section 2 with the formal definition of a stationary COGARCH process as returns process, and recall its relevant properties.
We also present the autoregressive auxiliary model of the squared returns and define the least squares estimator (LSE) and Yule-Walker estimator (YWE) of the AR parameter, as well as  the binding function giving the link to the COGARCH parameter. In Section~2.3 we present the IIE and the conditions, which guarantee a uniform SLLN and asymptotic normality of the IIE.
In Section~3 we prove strong consistency and asymptotic normality of the LSE and YWE under the non-standard conditions of stationary ergodicity and a mixing property. 
Section~4 is dedicated to strong consistency and asymptotic normality of the IIE of the COGARCH process.
Section~5 presents a simulation study and shows that the bias reduction based on the IIE is indeed substantial compared to previous estimators.
Technical results like conditions for the existence of a version of the COGARCH returns, which is continuous in its parameter and other auxiliary results are summarized in an Appendix.

Throughout we write $\|\cdot\|$ for the $\ell^1$-norm in $\R^d$ for $d\in\N$ and recall that in $\R^d$ all norms are equivalent. For a matrix $A \in \R^{p\times q}$ we also write $\|A\|$ for the matrix norm generated by the $\ell^1$-norm.
For a vector $x \in \R^d$ and a $d\times d$  positive definite matrix $\bm{\Omega}$ we write $\| x \|_{\bm{\Omega}} = x^\top \Omega x $.
Furthermore,  we denote by $\call^p$ the space of $p$-integrable random variables, and  by $\dim(A)$ the dimension of a subset $A$ of $\R^d$. 
For a function $f(\bstheta)$ in $\R$ with $\bstheta \in \R^q$ the gradient with respect to $\bstheta$ is $\gradtheta f(\bstheta) = (\partiall f(\bstheta))_{l=1}^q \in \R^{q}$ and $\gradtheta^2 f(\bstheta) = (\partialkl f(\bstheta))_{k,l=1}^q \in \R^{q \times q}$ denotes the Hessian matrix.

\section{COGARCH process, auxiliary AR representation and Indirect Inference Estimation}\label{s2}

\subsection{Definition of the COGARCH process}

For the parameter space of the COGARCH process given as $\{\bstheta=(\beta,\eta,\vp)^\top : \beta,\eta,\vp>0\}$, we construct a strictly stationary version of the volatility process as in \citet{Kluppelberg04}.
First define the process $(Y_s(\bstheta))_{s \geq0}$ by
\begin{equation}\label{eq:def:yt}
Y_s(\bm{\theta}) := \eta s - \sumzerous \log ( 1 + \vp (\Delta L_u)^2 ), \quad s \geq 0,
\end{equation}
with Laplace transform  $\E e^{-pY_s(\bm{\theta})} = e^{ s \Psi_{\bm{\theta}}(p) }$, where
\begin{equation}\label{eq:mgf:yt}
\Psi_{\bm{\theta}}(p) = -p\eta + \int_{\mathbb{R}} ( (1 + \vp x^2)^p - 1 ) \nu_L(\diff x), \quad p \geq 0.
\end{equation} 
We shall often use the fact that for $p>0$ by Lemma~4.1(a) in \cite{Kluppelberg04},
$$\E |L_1|^{2p} < \infty\quad\mbox{if and only if}\quad |\Psi_\bstheta(p)| <\infty.$$
Define the volatility process $(\sigma^2_t(\bstheta))_{t \geq 0}$ by
\begin{equation}\label{eq:def:sigt}
\sigma^2_t(\bm{\theta}) :=  \Big( \beta\int_0^t e^{Y_s(\bm{\theta})} \diff s + \sigma^2_0(\bm{\theta}) \Big) e^{-Y_{t-}(\bm{\theta})}, \quad t \geq 0,
\end{equation}
where $Y_{t-}(\bstheta)$ denotes the left limit at $t$ and $\sigma^2_0(\bm{\theta})$ the  starting value of the volatility process. 
If $\E |L_1|^{2} < \infty$  and $\Psi_{\bm{\theta}}(1) < 0$, then by Lemma~4.1(c) of \cite{Kluppelberg04},
$\sigma^2_t(\bm{\theta}) \overset{\text{d}}{\rightarrow} \sigma^2_{\infty}(\bm{\theta})$ as $\tto$, where
\begin{equation*}
\sigma^2_{\infty}(\bm{\theta}) \overset{\text{d}}{=} \beta \int_0^{\infty} e^{-Y_{s}(\bm{\theta})} \diff s.
\end{equation*}
Setting the starting value as
\begin{equation}\label{eq:def:sig0}
\sigma^2_{0}(\bm{\theta}) \overset{\text{d}}{=}  \beta \int_0^{\infty} e^{-Y_{s}(\bm{\theta})} \diff s, \quad \text{independent of } L,
\end{equation}
by Theorem~3.2 of \cite{Kluppelberg04} for such $\bstheta$ the process $(\sigma_t^2(\bstheta))_{t \geq 0}$ is strictly stationary. 
Then by Proposition~4.2 of \cite{Kluppelberg04} for the stationary process and $k\in\N$,
\beam\label{eq:2.5}
\E\si_0^{2k}(\bstheta)<\infty\quad \mbox{if and only if}\quad \E L_1^{2k}<\infty\,\,\,\mbox{and}\,\,\,\Psi_{\bm{\theta}}(k)<0.
\eeam
Furthermore, for $k=1,2$ either of this implies that the squared returns from \eqref{eq:def:Gi} have corresponding finite moments (Proposition~5.1 of \cite{Kluppelberg04}). Additionally, by Corollary~3.1 of \cite{Kluppelberg04} the process $(P_t(\bstheta))_{t \geq 0}$ defined in \eqref{eq:def:Pt} with stationary $(\sigma_t(\bstheta))_{t\ge0}$ has stationary increments. 

\subsection{AR representation for the squared returns}

We estimate the COGARCH parameter, when the log-price process is observed on a regular grid of fixed size $\Delta>0$, such that the data are modelled by the returns $(G_i(\bm{\theta}))_{i \in \mathbb{N}}$ as defined in \eqref{eq:def:Gi}.

We state the basic assumptions and recall some properties of the COGARCH process. 

\begin{proposition}
[Theorems 3.1 and 3.4 in \citet{Haug07}]\label{th:haug}
Assume that:
\begin{itemize}
\item[(A1)] The parameter vector $\bm{\theta} = (\beta,\eta,\vp)^\top$ satisfies $\beta, \eta, \vp > 0$.
\item[(A2)] $\E L_1 = 0$ and $\var L_1 = 1$.
\item[(A3)] The variance $c_L$ of the Brownian component of L is known and satisfies $0 \leq c_L < \var L_1$.
\item[(A4)] $\E L_1^4 < \infty$.
\item[(A5)] $\int_{\mathbb{R}} x^3 \nu_L(\diff x) = 0$.
\item[(A6)] $\Psi_{\bm{\theta}}(2) < 0$.
\end{itemize}
Denote the expectation and variance of the squared returns process by 
\beao
\mu_{\bstheta} = \E  G_1^2(\bm{\theta})  \quad \text{and} \quad \gamma_{\bm{\theta}}(0) = \var G_1^2(\bm{\theta})
\eeao
Then the following holds:
\begin{itemize}
\item[(a)] 
The autocovariance function of the squared returns process is given by
\begin{equation}\label{eq:cov}
\gamma_{\bm{\theta}}(h) = \cov ( G_i^2(\bm{\theta}), G_{i+h}^2(\bm{\theta}) ) = 
\gamma_{\bm{\theta}}(0)k_{\bm{\theta}} e^{-h\rho_{\bm{\theta}}}, \quad h \in \mathbb{N}.
\end{equation}
\item[(b)] 
If  $\mu_{\bstheta},\gamma_{\bm{\theta}}(0),k_{\bm{\theta}},\rho_{\bm{\theta}} > 0$,  then these parameters uniquely determine $\bm{\theta}$.
\item[(c)]
 The process $(G_i(\bm{\theta}))_{i \in \mathbb{N}}$ is $\alpha$-mixing with exponentially decaying mixing coefficients.
\end{itemize}
\end{proposition}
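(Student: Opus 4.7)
The plan is to treat the three parts separately, each exploiting the generalised Ornstein--Uhlenbeck structure of the squared volatility recalled above.

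For part~(a), I would begin with the SDE satisfied by $\sigma_t^2(\bstheta)$, which is a generalised Ornstein--Uhlenbeck equation driven by the quadratic variation of $L$. Taking expectations in the stationary regime yields $\E \sigma_0^2(\bstheta) = -\beta/\Psi_\bstheta(1)$, and an analogous computation on $\sigma_t^4(\bstheta)$ together with the Markov property of $\sigma^2$ gives a stationary variance and an exponentially decaying autocovariance of the form $\cov(\sigma_s^2(\bstheta),\sigma_t^2(\bstheta)) = C(\bstheta)\,e^{\Psi_\bstheta(1)(t-s)}$ for $s \le t$. The crucial step is to relate this to $\cov(G_i^2(\bstheta), G_{i+h}^2(\bstheta))$: using a stochastic-integral decomposition of the form $G_i^2(\bstheta) = \int_{(i-1)\Delta}^{i\Delta}\sigma_s^2(\bstheta)\,ds + M_i$, with $M_i$ a martingale difference with respect to the natural filtration of $L$, and conditioning on $\calf_{(i-1)\Delta}$, the martingale cross terms vanish and the covariance reduces to a double integral of $\cov(\sigma_s^2,\sigma_u^2)$ over the two disjoint intervals. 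Carrying out this double integral then produces the exponential decay in \eqref{eq:cov} with $\rho_\bstheta = -\Psi_\bstheta(1)$; assumption (A5) is what eliminates the odd-moment cross terms that would otherwise spoil this reduction.

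For part~(b), the explicit formulas obtained in~(a) make the system essentially triangular. Under (A2) and (A3) we have $\Psi_\bstheta(1) = -\eta + \vp(1-c_L)$, so $\rho_\bstheta = -\Psi_\bstheta(1)$ pins down a linear combination of $\eta$ and $\vp$; the stationary-mean identity $\mu_\bstheta = -\Delta\beta/\Psi_\bstheta(1)$ then recovers $\beta = \mu_\bstheta \rho_\bstheta/\Delta$. The remaining quantities $\gamma_\bstheta(0)$ and $k_\bstheta$ are expressible in terms of $\Psi_\bstheta(2)$, $\beta$ and $\E L_1^4$, yielding a second equation involving $\eta$ and $\vp$; I would conclude uniqueness by substituting the linear relation from $\rho_\bstheta$ and checking that the resulting equation is strictly monotone in one of the variables on the positive octant allowed by (A1) and (A6).

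For part~(c), the starting point is that the skeleton $(\sigma_{i\Delta}^2(\bstheta))_{i \in \N}$ is a Markov chain of random-recurrence type,
\[
\sigma_{i\Delta}^2(\bstheta) \eqd A_i\,\sigma_{(i-1)\Delta}^2(\bstheta) + B_i,
\]
with $(A_i,B_i)_{i \in \N}$ i.i.d.\ non-negative pairs constructed from the increments of $L$ on $[(i-1)\Delta,i\Delta]$, and $\E A_i = e^{\Delta\Psi_\bstheta(1)} < 1$ under (A6). A Foster--Lyapunov drift inequality with test function $V(x) = 1+x$, combined with a minorisation on compact sets inherited from the law of the $L$-increments, gives geometric ergodicity of this chain and hence $\alpha$-mixing of the chain with geometric rate. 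Since each return $G_i(\bstheta)$ is a measurable functional of $\sigma_{(i-1)\Delta}^2(\bstheta)$ and the $L$-increment on $[(i-1)\Delta,i\Delta]$, the returns inherit the same geometric $\alpha$-mixing rate.

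The main obstacle is part~(a): the jumps of $L$ rule out a pure It\^o calculation, and the decomposition of $G_i^2(\bstheta)$ into a predictable integrated-volatility piece and an orthogonal martingale remainder, together with the ensuing evaluation of the double integral of the volatility autocovariance, requires careful bookkeeping of the cross terms generated by the jump part of $[L,L]$ and a judicious use of (A5) to kill the third-moment contributions.
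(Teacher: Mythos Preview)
The paper does not prove this proposition at all: it is stated as a direct citation of Theorems~3.1 and~3.4 in \citet{Haug07}, with no argument given. So there is no ``paper's own proof'' to compare against; the result is simply imported.

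That said, your sketch is broadly in line with how these results are actually established in the cited reference. A few comments on accuracy. In part~(a), the decomposition you write as $G_i^2(\bstheta) = \int_{(i-1)\Delta}^{i\Delta}\sigma_s^2(\bstheta)\,\diff s + M_i$ is not quite the right object: It\^o's formula for the square of the stochastic integral gives $G_i^2(\bstheta) = \int_{(i-1)\Delta}^{i\Delta}\sigma_{s-}^2(\bstheta)\,\diff[L,L]_s + (\text{martingale increment})$, and only after compensating $[L,L]$ (using $\E[L,L]_t = t$ under (A2)) do you get the $\diff s$ integral plus a genuine martingale difference. The identification $\rho_\bstheta = -\Psi_\bstheta(1)$ should carry a factor of $\Delta$, since $h$ in \eqref{eq:cov} counts grid steps; this matters later for part~(b), where you recover $\beta$ from $\mu_\bstheta$ and $\rho_\bstheta$. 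In part~(c), the Foster--Lyapunov route is one way in, but the argument in \citet{Haug07} goes through $\beta$-mixing of the volatility process (via Masuda's results on generalised OU processes), which then transfers to the returns; your random-recurrence skeleton approach is a legitimate alternative, though the minorisation step deserves more care since the law of $(A_i,B_i)$ can be singular when $L$ has no Brownian component.
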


Assume that the driving L\'{e}vy process satisfies assumptions (A2)-(A5) of Proposition~\ref{th:haug}. 
We take as parameter space of the COGARCH process a compact set $\Theta$ satisfying the relevant conditions of Proposition~\ref{th:haug}; more precisely,
\begin{equation}\label{eq:def:Th}
\Theta \subset \calm :=\{\bm{\theta} =(\beta,\eta,\vp)^\top : \beta,\eta,\vp > 0, \Psi_{\bm{\theta}}(2) < 0 \text{ and } \mu_{\bstheta},\gamma_{\bm{\theta}}(0),k_{\bm{\theta}},\rho_{\bm{\theta}} > 0\}.
\end{equation}
In what follows, we assume that the true model parameter $\bstheta_0 \in \Theta$.
We present the auxiliary AR model using the structure of COGARCH squared returns. 
Define the centered squared returns for $\bstheta\in\Theta$ as
\begin{equation}\label{eq:defG2Til}
\tilde{G}^2_i(\bm{\theta}) := G^2_i(\bm{\theta}) - \mu_{\bstheta}, \quad i \in \mathbb{N}.
\end{equation}

\begin{proposition}[Auxiliary AR$(r)$ model]\label{pr:cog:aux}
 Let $\bm{\theta} \in \Theta$ and $r \geq 2$ be fixed. Define
\begin{equation*}
U_i(\bm{\theta}) := \tilde{G}^2_{i+r}(\bm{\theta}) - P_{\calh_i}  \tilde{G}^2_{i+r}(\bm{\theta}), \quad i\in\N,
\end{equation*}
where  $\calh_i = {\rm \ov{sp}} \{\tilde{G}^2_{i+r-j}(\bm{\theta}),j=1,\dots,r\}$ is the closed span in the Hilbert space $\call^2$. 
Then  there exist unique real numbers $a_{\bm{\theta},1},\dots,a_{\bm{\theta},r}$ such that 
\begin{equation}\label{eq:2.8}
U_i(\bm{\theta}) = \tilde{G}^2_{i+r}(\bm{\theta}) - \sum_{j=1}^r a_{\bm{\theta},j} \tilde{G}^2_{i+r-j}(\bm{\theta}), \quad i \in \mathbb{N}.
\end{equation}
Moreover, the process $(U_i(\bm{\theta}))_{i \in \mathbb{N}}$ is strictly stationary with $\E U_i(\bm{\theta}) = 0$ and $\var U_i(\bm{\theta}) < \infty$.
\end{proposition}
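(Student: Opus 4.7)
The plan is to reduce the statement to a standard orthogonal-projection argument in $\call^2$, the only nontrivial step being to verify that the relevant Gram matrix is non-singular. First I would verify that $\tilde G_i^2(\bstheta) \in \call^2$ so that projections are well defined. Since $\bstheta \in \Theta \subset \calm$ gives $\Psi_{\bstheta}(2) < 0$ and assumption (A4) of Proposition~\ref{th:haug} gives $\E L_1^4 < \infty$, the equivalence \eqref{eq:2.5} yields $\E \sigma_0^4(\bstheta) < \infty$, and Proposition~5.1 of \cite{Kluppelberg04} then gives $\E G_i^4(\bstheta) < \infty$ for every $i$, hence $\tilde G_i^2(\bstheta) \in \call^2$. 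The stationary starting value \eqref{eq:def:sig0} together with Corollary~3.1 of \cite{Kluppelberg04} guarantees that $(G_i(\bstheta))_{i \in \N}$, and therefore $(\tilde G_i^2(\bstheta))_{i \in \N}$, is strictly stationary.

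Since $\calh_i$ is a finite-dimensional (hence closed) subspace of $\call^2$, the projection $P_{\calh_i} \tilde G_{i+r}^2(\bstheta)$ exists uniquely as an element of $\calh_i$ by the Hilbert-space projection theorem, and can therefore be written in the form $\sum_{j=1}^r a_{\bstheta,j} \tilde G_{i+r-j}^2(\bstheta)$. The normal equations $\gamma_{\bstheta}(k) = \sum_{j=1}^r a_{\bstheta,j}\,\gamma_{\bstheta}(|j-k|)$, $k=1,\dots,r$, do not involve $i$, so the coefficients are automatically $i$-independent; their uniqueness is equivalent to linear independence in $\call^2$ of the spanning random variables $\tilde G_i^2(\bstheta), \dots, \tilde G_{i+r-1}^2(\bstheta)$, equivalently to positive definiteness of their Gram matrix, which (because these variables are centered) coincides with the autocovariance matrix $\Gamma_r = [\gamma_{\bstheta}(|i-j|)]_{i,j=1}^r$. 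Using \eqref{eq:cov} I would decompose
\begin{equation*}
\Gamma_r = \gamma_{\bstheta}(0)\bigl[(1-k_{\bstheta})\,I_r + k_{\bstheta}\,T_r\bigr], \qquad T_r = \bigl[e^{-|i-j|\rho_{\bstheta}}\bigr]_{i,j=1}^r,
\end{equation*}
where $T_r$ is the AR$(1)$ correlation matrix with parameter $e^{-\rho_{\bstheta}} \in (0,1)$, positive definite with determinant $(1-e^{-2\rho_{\bstheta}})^{r-1} > 0$. Combined with $\gamma_{\bstheta}(0) > 0$ from the definition of $\Theta$, this immediately yields positive definiteness of $\Gamma_r$ when $k_{\bstheta} \le 1$; for $k_{\bstheta} > 1$ one uses that $\Gamma_r$ is automatically positive semidefinite as a real covariance matrix and excludes singularity by inspecting the eigenvalues in the above decomposition, or equivalently by noting that any non-trivial linear relation $\sum_{j=1}^r c_j\,\tilde G_{i+j-1}^2(\bstheta) = 0$ a.s.\ would contradict the strictly positive exponential covariance structure of \eqref{eq:cov}. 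This verification is the main obstacle, since Proposition~\ref{th:haug} does not provide an a priori bound on $k_{\bstheta}$.

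Finally, the stated properties of $(U_i(\bstheta))_{i\in\N}$ follow easily: $\E U_i(\bstheta) = 0$ because $\E \tilde G_i^2(\bstheta) = 0$ for all $i$; $\var U_i(\bstheta) \le \gamma_{\bstheta}(0) < \infty$ by non-expansiveness of the orthogonal projection; and strict stationarity of $(U_i(\bstheta))_{i\in\N}$ is inherited from that of $(\tilde G_i^2(\bstheta))_{i\in\N}$, because \eqref{eq:2.8} realises $U_i(\bstheta)$ as the same (measurable, in fact linear) function of the sliding window $(\tilde G_i^2(\bstheta), \dots, \tilde G_{i+r}^2(\bstheta))$ for every $i$, and fixed measurable transformations preserve strict stationarity.
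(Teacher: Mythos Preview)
Your overall strategy is sound and in fact more thorough than the paper's proof in several respects: you explicitly verify $\tilde G_i^2(\bstheta)\in\call^2$, you spell out why the coefficients are $i$-independent, and you address the strict stationarity and moment claims for $(U_i(\bstheta))_{i\in\N}$, all of which the paper leaves implicit. The one substantive divergence is in how you establish non-singularity of the autocovariance matrix $\Gamma_r$, and here your argument has a genuine gap.

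Your decomposition $\Gamma_r = \gamma_{\bstheta}(0)\bigl[(1-k_{\bstheta})I_r + k_{\bstheta}T_r\bigr]$ handles $k_{\bstheta}\le 1$ cleanly, but the case $k_{\bstheta}>1$ is not resolved. ``Inspecting the eigenvalues'' would require showing that $(k_{\bstheta}-1)/k_{\bstheta}$ is never an eigenvalue of $T_r$, which you have not done and which does not follow from positive semidefiniteness of $\Gamma_r$ alone. The alternative route you sketch --- that a nontrivial relation $\sum_j c_j\,\tilde G_{i+j-1}^2(\bstheta)=0$ a.s.\ would ``contradict the strictly positive exponential covariance structure'' --- is asserted rather than proved; taking covariances with $\tilde G_{i+m}^2(\bstheta)$ for large $m$ only yields the single constraint $\sum_j c_j e^{j\rho_{\bstheta}}=0$, which does not force $c=0$. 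You yourself flag this as ``the main obstacle'', and it remains one.

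The paper sidesteps this entirely by invoking Proposition~5.1.1 of \citet{Brockwell13}: for any stationary sequence with $\gamma(0)>0$ and $\gamma(h)\to 0$ as $h\to\infty$, the covariance matrix $\Gamma_r$ is non-singular for every $r$. Both hypotheses are immediate here --- $\gamma_{\bstheta}(0)>0$ by the definition of $\Theta\subset\calm$, and $\gamma_{\bstheta}(h)\to 0$ from the exponential form \eqref{eq:cov} with $\rho_{\bstheta}>0$. This one-line citation replaces your case analysis and closes the gap without any constraint on $k_{\bstheta}$.
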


\begin{proof}
The proof adapts the proof of Proposition~2.2 of \citet{Fasen18II} for the COGARCH process. 
Since $\bstheta \in \Theta \subset \calm$, 
by Proposition~\ref{th:haug}(a), the autocovariance function of $(\tilde{G}^2_i(\bm{\theta}))_{i \in \mathbb{N}}$ satisfies $\gamma_{\bm{\theta}}(0) > 0$ and $\gamma_{\bm{\theta}}(h) \rightarrow 0$ as $\nto$. 
By Proposition~5.1.1 of \citet{Brockwell13} it follows that the autocovariance matrix of $(\tilde{G}^2_i(\bm{\theta}))_{i =1}^r$ is non-singular. Hence, the numbers $a_{\bm{\theta},1},\dots,a_{\bm{\theta},r}$ are uniquely given by 
\begin{equation}\label{eq:yw}
\begin{pmatrix}
a_{\bm{\theta},1}  \\
a_{\bm{\theta},2} \\
\vdots \\
a_{\bm{\theta},r}
\end{pmatrix} = 
\begin{pmatrix}
\gamma_{\bstheta}(0) &  \gamma_{\bstheta}(1) & \dots  & \gamma_{\bstheta}(r-1) \\
\gamma_{\bstheta}(1) &  \gamma_{\bstheta}(0) & \dots  & \gamma_{\bstheta}(r-2) \\
\vdots & \vdots & & \vdots \\
\gamma_{\bstheta}(r-1) &  \gamma_{\bstheta}(r-2) & \dots  & \gamma_{\bstheta}(0) \\
\end{pmatrix}^{-1}
\begin{pmatrix}
\gamma_{\bstheta}(1)  \\
\gamma_{\bstheta}(2) \\
\vdots \\
\gamma_{\bstheta}(r)
\end{pmatrix}
\end{equation}
leading to \eqref{eq:2.8}.
\end{proof}

Proposition \ref{pr:cog:aux} gives an AR$(r)$ representation for $r\ge 2$ for the COGARCH squared returns from \eqref{eq:defG2Til} by rewriting \eqref{eq:2.8} as $\tilde{G}^2_{i+r}(\bm{\theta}) =  \sum_{j=1}^r a_{\bm{\theta},j} \tilde{G}^2_{i+r-j}(\bm{\theta}) + U_i(\bm{\theta})$ for $i\in\N$. 
Let
\begin{equation}\label{eq:def:pi:th}
\bspi_{\bstheta} := (\mu_{\bstheta},{\bm a}_\bstheta,\gamma_{\bm{\theta}}(0))^\top=(\mu_{\bstheta},a_{\bm{\theta},1},\dots,a_{\bm{\theta},r},\gamma_{\bm{\theta}}(0))^\top,
\end{equation}
and let $C \subset \R^r$ be a compact subset of the set containing all possible real coefficients of a strictly stationary AR$(r)$ process.
Then we define a compact parameter space of the auxiliary model as
\begin{equation}\label{eq:def:Pi}
\Pi := \Big[-\frac{1}{\epsilon},\frac{1}{\epsilon} \Big] \times C \times \Big[\epsilon,\frac{1}{\epsilon}\Big],
\end{equation}
where $\epsilon$ is a small positive constant.

We will investigate two well-known estimators of $\bspi_{\bstheta}$ in \eqref{eq:def:pi:th}, namely the least squares estimator (LSE) and the Yule-Walker estimator (YWE) defined by 
\begin{equation}\label{eq:def:pi_n}
\hat{\bspi}_{n,\ls}(\bstheta) = 
\begin{pmatrix}
\hat{\mu}_n(\bstheta) \\
\hat{\bm{a}}_{n,\ls}(\bstheta) \\
\hat{\gamma}_n(0;\bstheta)
\end{pmatrix} \quad \text{and} \quad
\hat{\bspi}_{n,\yw}(\bstheta) = 
\begin{pmatrix}
\hat{\mu}_n(\bstheta) \\
\hat{\bm{a}}_{n,\yw}(\bstheta) \\
\hat{\gamma}_n(0;\bstheta)
\end{pmatrix},
\end{equation}
respectively, whose components are given as follows.

%
%
\begin{definition}[LSE and YWE]\label{de:ywlse}
The estimators of the mean and variance are given by
\begin{equation*}
\hat{\mu}_n(\bstheta) = \frac{1}{n} \sum_{i=1}^n G_i^2(\bstheta) \quad \text{and} \quad \hat{\gamma}_n(0;\bstheta) = \frac{1}{n} \sum_{i=1}^n (G_i^2(\bstheta) - \hat{\mu}_n(\bstheta))^2.
\end{equation*}
(a) \, The LSE of $(a_{\bm{\theta},1},\dots,a_{\bm{\theta},r})^\top$ is given by
\begin{equation*}
\hat{\bm{a}}_{n,\ls}(\bstheta) =  \argminA_{\bm{c} \in C} S_n(\bm{c};\bstheta), 
\end{equation*}
for $C$ as in \eqref{eq:def:Pi}, and
\begin{equation*}
\begin{split}
& S_n(\bm{c};\bstheta) := \\
& \frac{1}{n-r} \sum_{i = 1}^{n-r} \Big( (G^2_{i+r}(\bstheta) - \hat{\mu}_n(\bstheta)) - c_1 (G^2_{i+r-1}(\bstheta) - \hat{\mu}_n(\bstheta)) - \dots - c_r(G^2_i(\bstheta) - \hat{\mu}_n(\bstheta))  \Big)^2.
\end{split}
\end{equation*}
%
%
(b) \, The YWE of $(a_{\bm{\theta},1},\dots,a_{\bm{\theta},r})^\top$ is given by
\begin{equation}\label{eq:3.6}
\hat{\bm{a}}_{n,\yw}(\bstheta) = 
\hat{\bm{\Gamma}}_{n}^{-1}(\bstheta)\hat{\bm{\gamma}}_{n}(\bstheta), \quad n \in \N,
\end{equation}
where $\hat{\bm{\Gamma}}_{n}^{-1}(\bstheta) = ( \hat{\gamma}_n(i-j;\bstheta) )_{i,j=1}^r$ and $\hat{\bm{\gamma}}_{n}(\bstheta) =(\hat{\gamma}_n(1;\bstheta),\dots,\hat{\gamma}_n(r;\bstheta))^\top$ are defined in terms of the empirical autocovariance function
\begin{equation*}
\hat{\gamma}_n(h;\bstheta) = \frac{1}{n} \sum_{i=1}^{n-h} (G^2_i(\bstheta) - \hat{\mu}_n(\bstheta))(G^2_{i+h}(\bstheta) - \hat{\mu}_n(\bstheta)), \quad h,n \in \N, n>h.
\end{equation*}
\end{definition}

We now define a function that will connect the COGARCH process to its auxiliary AR model from Proposition \ref{pr:cog:aux}.

\begin{proposition}[Binding function]\label{pr:bind:inj}
Define the binding function $\pi: \Theta \rightarrow \Pi$ by $\pi(\bm{\theta}) =  \bm{\pi_{\theta}}$ as in \eqref{eq:def:pi:th}. Then $\pi$ is injective and continuously differentiable for $r \geq 2$.
\end{proposition}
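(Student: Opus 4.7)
My plan is to handle injectivity and continuous differentiability separately, with Proposition~\ref{th:haug} and the exponential structure of the squared-return autocovariance from part~(a) doing the heavy lifting.

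For injectivity, I would start from Proposition~\ref{th:haug}(b): the quadruple $(\mu_{\bstheta},\gamma_{\bstheta}(0),k_{\bstheta},\rho_{\bstheta})$ determines $\bstheta$ uniquely. Since $\mu_{\bstheta}$ and $\gamma_{\bstheta}(0)$ are literally two of the components of $\bspi_{\bstheta}$, the task reduces to recovering the pair $(k_{\bstheta},\rho_{\bstheta})$ from the middle block $\bm{a}_{\bstheta}$. The idea is to insert the exponential form $\gamma_{\bstheta}(h)=\gamma_{\bstheta}(0)k_{\bstheta}e^{-h\rho_{\bstheta}}$ from Proposition~\ref{th:haug}(a) into the Yule-Walker system \eqref{eq:yw}. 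The common factor $\gamma_{\bstheta}(0)$ cancels, producing $r$ polynomial equations whose only unknowns are $\lambda_{\bstheta}:=e^{-\rho_{\bstheta}}$ and $k_{\bstheta}$, with the components of $\bm{a}_{\bstheta}$ appearing as coefficients. Since $r\geq 2$, the first two of these equations are enough: eliminating $k_{\bstheta}$ by a ratio trick gives a closed-form expression for $\lambda_{\bstheta}$ as a rational function of $\bm{a}_{\bstheta}$, and $k_{\bstheta}$ then follows from any one of the original equations. For $r=2$ the manipulation is immediate and yields $\lambda_{\bstheta}=(a_1^2+a_2-a_2^2)/a_1$ and $k_{\bstheta}=a_1/[\lambda_{\bstheta}(1-a_2)]$; for $r>2$ the remaining $r-2$ equations are automatically consistent because $\bm{a}_{\bstheta}$ arose from a legitimate COGARCH parameter.

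For continuous differentiability, the three blocks of $\pi$ are $\mu_{\bstheta}$, $\bm{a}_{\bstheta}$, and $\gamma_{\bstheta}(0)$. I would appeal to the explicit formulas of \citet{Haug07}, which express the scalars $\mu_{\bstheta},\gamma_{\bstheta}(0),k_{\bstheta}$, and $\rho_{\bstheta}$ as rational functions of $\bstheta$ built out of $\Psi_{\bstheta}(1)$, $\Psi_{\bstheta}(2)$ and moments of $L_1$; under assumptions (A2)--(A6) of Proposition~\ref{th:haug}, the map $\bstheta\mapsto\Psi_{\bstheta}(p)$ is smooth for $p=1,2$ by \eqref{eq:mgf:yt}, so each of these scalars is $C^1$ on $\Theta$. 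Consequently the entries $\gamma_{\bstheta}(h)=\gamma_{\bstheta}(0)k_{\bstheta}e^{-h\rho_{\bstheta}}$ of the autocovariance matrix $\bm{\Gamma}_{\bstheta}$ and the vector $\bm{\gamma}_{\bstheta}$ from \eqref{eq:yw} are $C^1$ in $\bstheta$. Because $\bm{\Gamma}_{\bstheta}$ is non-singular on $\Theta$ (as established in the proof of Proposition~\ref{pr:cog:aux}) and matrix inversion is smooth on the set of invertible matrices, $\bm{a}_{\bstheta}=\bm{\Gamma}_{\bstheta}^{-1}\bm{\gamma}_{\bstheta}$ inherits $C^1$-regularity, and assembling the three blocks gives continuous differentiability of $\pi$.

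The hard part will be the algebra in the injectivity step. For $r>2$ the first two Yule-Walker equations couple in all of $a_{\bstheta,1},\ldots,a_{\bstheta,r}$ through terms $\lambda^{|i-j|}$ ranging up to $\lambda^{r-1}$, so the closed form is messier, and one also has to check that the denominators appearing in the inversion do not vanish on $\Theta$ -- a check that should follow from the parameter restrictions $\lambda_{\bstheta}\in(0,1)$, $k_{\bstheta}>0$ and the positivity constraints defining $\calm$ in \eqref{eq:def:Th}.
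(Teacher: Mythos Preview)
Your approach is correct, and the continuous differentiability argument matches the paper's almost exactly. For injectivity, however, the paper takes a cleaner route that sidesteps the algebraic complications you anticipate for $r>2$.

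Rather than substituting the exponential form directly into the Yule--Walker system and solving for $(k_{\bstheta},\lambda_{\bstheta})$ as functions of $\bm{a}_{\bstheta}$, the paper factors $\pi$ as a composition through the intermediate autocovariance vector $(\mu_{\bstheta},\gamma_{\bstheta}(1),\ldots,\gamma_{\bstheta}(r),\gamma_{\bstheta}(0))$. The passage from this vector to $(\mu_{\bstheta},\bm{a}_{\bstheta},\gamma_{\bstheta}(0))$ is injective because the Yule--Walker system \eqref{eq:yw} can be run backwards to recover $\gamma_{\bstheta}(1),\ldots,\gamma_{\bstheta}(r)$ from $\bm{a}_{\bstheta}$ and $\gamma_{\bstheta}(0)$; this is a linear step that does not depend on the value of $r$. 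The passage from $(\mu_{\bstheta},k_{\bstheta},\rho_{\bstheta},\gamma_{\bstheta}(0))$ to the autocovariance vector is injective via the simple closed formulas
\[
k_{\bstheta}=\frac{\gamma_{\bstheta}^2(1)}{\gamma_{\bstheta}(0)\gamma_{\bstheta}(2)},\qquad
\rho_{\bstheta}=\log\Big(\frac{\gamma_{\bstheta}(1)}{\gamma_{\bstheta}(2)}\Big),
\]
which use only the first two lags regardless of $r$. The benefit is that no case distinction in $r$, no elimination of $k_{\bstheta}$ from coupled polynomial equations, and no denominator checks beyond the already-established positivity of the $\gamma_{\bstheta}(h)$ are needed. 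Your direct elimination would reach the same conclusion, but routing through $(\gamma_{\bstheta}(1),\gamma_{\bstheta}(2))$ makes the argument uniform in $r$ and short; the ``hard part'' you flag simply disappears.
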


\begin{proof} 
As in the proof of Lemma 2.5 in \citet{Fasen18II}, we decompose $\pi: \Theta \rightarrow \Pi$ into three maps $\pi = \pi_1 \circ \pi_2 \circ \pi_3$. Define  $\pi_1: \Theta \rightarrow \mathbb{R}^4$ by  
\begin{equation*}
\pi_1(\bm{\theta}) = (\mu_{\bstheta}, k_{\bm{\theta}},\rho_{\bm{\theta}}, \gamma_{\bm{\theta}}(0))^\top,
\end{equation*}
which is by Proposition~\ref{th:haug}(b) injective.
Next define $\pi_2: \pi_1(\Theta) \rightarrow \mathbb{R}^{r+2}$ by
\begin{equation*}
\pi_2(\mu_{\bstheta}, k_{\bm{\theta}},\rho_{\bm{\theta}}, \gamma_{\bm{\theta}}(0)) = (\mu_{\bstheta},\gamma_{\bm{\theta}}(1),\dots,\gamma_{\bm{\theta}}(r), \gamma_{\bm{\theta}}(0))^\top.
\end{equation*}
By \eqref{eq:cov}, $\gamma_{\bm{\theta}}(h) = \gamma_{\bm{\theta}}(0)k_{\bm{\theta}} e^{-h\rho_{\bm{\theta}}}$ for every $h \in \mathbb{N}$, and simple algebra shows that $k_{\bm{\theta}}$ and $\rho_{\bm{\theta}}$ are uniquely determined by
\begin{equation}\label{eq:pr:bi1}
k_{\bm{\theta}} = \frac{\gamma^2_{\bm{\theta}}(1)}{\gamma_{\bm{\theta}}(0)\gamma_{\bm{\theta}}(2)} \quad \text{ and } \quad \rho_{\bm{\theta}} = \log \Big(\frac{\gamma_{\bm{\theta}}(1)}{\gamma_{\bm{\theta}}(2)}\Big),
\end{equation}
and, therefore, $\pi_2$ is injective. Finally, define the map $\pi_3: \pi_2(\pi_1(\Theta)) \rightarrow \Pi$, by 
\begin{equation*}
\pi_3(\mu_{\bstheta},\gamma_{\bm{\theta}}(1),\dots,\gamma_{\bm{\theta}}(r), \gamma_{\bm{\theta}}(0)) = (\mu_{\bstheta},a_{\bm{\theta},1},\dots,a_{\bm{\theta},r},\gamma_{\bm{\theta}}(0))^\top.
\end{equation*}
The map $\pi_3$ is injective, since $\gamma_{\bm{\theta}}(1),\dots,\gamma_{\bm{\theta}}(r)$ are uniquely determined by $a_{\bm{\theta},1},\dots,a_{\bm{\theta},r}$ and $\gamma_{\bm{\theta}}(0)$. 
We need $r \geq 2$ in order to recover  $(\gamma_{\bm{\theta}}(1),\gamma_{\bm{\theta}}(2))$ from $(\gamma_{\bm{\theta}}(0),a_{\bm{\theta},1},a_{\bm{\theta},2})$ using the system of Yule-Walker equations \eqref{eq:yw}, so that \eqref{eq:pr:bi1} remains valid. This implies the injectivity of the composition $\pi$.

Now we prove that $\pi$ is continuously differentiable. 
The map $\pi_1$ is given in terms of equations (3.6)-(3.9) of Theorem~3.1 in \citet{Haug07}, which are continuously differentiable maps of $\Psi_{\bm{\theta}}(1)$ and $\Psi_{\bm{\theta}}(2)$ as defined in \eqref{eq:mgf:yt}. By assumption (A4) of Proposition~\ref{th:haug} the L\'evy process $L$ has finite fourth moment and, therefore, both $\Psi_{\bm{\theta}}(1)$ and $\Psi_{\bm{\theta}}(2)$ exist and are continuously differentiable in $\bstheta$. 
By \eqref{eq:cov}, $\pi_2$ is  continuously differentiable. 
Finally, $\pi_3$ is also continuously differentiable since it is defined recursively by means of the Yule-Walker equations \eqref{eq:yw}. This proves that the composition $\pi$ is continuously differentiable.
\end{proof}

\subsection{Indirect Inference Estimation}



Let $\hatbspi_n(\bstheta)$ denote an estimator of the auxiliary AR$(r)$ model for $r \geq 2$ based on the returns $(G_i(\bstheta))_{i=1}^n$, where $\bm{\theta}$ lies in a compact subset $\Theta$ of  $\calm$ as in \eqref{eq:def:Th}. 
We define now the IIE for the COGARCH process.

\begin{definition}[IIE]\label{de:ii}
Let $\bm{G}_n:=(G_i(\bm{\theta}_0))_{i=1}^n$ be  the  returns as defined in \eqref{eq:def:Gi}.
Let $\hatbspi_n$ be one of the estimators given in \eqref{eq:def:pi_n} of $\bm{\pi}_{\bm{\theta}_0}$ as defined in \eqref{eq:def:pi:th}.
{For arbitrary $\bstheta \in \Theta$ and $k = 1,\dots,K$} let $\hatbspi_{n,k}(\bm{\theta})$ be estimators of $\bm{\pi_{\theta}}$ based on independent simulated paths $\bm{G}_{n,k}(\bm{\theta}):=(G_i^{(k)}(\bm{\theta}))_{i=1}^n$.
Let $\bm{\Omega}$ be a symmetric and positive definite weight matrix. 
Define the function
\begin{equation*}
\hat{L}_{\ii} : \Theta \rightarrow [0,\infty) \quad \text{based on $\bm{G}_n$ by}\quad
\hat{L}_{\ii}(\bm{\theta},\bm{G}_n):= \Big\|\hatbspi_n - \frac{1}{K} \sum_{k=1}^K \hatbspi_{n,k}(\bstheta) \Big \|_{\bm{\Omega}}.
\end{equation*}
Then the IIE of $\bm{\theta}_0$ is defined as 
\begin{equation}\label{eq:DefIIE}
\hat{\bm{\theta}}_{n,\ii} := \argminA_{\bm{\theta} \in \Theta} \hat{L}_{\ii}(\bm{\theta},\bm{G}_n).
\end{equation}
\end{definition}
Concerning the asymptotic behavior of the IIE one would hope that strong consistency and asymptotic normality of the auxiliary model estimators also implies strong consistency and asymptotic normality of the IIE. 
However, as the Indirect Inference method is based on the simulation of the COGARCH process for many different parameters,
we need a stronger (uniform) consistency result and also additional regularity conditions to ensure  this. 
The following is a modification of Propositions~1 and~3 of \citet{Gourieroux93}, and it is the analog of Theorem~3.2 of \citet{Fasen18II} in the context of our model.

\begin{proposition}
\label{th:ii:asymp} 
Assume the setting of Definition \ref{de:ii} and $r \geq 2$.
\begin{itemize}
\item[(a)] 
If the uniform SLLN 
\begin{equation}\label{eq:2.15}
\sup_{\bm{\theta} \in \Theta}  \| \hatbspi_n(\bm{\theta}) - \bm{\pi}_{\bstheta} \| \overset{\text{a.s.}}{\rightarrow} 0, \quad \nto,
\end{equation}
holds, then the IIE \eqref{eq:DefIIE} is strongly consistent:
\begin{equation*}
\hat{\bm{\theta}}_{n,\ii} \overset{\text{a.s.}}{\rightarrow} \bm{\theta}_0, \quad \nto.
\end{equation*}
\item[(b)] 
Assume additionally to \eqref{eq:2.15} that the following hold:
\begin{itemize}
\item[(b.1)] 
for every $n \in \N$ the map $\bstheta \mapsto \hat{\bspi}_n(\bstheta)$ is continuously differentiable,
\item[(b.2)] for every $\bstheta \in \Theta$ we have $\sqrt{n}(\hat{\bspi}_n(\bstheta) - \bspi_{\bstheta}) \std \mathcal{N}(0,\bm{\Sigma}_{\bstheta})$ as $\nto$, and
\item[(b.3)] 
for every sequence $(\bstheta_n)_{n \in \N}$ with $\bstheta_n \stas \bstheta_0$ it also holds that
\begin{equation*}\label{eq:cond:b3}
\nabla_{\bstheta} \hat{\bspi}_n(\bstheta_n) \stp  \nabla_{\bstheta} \bspi_{\bstheta_0},\quad \nto,
\end{equation*}
and $\gradtheta \bspi(\bstheta_0)$ has full column rank 3.
\item[(b.4)] The true parameter $\bstheta_0$ lies in the interior of $\Theta$.
\end{itemize}
Then the IIE \eqref{eq:2.15} is asymptotically normal:
\begin{equation*}
\sqrt{n}(\hat{\bm{\theta}}_{n,\ii} - \bstheta_0) \std \mathcal{N}(0,\Xi_{\bstheta_0}), \quad \nto,
\end{equation*}
where the asymptotic variance is given by
\begin{equation}\label{eq:2.19}
\Xi_{\bstheta_0} = (\mathcal{J}_{\bstheta_0})^{-1}\mathcal{I}_{\bstheta_0}(\mathcal{J}_{\bstheta_0})^{-1}
\end{equation}
with
\beam
\mathcal{J}_{\bstheta_0} &=& (\nabla_{\bstheta} \pi_{\bstheta_0})^\top \bm{\Omega} (\nabla_{\bstheta} \pi_{\bstheta_0}) \mbox{ and}\nonumber\\
\mathcal{I}_{\bstheta_0} &=& (\nabla_{\bstheta} \pi_{\bstheta_0})^\top\bm{\Omega}\Big(1 + \frac{1}{K}\Big)\bm{\Sigma}_{\bstheta}\bm{\Omega}(\nabla_{\bstheta} \pi_{\bstheta_0}).\label{eq:2.22}
\eeam
\end{itemize}
\end{proposition}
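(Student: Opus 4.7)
The strategy is the standard M-estimator argument: show uniform convergence of the criterion to a deterministic limit whose unique minimizer is $\bstheta_0$. First I would observe that on the data side, assumption~\eqref{eq:2.15} applied to $\hat{\bm{G}}_n$ gives $\hatbspi_n\to\bspi_{\bstheta_0}$ a.s. On the simulation side, \eqref{eq:2.15} applied to each independent copy $\bm{G}_{n,k}(\bstheta)$ yields $\sup_{\bstheta\in\Theta}\|\hatbspi_{n,k}(\bstheta)-\bspi_\bstheta\|\to 0$ a.s. Averaging over the $K$ independent simulations preserves this uniform convergence, and by the triangle inequality and continuity of the $\bm{\Omega}$-norm,
\begin{equation*}
\sup_{\bstheta\in\Theta}\Big|\hat{L}_{\ii}(\bstheta,\bm{G}_n)-\|\bspi_{\bstheta_0}-\bspi_\bstheta\|_{\bm{\Omega}}\Big|\stas 0,\quad\nto.
\end{equation*}
The limiting function $L(\bstheta):=\|\bspi_{\bstheta_0}-\bspi_\bstheta\|_{\bm{\Omega}}$ is continuous on the compact set $\Theta$ (by continuity of $\pi$ from Proposition~\ref{pr:bind:inj}), nonnegative, and vanishes uniquely at $\bstheta_0$ because $\pi$ is injective and $\bm{\Omega}$ is positive definite. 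A standard argmin-continuity argument (uniform convergence of continuous functions on a compact set with a well-separated minimizer) then gives $\hat{\bm{\theta}}_{n,\ii}\stas\bstheta_0$.

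\textbf{Plan for part (b).} Under (b.1) and (b.4), for $n$ large enough $\hat{\bm{\theta}}_{n,\ii}$ lies in the interior of $\Theta$ by part (a), so it satisfies the first-order condition
\begin{equation*}
\Big(\gradtheta \tfrac{1}{K}\sum_{k=1}^K\hatbspi_{n,k}(\hat{\bm{\theta}}_{n,\ii})\Big)^{\!\top}\bm{\Omega}\Big(\hatbspi_n-\tfrac{1}{K}\sum_{k=1}^K\hatbspi_{n,k}(\hat{\bm{\theta}}_{n,\ii})\Big)=0.
\end{equation*}
I would then apply the mean value theorem componentwise to $\bstheta\mapsto\hatbspi_{n,k}(\bstheta)$ between $\hat{\bm{\theta}}_{n,\ii}$ and $\bstheta_0$, producing intermediate points $\bar{\bstheta}_{n,k}$ (each satisfying $\bar{\bstheta}_{n,k}\stas\bstheta_0$ by part (a)), so that
\begin{equation*}
\tfrac{1}{K}\sum_{k=1}^K\hatbspi_{n,k}(\hat{\bm{\theta}}_{n,\ii})=\tfrac{1}{K}\sum_{k=1}^K\hatbspi_{n,k}(\bstheta_0)+\Big[\tfrac{1}{K}\sum_{k=1}^K\gradtheta\hatbspi_{n,k}(\bar{\bstheta}_{n,k})\Big](\hat{\bm{\theta}}_{n,\ii}-\bstheta_0).
\end{equation*}
Substituting and solving for $\sqrt{n}(\hat{\bm{\theta}}_{n,\ii}-\bstheta_0)$ gives
\begin{equation*}
\sqrt{n}(\hat{\bm{\theta}}_{n,\ii}-\bstheta_0)=\widehat{\mathcal{J}}_n^{-1}\big(\widehat{\bm{A}}_n\big)^{\!\top}\bm{\Omega}\,\sqrt{n}\Big(\hatbspi_n-\tfrac{1}{K}\sum_{k=1}^K\hatbspi_{n,k}(\bstheta_0)\Big),
\end{equation*}
where $\widehat{\bm{A}}_n=\tfrac{1}{K}\sum_k\gradtheta\hatbspi_{n,k}(\bar{\bstheta}_{n,k})$ and $\widehat{\mathcal{J}}_n$ is the corresponding sandwich.

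\textbf{Finishing and main obstacle.} Condition (b.3) applied along each convergent sequence $\bar{\bstheta}_{n,k}\stas\bstheta_0$ gives $\widehat{\bm{A}}_n\stp\gradtheta\bspi_{\bstheta_0}$ and $\widehat{\mathcal{J}}_n\stp\mathcal{J}_{\bstheta_0}$. The full column rank hypothesis ensures $\mathcal{J}_{\bstheta_0}$ is invertible, so by the continuous mapping theorem the deterministic prefactor converges to $\mathcal{J}_{\bstheta_0}^{-1}(\gradtheta\bspi_{\bstheta_0})^{\!\top}\bm{\Omega}$. For the stochastic factor, (b.2) applied once to the data and once to each simulated path yields jointly, by independence of the $K+1$ samples,
\begin{equation*}
\sqrt{n}\Big(\hatbspi_n-\tfrac{1}{K}\sum_{k=1}^K\hatbspi_{n,k}(\bstheta_0)\Big)\std\mathcal{N}\Big(0,\Big(1+\tfrac{1}{K}\Big)\bm{\Sigma}_{\bstheta_0}\Big),\quad\nto.
\end{equation*}
Slutsky's theorem then delivers $\sqrt{n}(\hat{\bm{\theta}}_{n,\ii}-\bstheta_0)\std\mathcal{N}(0,\Xi_{\bstheta_0})$ with the sandwich form~\eqref{eq:2.19}--\eqref{eq:2.22}. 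The main technical obstacle is justifying the convergence $\widehat{\bm{A}}_n\stp\gradtheta\bspi_{\bstheta_0}$ at the random intermediate points $\bar{\bstheta}_{n,k}$: condition (b.3) is stated only for sequences converging almost surely to $\bstheta_0$, so one needs a subsequence argument (every subsequence of $\bar{\bstheta}_{n,k}$ has a further subsequence converging a.s.\ by part~(a), along which (b.3) applies) to upgrade from a.s.\ convergence of the location to convergence in probability of the gradient. The joint asymptotic normality with the $(1+1/K)$ factor is the other point that requires care, but follows cleanly from independence of the $K$ simulated paths and the data sample.
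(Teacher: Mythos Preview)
Your proposal is correct and follows essentially the same route as the paper: uniform convergence of the criterion together with injectivity of $\pi$ for part~(a), and first-order condition plus a first-order Taylor expansion and Slutsky's theorem for part~(b); the paper's proof is terser and defers most details to \citet{Fasen18II} and \citet{Newey94Large}, but the underlying argument is identical. Your ``main obstacle'' is in fact a non-issue here: since part~(a) already gives $\hat{\bm{\theta}}_{n,\ii}\stas\bstheta_0$ \emph{almost surely}, the mean-value points $\bar{\bstheta}_{n,k}$ (which lie between $\hat{\bm{\theta}}_{n,\ii}$ and $\bstheta_0$) inherit a.s.\ convergence directly, so condition~(b.3) applies as stated without any subsequence detour.
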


\begin{proof}
Part (a) follows as a particular case of the proof of Theorem~3.2 of \cite{Fasen18II}. For part (b), we need to check assumptions (C.3)-(C.5) of Theorem~3.2 in \cite{Fasen18II}. By construction of the estimator \eqref{eq:def:pi_n} the asymptotic covariance matrices in (C.3) and (C.4) are identical, so that (b.2) implies (C.3) and (C.4). Instead of verifying (C.5) we modify the argument in \cite{Fasen18II} (under (b.1) and (b.4)) when manipulating the first order condition
\begin{equation}\label{eq:foc}
0 = \gradtheta \hat{L}_{\ii}(\hat{\bm{\theta}}_{n,\ii},\bm{G}_n) = 2 (\gradtheta \hatbspi_{n}(\hat{\bm{\theta}}_{n,\ii}))^T \Omega (\hatbspi_{n}(\hat{\bm{\theta}}_{n,\ii}) - \hatbspi_n).
\end{equation}
We perform, as in Theorem~3.2 in \citet{Newey94Large}, a Taylor expansion of order 1 around the true value $\bstheta_0$ of the function $\hatbspi_{n}(\hat{\bm{\theta}}_{n,\ii})$ in \eqref{eq:foc}. After rearranging the terms this leads to
\begin{equation*}
\sqrt{n}(\hat{\bm{\theta}}_{n,\ii} - \bstheta_0) = 
-\Big( (\gradtheta \hatbspi_{n}(\hat{\bm{\theta}}_{n,\ii}))^T \Omega (\gradtheta \hatbspi_{n}(\bstheta_n)) \Big)^{-1} (\gradtheta \hatbspi_{n}(\hat{\bm{\bstheta}}_{n,\ii})) \Omega \sqrt{n} (\hatbspi_{n}(\bstheta_0) - \hatbspi_n),
\end{equation*}
where $\bstheta_n$ is such that $\|\bstheta_n - \bstheta_0\| \leq \| \hat{\bm{\bstheta}}_{n,\ii} - \bstheta_0\|$. The asymptotic normality is now a consequence of taking the limit for $\nto$ and using (a), (b.2) and (b.3).
\end{proof}

%
%

\section{Auxiliary AR model - strong consistency and asymptotic normality}

Our objective is to investigate the asymptotic behavior of the  IIE for the COGARCH parameter $\bstheta$ using an AR$(r)$ model for fixed $r \geq 2$ as auxiliary model. 
This amounts to verifying all assumptions of Proposition~\ref{th:ii:asymp}. 
As a first step we investigate consistency and joint asymptotic normality of the parameter estimator of the auxiliary AR$(r)$ model, which results from the projection of Proposition~\ref{pr:cog:aux}, and may have a non-zero mean.
It is worth noticing that the noise $(U_i(\bstheta))_{i \in \N}$ from \eqref{eq:2.8} is defined as the projection error over the finite past.
Thus, $U_i(\bstheta)$ is orthogonal to $\calh_i = {\rm \ov{sp}} \{\tilde{G}^2_{i+r-j}(\bm{\theta}),j=1,\dots,r\}$, but we cannot guarantee that it is also orthogonal to ${\rm \ov{sp}} \{\tilde{G}^2_{i+r-j}(\bm{\theta}),j \in \N \}$, so it may not be a white noise process.  Therefore, the classical asymptotic theory for the estimation of autoregressive processes (when data come from an AR model with white noise residuals) does not apply directly.
Since the residuals are stationary and ergodic with zero mean and finite variance, and since  $U_i(\bstheta)$ is orthogonal to $\calh_i = {\rm \ov{sp}} \{\tilde{G}^2_{i+r-j}(\bm{\theta}),j=1,\dots,r\}$, we can obtain results by modifying the classical arguments.

We shall do this for the two estimators from Definition~\ref{de:ywlse} and recall that in classical time series theory they are asymptotically equivalent (cf. the proof of Theorem 8.1.2 in \citet{Brockwell13}). 
To the best of our knowledge, this has not yet been covered in the literature.



\begin{remark}
This section provides asymptotic results for the estimators of the auxiliary AR model for some arbitrary, but fixed COGARCH parameter $\bstheta$, where the dependence on $\bstheta$ is irrelevant, and we omit it for ease of notation.
We define $(W_i)_{i \in \N} := (G_i^2)_{i \in \N}$ and rewrite the auxiliary AR$(r)$ model of Proposition~\ref{pr:cog:aux} with parameter $\bspi = (\mu,\bm{a},\gamma(0))$ as
\beao
\tilde{W}_{i+r}  =  \sum_{j = 1}^r a_j \tilde{W}_{i+r-j} + U_i,\quad i\in\N,
\eeao
where $\tilde{W}_i =\tilde G_i^2 = W_i - \mu, \mu = \E W_1$ and $\gamma(0) = \var W_1$. 
\end{remark}


\subsection{Strong consistency of LSE and YWE}

\begin{lemma}\label{le:mu:cov}
Let the assumptions of Proposition~\ref{th:haug} hold. Then as $\nto$, $\hat{\mu}_n \stas \mu$ and $\hat{\gamma}_n(h) \stas \gamma(h)$ for all $h \in \N_0$.
\end{lemma}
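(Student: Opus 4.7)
The plan is to deduce both claims from Birkhoff's ergodic theorem applied to the strictly stationary process $(W_i)_{i\in\N} = (G_i^2)_{i\in\N}$, together with the continuous mapping theorem for the cross terms in $\hat\gamma_n(h)$.

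First I would establish the setup: under the assumptions of Proposition~\ref{th:haug} the volatility process $(\sigma_t^2)_{t\ge 0}$ can be chosen strictly stationary via \eqref{eq:def:sig0}, so the returns $(G_i)_{i\in\N}$ have stationary increments and are themselves strictly stationary. Consequently $(W_i)_{i\in\N}$ is strictly stationary. By Proposition~\ref{th:haug}(c), $(G_i)_{i\in\N}$ is $\alpha$-mixing with exponentially decaying mixing coefficients, hence ergodic; since $W_i = G_i^2$ is a measurable function of $G_i$, the process $(W_i)_{i\in\N}$ is ergodic as well. Next I would record the moment bounds: (A4) and $\Psi_\bstheta(2)<0$ give $\E L_1^4<\infty$ and $\E\sigma_0^4<\infty$ via \eqref{eq:2.5}, so by Proposition~5.1 of \citet{Kluppelberg04} we have $\E W_i^2 = \E G_i^4<\infty$, and in particular $W_i$ and $W_iW_{i+h}$ are integrable for every $h\in\N_0$.

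The convergence $\hat\mu_n\stas\mu$ then follows directly from Birkhoff's ergodic theorem applied to $(W_i)$. For the autocovariance I would expand
\begin{equation*}
\hat\gamma_n(h) = \frac{1}{n}\sum_{i=1}^{n-h} W_i W_{i+h} \;-\; \hat\mu_n\cdot\frac{1}{n}\sum_{i=1}^{n-h}(W_i+W_{i+h}) \;+\; \hat\mu_n^2\cdot\frac{n-h}{n}.
\end{equation*}
Applying the ergodic theorem to each of the stationary ergodic sequences $(W_iW_{i+h})_{i\in\N}$ and $(W_i)_{i\in\N}$ (noting that boundary corrections like $\frac{1}{n}\sum_{i=1}^{n-h}$ differ from $\frac{1}{n}\sum_{i=1}^n$ by $O(h/n)\to 0$ in the a.s.\ sense because the summands are a.s.\ finite), and then invoking the continuous mapping theorem to handle $\hat\mu_n^2\stas\mu^2$, yields
\begin{equation*}
\hat\gamma_n(h)\stas \E[W_1W_{1+h}] - 2\mu^2 + \mu^2 = \gamma(h),\quad \nto.
\end{equation*}

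There is no real obstacle here; the only point worth a moment's care is the ergodicity of $(W_i)$, which I obtain for free from the $\alpha$-mixing of $(G_i)$ in Proposition~\ref{th:haug}(c), together with the integrability of $W_iW_{i+h}$, which follows from (A4) and $\Psi_\bstheta(2)<0$ via the moment identity \eqref{eq:2.5} and Proposition~5.1 of \citet{Kluppelberg04}.
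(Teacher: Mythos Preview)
Your proof is correct and follows essentially the same strategy as the paper: apply Birkhoff's ergodic theorem to the stationary ergodic sequence $(W_i)$ and its products $(W_iW_{i+h})$, then handle the replacement of $\mu$ by $\hat\mu_n$ via an algebraic expansion. The paper organizes the second step slightly differently---it first introduces the intermediate $\gamma^{\ast}_n(h)$ computed with the true mean $\mu$, proves $\gamma^{\ast}_n(h)\stas\gamma(h)$, and then shows $\hat\gamma_n(h)-\gamma^{\ast}_n(h)\stas 0$---but this is equivalent to your direct expansion.
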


\begin{proof}
From Proposition~\ref{th:haug} we know that $\E |W_1| < \infty$ and $(W_i)_{i \in \N}$ is ergodic, so that Birkhoff's ergodic theorem (see e.g. Theorem 4.4 in \citet{Krengel85}) gives immediately $\hat{\mu}_n \stas \mu$ as $\nto$. 
To prove almost sure convergence of the empirical autocovariance function, we first investigate it, when the mean $\mu$ is known:
\begin{equation}\label{eq:gmStar}
\gamma^{\ast}_n(h) := \frac{1}{n} \sum_{i=1}^{n-h} (W_i - \mu)(W_{i+h} - \mu), \quad h \in \N_0.
\end{equation}
{Since $W_iW_{i+h}$ is for every $i\in\N$ a measurable map of finitely many values of $(W_i)_{i \in \N}$,
the sequence $(W_iW_{i+h})_{i \in \N}$ is ergodic.}
From Proposition~\ref{th:haug}(a), $\E|W_1W_{1+h}| < \infty$, so that Birkhoff's ergodic theorem gives $\gamma^{\ast}_n(h) \stas \gamma(h)$ as $\nto$. 
Simple algebra shows that
\begin{equation}\label{eq:difCorr}
\hat{\gamma}(h) - \gamma^{\ast}(h) = \frac{1}{n}\sum_{i=1}^{n-h} (W_i + W_{i+h} - \hat{\mu}_n -  \mu )(\mu -\hat{\mu}_n).
\end{equation}
Since $\hat{\mu}_n \stas \mu$, the difference $\gamma^{\ast}_n(h) - \hat{\gamma}_n(h)\stas 0$ as $\nto$;
hence, $\hat{\gamma}_n(h) \stas \gamma(h)$ as $\nto$.
\end{proof}

\begin{theorem}[Consistency of LSE and YWE]\label{th:lse:con}
Let the assumptions of Proposition~\ref{th:haug} hold. Then as $\nto$, $\hat\bsa_{n,\ls} \stas \bm{a}$ and $\hat\bsa_{n,\yw} \stas \bm{a}$.
\end{theorem}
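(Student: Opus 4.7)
The plan is to treat the YWE and the LSE in turn: the YWE follows almost directly from Lemma~\ref{le:mu:cov} by continuous mapping, whereas the LSE is reduced to a perturbed Yule--Walker system whose sample autocovariances have the same a.s.\ limits.

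For the YWE, Lemma~\ref{le:mu:cov} yields $\hat{\gamma}_n(h)\stas\gamma(h)$ for each $h\in\{0,1,\ldots,r\}$. Hence entrywise $\hat{\bm{\Gamma}}_n\stas\bm{\Gamma}$ and $\hat{\bm{\gamma}}_n\stas\bm{\gamma}$. The limit $\bm{\Gamma}=(\gamma(i-j))_{i,j=1}^r$ is non-singular (noted in the proof of Proposition~\ref{pr:cog:aux}, using Proposition~5.1.1 of \citet{Brockwell13}), so almost surely $\hat{\bm{\Gamma}}_n$ is invertible for all sufficiently large $n$, and joint continuity of matrix inversion and multiplication give $\hat{\bm{a}}_{n,\yw}=\hat{\bm{\Gamma}}_n^{-1}\hat{\bm{\gamma}}_n\stas\bm{\Gamma}^{-1}\bm{\gamma}=\bm{a}$.

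For the LSE, I would first work on $\R^r$ without the constraint $\bm{c}\in C$. Writing $X_i:=W_i-\hat{\mu}_n$, the criterion becomes $S_n(\bm{c})=\frac{1}{n-r}\sum_{i=1}^{n-r}\bigl(X_{i+r}-\sum_{j=1}^r c_j X_{i+r-j}\bigr)^2$, which is convex quadratic in $\bm{c}$. Its first-order condition is
\begin{equation*}
\sum_{j=1}^r \hat{\Gamma}_n^{\circ}(l,j)\,c_{n,j}^{\ast} = \hat{g}_n^{\circ}(l),\quad l=1,\ldots,r,
\end{equation*}
where
\begin{equation*}
\hat{\Gamma}_n^{\circ}(l,j):=\frac{1}{n-r}\sum_{i=1}^{n-r}X_{i+r-l}X_{i+r-j},\qquad \hat{g}_n^{\circ}(l):=\frac{1}{n-r}\sum_{i=1}^{n-r}X_{i+r-l}X_{i+r}.
\end{equation*}
These truncated sample autocovariances differ from the $\hat{\gamma}_n(\cdot)$ of Definition~\ref{de:ywlse} only in a bounded number of boundary summands and in the normalisation $n-r$ vs.\ $n$, so re-running the ergodic argument of Lemma~\ref{le:mu:cov} (Birkhoff applied to $(W_i W_{i+h})_{i\in\N}$ combined with $\hat{\mu}_n\stas\mu$) gives $\hat{\Gamma}_n^{\circ}(l,j)\stas\gamma(l-j)$ and $\hat{g}_n^{\circ}(l)\stas\gamma(l)$. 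Continuous mapping then yields $\bm{c}_n^{\ast}\stas\bm{\Gamma}^{-1}\bm{\gamma}=\bm{a}$ for the unconstrained minimiser.

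To identify $\bm{c}_n^{\ast}$ with $\hat{\bm{a}}_{n,\ls}$, I would invoke that the compact set $C$ in the definition \eqref{eq:def:Pi} of $\Pi$ can be chosen to contain a neighbourhood of $\bm{a}_{\bstheta}$ for every $\bstheta\in\Theta$ (this is a standing choice of the auxiliary parameter space, not a hypothesis on the data). Since $\bm{c}_n^{\ast}\stas\bm{a}\in\mathrm{int}(C)$, almost surely $\bm{c}_n^{\ast}\in C$ for all $n$ large, and by strict convexity of $S_n$ (its Hessian is $2\hat{\bm\Gamma}_n^{\circ}$, which converges a.s.\ to the positive-definite $\bm{\Gamma}$) the constrained and unconstrained minimisers agree. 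Hence $\hat{\bm{a}}_{n,\ls}\stas\bm{a}$.

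The main technical point is the bookkeeping in the LSE step: carefully matching the truncated sums $\hat{\Gamma}_n^{\circ}$ and $\hat{g}_n^{\circ}$ to the empirical autocovariances of Definition~\ref{de:ywlse} so that Lemma~\ref{le:mu:cov} can be reused, and justifying that the constraint $\bm{c}\in C$ is inactive asymptotically. A conceptual caveat, worth noting briefly, is that because $(U_i)_{i\in\N}$ need not be white noise one cannot invoke the classical AR-consistency theorem as a black box; however, consistency (in contrast to the later asymptotic normality) hinges only on the Yule--Walker orthogonality $U_i\perp\calh_i$ and on the ergodicity of $(W_iW_{i+h})_{i\in\N}$, both of which are in hand.
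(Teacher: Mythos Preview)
Your proof is correct, and for the YWE it coincides with the paper's argument verbatim. For the LSE you take a genuinely different but equally valid route: you work directly with the data centered at $\hat{\mu}_n$, identify the normal equations of $S_n$ as a perturbed Yule--Walker system $\hat{\bm\Gamma}_n^{\circ}\bm{c}_n^{\ast}=\hat{\bm g}_n^{\circ}$, and then pass to the limit by continuous mapping after checking $\hat{\bm\Gamma}_n^{\circ}\stas\bm{\Gamma}$ and $\hat{\bm g}_n^{\circ}\stas\bm{\gamma}$. The paper instead first proves consistency of the \emph{known-mean} LSE $\bsa_{n,\ls}^{\ast}=(\tildebsw^\top\tildebsw)^{-1}\tildebsw^\top\tildebsy$, writing $\bsa_{n,\ls}^{\ast}-\bm{a}=(n^{-1}\tildebsw^\top\tildebsw)^{-1}\frac1n\sum_i\bm{Z}_i$ and using the orthogonality $\E\bm{Z}_1=\bm{0}$ plus Birkhoff; only then does it show $\hat{\bsa}_{n,\ls}-\bsa_{n,\ls}^{\ast}\stas\bm{0}$ via a matrix identity. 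Your approach is more economical for consistency alone; the paper's detour through $\bsa_{n,\ls}^{\ast}$ pays off later, since that quantity reappears in the asymptotic-normality proof (Lemma~\ref{le:ywlse:st} and Theorem~\ref{th:ar:norm}). Finally, you are right to flag the constraint $\bm{c}\in C$ and dispose of it via $\bm{a}\in\mathrm{int}(C)$ and strict convexity of $S_n$; the paper simply writes the LSE in closed regression form and leaves this point implicit.
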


\begin{proof}
We start by proving strong consistency of the LSE, when the mean $\mu$ is known:
\begin{equation}\label{eq:lseStar}
\bsa_{n,\ls}^{\ast} =  \argminA_{\bm{c} \in C}S^{\ast}_n(\bm{c}),
\end{equation}
for $C$ as in \eqref{eq:def:Pi}, and
\begin{equation*}
S^{\ast}_n(\bm{c}) = \frac{1}{n-r} \sum_{i = 1}^{n-r} \big( (W_{i+r} - \mu) - c_r (W_{i+r-1} - \mu) - \dots - c_1(W_i - \mu)\big)^2.
\end{equation*}
As in Section 8.10* of \cite{Brockwell13} we write the auxiliary AR$(r)$ model in matrix form as
\begin{equation*}
 \tildebsy = \tildebsw \bm{a} + \bm{U}_n, \quad n \in \N,
\end{equation*}
where $\tildebsy = (\tilde{W}_{r+1},\dots,\tilde{W}_n)^\top, \bm{U}_n = (U_1,\dots,U_{n-r})^\top$ and $\tildebsw$ is the $n \times r$ design matrix,
\begin{equation}\label{eq:matr:ar}
\tildebsw = \begin{pmatrix}
\tilde{W}_r & \tilde{W}_{r-1} & \dots &  \tilde{W}_{1} \\
\tilde{W}_{r+1} & \tilde{W}_{r} & \dots &  \tilde{W}_{2} \\
\vdots & \vdots &  &  \vdots \\
\tilde{W}_{n-1}& \tilde{W}_{n-2} & \dots &  \tilde{W}_{n-r}
\end{pmatrix}.
\end{equation}
Then notice that
\begin{equation*}
(n-r) S^{\ast}_n(\bm{c}) = (\tildebsy - \tildebsw \bm{c})^\top (\tildebsy - \tildebsw \bm{c}),
\end{equation*}
revealing the LSE as a linear regression-type estimator given by
\begin{equation}\label{eq:lse:reg}
\bsa_{n,\ls}^{\ast} = (\tildebsw^\top\tildebsw)^{-1}\tildebsw^\top\tildebsy,
\end{equation}
provided that the $r \times r$ matrix $\tildebsw^\top\tildebsw$ is invertible. We prove that $n^{-1}\tildebsw^\top\tildebsw$ converges a.s. to an invertible matrix. For each fixed $u,v \in \{1,\dots,r\}$ the $(u,v)$-th entry of this matrix is 
\begin{equation*}
\frac{1}{n}\sum_{i = 0}^{n-r-1} \tilde{W}_{r+1-u+i}\tilde{W}_{r+1-v+i}. 
\end{equation*}
Since $\tilde{W}_{r+1-u+i}\tilde{W}_{r+1-v+i}$ is for every $i \in \N_0$ a measurable map of finitely many values of  $(W_i)_{i \in \N}$, {the sequence $(\tilde{W}_{r+1-u+i}\tilde{W}_{r+1-v+i})_{i\in\N_0}$ is ergodic.} Since $\E W_{1}^2 < \infty$ Birkhoff's ergodic theorem gives
\begin{equation}\label{eq:wTw_lim}
\frac{1}{n}\sum_{i = 0}^{n-r-1} \tilde{W}_{r+1-u+i}\tilde{W}_{r+1-v+i} \stas \E \tilde{W}_1 \tilde{W}_{1 + |u-v|}, \quad \nto,
\end{equation}
and thus $n^{-1}\tildebsw^\top\tildebsw \stas \bm{\Gamma}$ as $\nto$, where $\bm{\Gamma}$ is the autocovariance matrix of the squared COGARCH returns, which is non-singular (cf. the proof of Proposition~\ref{pr:cog:aux}). 
Thus, $\bm{\Gamma}$ is invertible and therefore the estimator given in \eqref{eq:lse:reg} is well defined for $n$ large enough. 
With \eqref{eq:lse:reg} we calculate
\beam
\bsa_{n,\ls}^{\ast} - \bm{a} & = & (\tildebsw^\top\tildebsw)^{-1}\tildebsw^\top(\tildebsw \bm{a} + \bm{U}_n) - \bm{a} \nonumber\\
& = & n(\tildebsw^\top\tildebsw)^{-1} \frac{1}{n} \tildebsw^\top\bm{U}_n\nonumber \\
& = & n(\tildebsw^\top\tildebsw)^{-1} \frac{1}{n} \sum_{i=1}^n \Big( \tilde{W}_{i+r} - \sum_{j=1}^r a_j \tilde{W}_{i+r-j} \Big)
\begin{pmatrix}
\tilde{W}_{i+r-1}  \\
\vdots \\
\tilde{W}_{i}
\end{pmatrix} \nonumber\\ 
& =: & (n^{-1}\tildebsw^\top\tildebsw)^{-1} \frac{1}{n} \sum_{i=1}^n \bm{Z}_i. \label{eq:3.18}
\eeam
Since $\bm{Z}_i$ is for every $i \in \N$ a measurable map of finitely many values of  $(W_i)_{i \in \N}$, {the sequence $(\bm{Z}_i)_{i\in\N}$ is ergodic.} According to Proposition~\ref{pr:cog:aux}, $\tilde{W}_{i+r} - \sum_{j=1}^r a_j \tilde{W}_{i+r-j}$ is uncorrelated with $\tilde{W}_i,\dots,\tilde{W}_{i+r-1}$ for all $i \in \N$.
Since $\E |\bm{Z}_1| < \infty$ Birkhoff's ergodic theorem gives 
\begin{equation*}
\frac{1}{n} \sum_{i=1}^n \bm{Z}_i \stas \E \bm{Z}_1 = \bm{0}.
\end{equation*}
This together with the fact that the first term of \eqref{eq:3.18} converges a.s. to $\bm{\Gamma}^{-1}$ shows that
$$
\bsa_{n,\ls}^{\ast} \stas \bm{a}, \quad \nto.
$$
It remains to prove that $(\hat{\bm{a}}_{n,\ls} - \bsa_{n,\ls}^{\ast}) \stas \bm{0}$ as $\nto$. 
Write the LSE in the matrix form
\beao\label{eq:3.8}
\hat{\bm{a}}_{n,\ls} = (\barbsw^\top\barbsw)^{-1}\barbsw^\top\barbsy,
\eeao
where $\barbsw$ and $\barbsy$ denote the matrix and vector defined in Eq.~\eqref{eq:matr:ar}, with entries of the form $\bar{W}_i = W_i - \hat{\mu}_n$. 
Using the matrix identity $A^{-1}x - C^{-1}y = A^{-1}(x-y) + A^{-1}(C-A)C^{-1}y$ gives
\begin{equation}\label{eq:lsSqSta}
\begin{split}
 (\hat{\bsa}_{n,\ls} - \bsa_{n,\ls}^{\ast}) & = \Big(\frac{\barbsw^\top\barbsw}{n}\Big)^{-1}\Big(\frac{\barbsw^\top\barbsy}{n} - \frac{\tildebsw^\top\tildebsy}{n}\Big) \\
& + \Big(\frac{\barbsw^\top\barbsw}{n}\Big)^{-1}\Big( \frac{\tildebsw^\top\tildebsw}{n} - \frac{\barbsw^\top\barbsw}{n}\Big)\Big(\frac{\tildebsw^\top\tildebsw}{n}\Big)^{-1}\Big(\frac{\tildebsw^\top\tildebsy}{n}\Big).
\end{split}
\end{equation}
By Birkhoff's ergodic theorem $n^{-1}\barbsw^\top\barbsw, n^{-1}\tildebsw^\top\tildebsw$ and $n^{-1}\tildebsw^\top\tildebsy$ converge a.s. to two matrices and a vector, respectively. 
Additionally, by \eqref{eq:difCorr} we can apply Birkhoff's ergodic theorem to obtain as $\nto$,
\begin{equation*}
\Big(\frac{\barbsw^\top\barbsy}{n} - \frac{\tildebsw^\top\tildebsy}{n}\Big) \stas \bm{0} \quad \text{and} \quad \Big( \frac{\tildebsw^\top\tildebsw}{n} - \frac{\barbsw^\top\barbsw}{n}\Big) \stas \bm{0},
\end{equation*}
showing that the LSE is consistent. For the YWE the proof is a direct consequence of Lemma~\ref{le:mu:cov} and the continuous mapping theorem.
\end{proof}

\subsection{Asymptotic normality of the LSE and YWE}

One of the requirements for asymptotic normality of the IIE of the COGARCH parameter $\bstheta$ is condition (b.2) of Proposition~\ref{th:ii:asymp}. 
This means we have to prove asymptotic normality of $\hat{\bspi}_{n,\ls}$ and $\hat{\bspi}_{n,\yw}$. 
We start with an auxiliary result.

\begin{lemma}\label{le:ywlse:st}
Let the assumptions of Proposition~\ref{th:haug} hold. Let $\bsa_{n,\ls}^{\ast}$ be the LSE defined in \eqref{eq:lseStar} and $\bsa_{n,\yw}^{\ast}$ be the modification of the YWE defined in \eqref{eq:3.6},  when the true mean $\mu$ is known, i.e., with $\hat{\gamma}_n(\cdot)$ replaced by $\gamma^{\ast}(\cdot)$ from \eqref{eq:gmStar}. Then as $\nto$,
\begin{itemize}
\item[(a)] $\sqrt{n}( \hat{\mu}_n^2 - \mu^2) \stp 0$,
\item[(b)] $\sqrt{n}( \bsa_{n,\yw}^{\ast} - \hat{\bm{a}}_{n,\yw} ) \stp 0,$
\item[(c)] $\sqrt{n}( \bsa_{n,\yw}^{\ast} - \bsa_{n,\ls}^{\ast} ) \stp 0,$ and
\item[(d)] $\sqrt{n}( \bsa_{n,\ls}^{\ast} - \hat{\bm{a}}_{n,\ls}) \stp 0.$
\end{itemize}
\end{lemma}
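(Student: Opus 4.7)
The plan is to exploit two ingredients: (i) root-$n$ stochastic boundedness of $\hat\mu_n-\mu$, which follows from the CLT for stationary $\alpha$-mixing sequences using Proposition~\ref{th:haug}(c) (exponentially decaying mixing coefficients) together with $\E W_1^2<\infty$ implied by (A4) and \eqref{eq:2.5}; and (ii) the a.s.\ limits for the sample mean and sample autocovariances already proved in Lemma~\ref{le:mu:cov}. With $\sqrt n(\hat\mu_n-\mu)=O_p(1)$ in hand, each of the four statements should reduce to showing that a given difference is actually $O_p(1/n)$, so that multiplication by $\sqrt n$ yields $o_p(1)$.

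For part (a), as written the identity $\hat\mu_n^2-\mu^2=(\hat\mu_n+\mu)(\hat\mu_n-\mu)$ would give a non-degenerate limit, so I read the intended quantity as $\sqrt n(\hat\mu_n-\mu)^2$; the proof is then immediate, since $\sqrt n(\hat\mu_n-\mu)^2=[\sqrt n(\hat\mu_n-\mu)]^2/\sqrt n=O_p(1)/\sqrt n\stp 0$. For part (b) I would start from the explicit formula \eqref{eq:difCorr}, factor $(\mu-\hat\mu_n)$ out, and rewrite the remaining sum as $2\hat\mu_n-\tfrac{n-h}{n}(\hat\mu_n+\mu)+O_p(1/n)=(\hat\mu_n-\mu)+O_p(1/n)$, using that $n^{-1}\sum_{i=1}^{n-h}W_i=\hat\mu_n-n^{-1}\sum_{i=n-h+1}^nW_i=\hat\mu_n+O_p(1/n)$ because $h$ is fixed and $\E W_1^2<\infty$. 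This yields $\hat\gamma_n(h)-\gamma^\ast_n(h)=-(\hat\mu_n-\mu)^2+O_p(n^{-3/2})=O_p(1/n)$, hence $\sqrt n(\hat\gamma_n(h)-\gamma^\ast_n(h))\stp 0$ for each fixed $h$. Applying the matrix identity $A^{-1}x-B^{-1}y=A^{-1}(x-y)+A^{-1}(B-A)B^{-1}y$ with $A=\hat{\bm\Ga}_n$, $B=\bm\Ga^\ast_n$, $x=\hat{\bm\ga}_n$, $y=\bm\ga^\ast_n$, and using that $\hat{\bm\Ga}_n^{-1},(\bm\Ga^\ast_n)^{-1},\bm\ga^\ast_n$ converge a.s.\ (Lemma~\ref{le:mu:cov}), completes the proof of (b).

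Part (c) compares $\bsa_{n,\ls}^\ast=(\tildebsw^\top\tildebsw)^{-1}\tildebsw^\top\tildebsy$ from \eqref{eq:lse:reg} with the mean-known YWE $\bsa_{n,\yw}^\ast=(\bm\Ga^\ast_n)^{-1}\bm\ga^\ast_n$. The $(u,v)$-entry of $n^{-1}\tildebsw^\top\tildebsw$ is a sum $n^{-1}\sum_{i=0}^{n-r-1}\tilde W_{r+1-u+i}\tilde W_{r+1-v+i}$ over $n-r$ indices, whereas $\gamma^\ast_n(|u-v|)=n^{-1}\sum_{i=1}^{n-|u-v|}\tilde W_i\tilde W_{i+|u-v|}$ runs over $n-|u-v|$ indices; the two sums differ by at most $2r$ terms, each of which is $O_p(1)$ by $\E W_1^2<\infty$, so the entrywise difference is $O_p(1/n)$. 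The same is true comparing $n^{-1}\tildebsw^\top\tildebsy$ with $\bm\ga^\ast_n$. Multiplying by $\sqrt n$ gives $o_p(1)$, and the matrix identity above, combined with the a.s.\ convergence of $(n^{-1}\tildebsw^\top\tildebsw)^{-1}$ and $(\bm\Ga^\ast_n)^{-1}$ to $\bm\Ga^{-1}$, yields (c). For part (d), I would work directly from the decomposition \eqref{eq:lsSqSta}: each of $n^{-1}(\barbsw^\top\barbsy-\tildebsw^\top\tildebsy)$ and $n^{-1}(\tildebsw^\top\tildebsw-\barbsw^\top\barbsw)$ is an entrywise analog of \eqref{eq:difCorr} (replace $W_i$ by $\tilde W_i$ plus a constant shift) and admits the same $O_p(1/n)$ bound as in (b); combined with the a.s.\ limits of $(\barbsw^\top\barbsw/n)^{-1}$ and $\tildebsw^\top\tildebsy/n$ from Theorem~\ref{th:lse:con}, this gives $\sqrt n(\hat\bsa_{n,\ls}-\bsa_{n,\ls}^\ast)\stp 0$.

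The main obstacle is the bookkeeping in part (b): naively the expansion of \eqref{eq:difCorr} produces a term proportional to $\hat\mu_n-\mu$, which is only $O_p(n^{-1/2})$ and would not suffice after multiplying by $\sqrt n$. The cancellation that upgrades this to $O_p(1/n)$ relies on the precise combination $\bigl(\tfrac{1}{n}\sum_{i=1}^{n-h}W_i+\tfrac{1}{n}\sum_{i=1}^{n-h}W_{i+h}\bigr)-\tfrac{n-h}{n}(\hat\mu_n+\mu)=(\hat\mu_n-\mu)+O_p(1/n)$, which is the algebraic identity driving the whole lemma. Once this cancellation is isolated, the remaining arguments are mechanical applications of the matrix-inverse identity and continuous mapping.
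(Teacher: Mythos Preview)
Your proposal is essentially correct and tracks the paper's own argument closely; your observation about (a) is well taken, since the paper's proof of (a) in fact only establishes (via Chebyshev and the summability of $\gamma(h)$) that $\sqrt{n}(\hat\mu_n-\mu)=O_p(1)$, and this stochastic boundedness is exactly what is invoked in (b)--(d).

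The one genuine difference is in (b), where you work harder than the paper does. The paper simply factors \eqref{eq:difCorr} as
\[
\sqrt{n}\bigl(\hat\gamma_n(h)-\gamma^\ast_n(h)\bigr)=\sqrt{n}(\mu-\hat\mu_n)\cdot\frac{1}{n}\sum_{i=1}^{n-h}\bigl(W_i+W_{i+h}-\hat\mu_n-\mu\bigr),
\]
notes that the first factor is $O_p(1)$ by (a), and that the second factor converges a.s.\ to $\mu+\mu-2\mu=0$ by Birkhoff's ergodic theorem, hence is $o_p(1)$; the product is then $o_p(1)$ with no rate needed. Thus the ``main obstacle'' you flag---the cancellation that upgrades the average to $(\hat\mu_n-\mu)+O_p(1/n)$---never actually arises: $O_p(1)\cdot o_p(1)=o_p(1)$ suffices. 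Your sharper $O_p(1/n)$ bookkeeping is correct but unnecessary, and the paper's route is shorter. Parts (c) and (d) in your proposal coincide with the paper's treatment (the paper defers (c) to Theorem~8.1.1 of Brockwell--Davis, which is precisely the boundary-term argument you spell out).
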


\begin{proof}
(a) \, Write $\sqrt{n}( \hat{\mu}_n^2 - \mu^2) =  \sqrt{n} (\hat{\mu}_n +  \mu)(\hat{\mu}_n -  \mu)$ and notice that by Lemma~\ref{le:mu:cov} we only need to show that $\sqrt{n} (\hat{\mu}_n +  \mu)$ is bounded in probability. It follows from \eqref{eq:cov} that $\gamma(h)$ decays exponentially with $h$ and thus $\sum_{h=-\infty}^{\infty} |\gamma(h)| < \infty$. Let $\epsilon > 0$ be fixed and apply Chebyshev's inequality to get
$$
\P(\sqrt{n}|\hat{\mu}_n +  \mu| > \epsilon ) \leq \epsilon^{-2} n \var(\hat{\mu}_n) \rightarrow \epsilon^{-2} \sum_{h=-\infty}^{\infty} \gamma(h)<\infty, \quad \nto,
$$
where the convergence follows from Theorem~7.1.1 in \citet{Brockwell13}. 
\\[2mm]
(b) \, Write $\bsa_{n,\yw}^{\ast} = (\bm{\Gamma}_n^{\ast})^{-1}\bm{\gamma}_n^{\ast}$ with autocovariance function $\gamma^{\ast}(\cdot)$ defined in \eqref{eq:gmStar}. Using properties of the inverse matrix we get
\begin{equation*}
\begin{split}
& \quad \,\,\sqrt{n}( \bsa_{n,\yw}^{\ast} - \hat{\bm{a}}_{n,{\yw}} ) \\
& = \sqrt{n}\big(\hat{\bm{\Gamma}}_n^{-1}\hat{\bm{\gamma}}_n - (\bm{\Gamma}_n^{\ast})^{-1}\bm{\gamma}_n^{\ast}\big) \\
& = \hat{\bm{\Gamma}}_n^{-1}\sqrt{n} (\bm{\Gamma}_n^{\ast}-\hat{\bm{\Gamma}}_n)(\bm{\Gamma}_n^{\ast})^{-1}\hat{\bm{\gamma}}_n + (\bm{\Gamma}_n^{\ast})^{-1}\sqrt{n}(\hat{\bm{\gamma}}_n - \bm{\gamma}_n^{\ast}).
\end{split}
\end{equation*}
The estimators $\hat{\bm{\Gamma}}_n, \bm{\Gamma}_n^{\ast}$ and $\hat{\bm{\gamma}}_n$ are all bounded in probability. For fixed $h \in \N_0$ it follows from \eqref{eq:difCorr} and Lemma~\ref{le:ywlse:st}(a) that $\sqrt{n}(\hat{\gamma}_n(h) - \gamma^{\ast}_n(h)) \stp 0$ as $\nto$. Therefore,  $\sqrt{n}(\bm{\Gamma}_n^{\ast}-\hat{\bm{\Gamma}}_n)$ and $\sqrt{n}(\hat{\bm{\gamma}}_n - \bm{\gamma}_n^{\ast})$ also converge to zero in probability, which entails (b).\\[2mm]
(c) \, This follows similarly as in the proof of Theorem~8.1.1 in \citet{Brockwell13}.\\[2mm]
(d) \,  By \eqref{eq:lsSqSta} and observing that $n^{-1}\barbsw^\top\barbsw$, $n^{-1}\tildebsw^\top\tildebsw$ and $n^{-1}\tildebsw^\top\tildebsy$ are bounded in probability, we only need to show that $n^{-\frac{1}{2}}\{\barbsw^\top\barbsy - \tildebsw^\top\tildebsy\}$ and $n^{-\frac{1}{2}}\{\tildebsw^\top\tildebsw - \barbsw^\top\barbsw\}$ converge to zero in probability as $\nto$. These terms only depend on the autocovariance function of the process $(W_i)_{i \in \N}$ and therefore convergence in probability to zero follows from $\sqrt{n}(\hat{\gamma}_n(h) - \gamma^{\ast}_n(h)) \stp 0$ as $\nto$ as can be seen from \eqref{eq:difCorr}, and the fact that $\sqrt{n}\hat{\mu}_n$ is bounded in probability. 
\end{proof}

The following is the main result of this Section and proves Proposition~\ref{th:ii:asymp}(b.2).

\begin{theorem}[Asymptotic normality of the LSE and YWE]\label{th:ar:norm}
Let the assumptions of Proposition~\ref{th:haug} hold. Assume additionally that $\E |L_1|^{8+\epsilon} < \infty$ and $\Psi_{\bstheta}(4+\frac{\epsilon}{2}) < 0$ for some $\epsilon > 0$ and that the matrix $\bm{\Sigma}$ defined  in \eqref{eq:defSig} is  positive definite.
Then, both LSE and YWE for the AR$(r)$ model for $r\ge 2$ are asymptotically normal with covariance matrix
\begin{equation}\label{eq:defSig}
\bm{\Sigma} = \begin{blockarray}{c|ccc|c}
    \begin{block}{(c|ccc|c@{\hspace*{5pt}})}
    1 & 0 & \dots & 0  & 0 \\
        \cline{1-5}
    0 & \BAmulticolumn{3}{c|}{\multirow{3}{*}{$\bm{\Gamma}^{-1}$}} & 0 \\
    \vdots & &&& \vdots \\
    0 & &&& 0 \\
    \cline{1-5}
    0 & 0& \dots & 0 & 1 \\
    \end{block}
  \end{blockarray} \,\,\, \bm{\Sigma}^{\ast},
\end{equation}
where $\bm{\Gamma}$ is the autocovariance matrix of $(W_i)_{i=1}^r$,
\begin{equation}\label{eq:defSigAs}
\bm{\Sigma}^{\ast} = \E \bm{C}_1 \bm{C}_1^\top + 2\sum_{i=1}^{\infty} \E  \bm{C}_1 \bm{C}_{1+i}^\top,
\end{equation}
with $C_i\in\R^{r+2}$ given by
\beam\label{cmatrix}
\bm{C}_i = \begin{pmatrix}
\tilde{W}_i \\
 ( \tilde{W}_{i+r} - \sum_{j=1}^r a_j \tilde{W}_{i+r-j} )\tilde{W}_{i+r-1} \\
\vdots \\
 ( \tilde{W}_{i+r} - \sum_{j=1}^r a_j \tilde{W}_{i+r-j} )\tilde{W}_{i} \\
W_i^2 - \mu^2
\end{pmatrix}
\eeam
\end{theorem}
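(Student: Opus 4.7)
My strategy is to decompose $\sqrt{n}(\hat{\bspi}_n - \bspi)$ into a smooth transformation of a sum of stationary, strongly mixing random vectors with finite $(2+\delta)$-th moments, then invoke a central limit theorem for $\alpha$-mixing sequences and identify the resulting covariance with $\bm{\Sigma}$. By Lemma~\ref{le:ywlse:st}(b)--(d), it suffices to establish the joint asymptotic normality of the known-mean version $\sqrt{n}(\hat{\mu}_n - \mu,\ \bsa_{n,\ls}^{\ast} - \bm{a},\ \hat{\gamma}_n(0) - \gamma(0))^\top$; the two estimators with unknown mean then inherit the same limit by Slutsky.

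Next I would represent each of the three blocks as (a smooth transformation of) an ergodic average that can be read off from the components of $\bm{C}_i$ in \eqref{cmatrix}. The mean part is immediate: $\sqrt{n}(\hat{\mu}_n - \mu) = n^{-1/2}\sum_{i=1}^n \tilde{W}_i$. For the autoregressive coefficients, \eqref{eq:3.18} gives
\begin{equation*}
\sqrt{n}(\bsa_{n,\ls}^{\ast} - \bm{a}) = \big(n^{-1}\tildebsw^\top\tildebsw\big)^{-1}\,n^{-1/2}\sum_{i=1}^{n-r} \bm{Z}_i,
\end{equation*}
where $\bm{Z}_i$ matches the middle $r$ entries of $\bm{C}_i$ and $(n^{-1}\tildebsw^\top\tildebsw)^{-1}\stas\bm{\Gamma}^{-1}$ by \eqref{eq:wTw_lim}. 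For the variance block I would write $\hat{\gamma}_n(0) - \gamma(0) = n^{-1}\sum_{i=1}^n(W_i^2 - \E W_1^2) - (\hat{\mu}_n^2 - \mu^2)$ and use the expansion $\hat{\mu}_n^2 - \mu^2 = 2\mu(\hat{\mu}_n - \mu) + o_p(n^{-1/2})$ (justified by Lemma~\ref{le:mu:cov} and Lemma~\ref{le:ywlse:st}(a)) to absorb the quadratic correction into the mean component. Up to $o_p(1)$ this identifies the scaled error vector with $D\cdot n^{-1/2}\sum_{i=1}^n \bm{C}_i$, where $D={\rm blockdiag}(1,\bm{\Gamma}^{-1},1)$.

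The crux is then a joint CLT for the $\R^{r+2}$-valued, strictly stationary sequence $(\bm{C}_i)_{i\in\N}$. Since each $\bm{C}_i$ is a Borel function of a finite block of $(G_j)_{j\in\N}$, the $\alpha$-mixing property with exponentially decaying coefficients is inherited from Proposition~\ref{th:haug}(c). To apply an Ibragimov-type CLT for strongly mixing sequences I need $\E\|\bm{C}_1\|^{2+\delta}<\infty$ for some $\delta>0$; the heaviest entries involve products up to $G_i^4$, so the required $(8+\eps)$-th moment of the returns is exactly what the hypotheses $\E|L_1|^{8+\eps}<\infty$ and $\Psi_\bstheta(4+\tfrac{\eps}{2})<0$ deliver via \eqref{eq:2.5} and its propagation to the returns (Proposition~5.1 of \citet{Kluppelberg04}). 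Exponential mixing combined with this moment bound makes $\sum_{i\geq 1}\|\E\bm{C}_1\bm{C}_{1+i}^\top\|<\infty$, so $\bm{\Sigma}^\ast$ in \eqref{eq:defSigAs} is well defined and
\begin{equation*}
n^{-1/2}\sum_{i=1}^n \bm{C}_i \std \mathcal{N}(0,\bm{\Sigma}^\ast),
\end{equation*}
with the positive-definiteness of the limit covered by the standing assumption.

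A final application of Slutsky, combining this CLT with the almost sure convergence $(n^{-1}\tildebsw^\top\tildebsw)^{-1}\stas\bm{\Gamma}^{-1}$ and the continuous mapping theorem, yields the joint limit with asymptotic covariance $D\bm{\Sigma}^\ast D^\top$, matching the block structure displayed in \eqref{eq:defSig}. I expect the main obstacles to be (i) carefully controlling the cross-error $(\hat{\mu}_n^2 - \mu^2)$ so that the variance block collapses onto the correct linear functional of $\bm{C}_i$, and (ii) verifying the $(2+\delta)$-moment bound on the quartic entries of $\bm{C}_i$, for which the strengthened moment/stability hypothesis on $L$ is tailored; everything else reduces to bookkeeping of stationary, strongly mixing averages and transporting the limit across the reductions of Lemma~\ref{le:ywlse:st}.
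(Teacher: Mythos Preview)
Your proposal follows essentially the same route as the paper: reduce via Lemma~\ref{le:ywlse:st} to the known-mean LSE, represent the scaled error as a block-diagonal matrix (with inner block $(n^{-1}\tildebsw^\top\tildebsw)^{-1}\stas\bm{\Gamma}^{-1}$) times $n^{-1/2}\sum_i\bm{C}_i$, apply a Cram\'er--Wold/Ibragimov-type CLT for $\alpha$-mixing sequences using the exponential mixing from Proposition~\ref{th:haug}(c) and the $(2+\delta)$-moment delivered by $\E|L_1|^{8+\eps}<\infty$ together with $\Psi_\bstheta(4+\eps/2)<0$, and finish with Slutsky.

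The one point where you diverge is the variance block. The paper does not delta-expand $\hat{\mu}_n^2-\mu^2$; instead it writes $\hat{\gamma}_n(0)-\gamma(0)=(\mu^2-\hat{\mu}_n^2)+\big(n^{-1}\sum_i W_i^2-\E W_1^2\big)$ and moves the whole term $\sqrt{n}(\mu^2-\hat{\mu}_n^2)$ into the $o_p(1)$ vector via Lemma~\ref{le:ywlse:st}(a), so that the block-diagonal pre-factor and the last entry of $\bm{C}_i$ in \eqref{cmatrix} appear without any cross-term. Your expansion $\hat{\mu}_n^2-\mu^2=2\mu(\hat{\mu}_n-\mu)+o_p(n^{-1/2})$ is a legitimate alternative, but ``absorbing the quadratic correction into the mean component'' then forces either a $-2\mu$ off-diagonal entry in $D$ (last row, first column) or a modified last entry $\tilde{W}_i^2-\gamma(0)$ in $\bm{C}_i$; in neither case does $D$ stay block-diagonal with $\bm{C}_i$ as displayed. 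So if your aim is to land exactly on the form \eqref{eq:defSig}, follow the paper's splitting rather than the delta expansion.
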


\begin{proof}
Write
\begin{equation}\label{eq:norm:1}
\sqrt{n}(\hatbspi_{n,\ls} - \bspi) = \sqrt{n}\begin{pmatrix}
\hat{\mu}_n  - \mu \\
\hat{\bm{a}}_{n,\ls} - \bm{a} \\
\hat{\gamma}_n(0) - \gamma(0)
\end{pmatrix} = 
\sqrt{n}\begin{pmatrix}
0 \\
\hat{\bm{a}}_{n,\ls} - \bsa_{n,\ls}^{\ast} \\
\mu^2 - \hat{\mu}_n^2
\end{pmatrix} + 
\sqrt{n}\begin{pmatrix}
\hat{\mu}_n  - \mu \\
\bsa_{n,\ls}^{\ast} - \bm{a} \\
\hat{\gamma}_n(0) + \hat{\mu}_n^2 - \E W_1^2
\end{pmatrix}
\end{equation}
and
\begin{equation*}
\hspace{-9cm}
\sqrt{n}(\hatbspi_{n,\yw} - \bspi)  =
\sqrt{n}\begin{pmatrix}
\hat{\mu}_n  - \mu \\
\hat{\bm{a}}_{n,\yw} - \bm{a} \\
\hat{\gamma}_n(0) - \gamma(0)
\end{pmatrix} \\
\end{equation*}
\begin{equation}\label{eq:norm:2}
= \sqrt{n}\begin{pmatrix}
0 \\
\hat{\bm{a}}_{n,\yw} - \bsa_{n,\yw}^{\ast} \\
\mu^2 - \hat{\mu}_n^2
\end{pmatrix} + 
\sqrt{n}\begin{pmatrix}
0 \\
 \bsa_{n,\yw}^{\ast} - \bsa_{n,\ls}^{\ast} \\
0
\end{pmatrix} + 
\sqrt{n}\begin{pmatrix}
\hat{\mu}_n  - \mu \\
\bsa_{n,\ls}^{\ast} - \bm{a} \\
\hat{\gamma}_n(0) + \hat{\mu}_n^2 - \E W_1^2
\end{pmatrix}.
\end{equation}
We apply Lemma~\ref{le:ywlse:st} to the right-hand side of \eqref{eq:norm:1} and \eqref{eq:norm:2} and find that it suffices to prove that
\begin{equation*}
\sqrt{n}\begin{pmatrix}
\hat{\mu}_n  - \mu \\
\bsa_{n,\ls}^{\ast} - \bm{a} \\
\hat{\gamma}_n(0) + \hat{\mu}_n^2 - \E W_1^2
\end{pmatrix} \std \mathcal{N}(0,\bm{\Sigma}), \quad \nto.
\end{equation*}
Using \eqref{eq:3.18} we write
\begin{equation}\label{eq:def:bn:ci}
\begin{split}
 \sqrt{n}\begin{pmatrix}
\hat{\mu}_n  - \mu \\
\bsa_{n,\ls}^{\ast} - \bm{a} \\
\hat{\gamma}_n(0) + \hat{\mu}_n^2 - \E W_1^2
\end{pmatrix} & =  
\sqrt{n}\begin{pmatrix}
 \frac{1}{n} \sum_{i=1}^n (W_i  - \mu) \\
n(\tildebsw^\top\tildebsw)^{-1} \frac{1}{n} \sum_{i=1}^n \bm{Z}_i \\
 \frac{1}{n} \sum_{i=1}^n (W_i - \hat{\mu}_n)^2 + \hat{\mu}_n^2 - \E W_1^2 \\
\end{pmatrix}  \\
& = \begin{blockarray}{c|ccc|c}
    \begin{block}{(c|ccc|c@{\hspace*{5pt}})}
    1 & 0 & \dots & 0  & 0 \\
        \cline{1-5}
    0 & \BAmulticolumn{3}{c|}{\multirow{3}{*}{$n(\tildebsw^\top\tildebsw)^{-1}$}} & 0 \\
    \vdots & &&& \vdots \\
    0 & &&& 0 \\
    \cline{1-5}
    0 & 0& \dots & 0 & 1 \\
    \end{block}
  \end{blockarray} \,\, \frac{1}{\sqrt{n}} \sum_{i=1}^n 
  \begin{pmatrix}
W_i - \mu \\
\bm{Z}_i \\
W_i^2 - \E W_1^2 
\end{pmatrix} \\ 
& =:\bm{B}_n   \frac{1}{\sqrt{n}} \sum_{i=1}^n \bm{C}_i.
\end{split} 
\end{equation}
For the asymptotic normality of \eqref{eq:def:bn:ci} we use the Cram\'er-Wold device and show that
\begin{equation*}
\sqrt{n}\Big( \frac{1}{n} \sum_{i=1}^n \bm{\lambda}^\top \bm{C}_i \Big) \std \mathcal{N}(0,\bm{\lambda}^\top \bm{\Sigma}^{\ast} \bm{\lambda}),\quad\nto,
\end{equation*}
for all vectors $\bm{\lambda} \in \R^{r+2}$ such that $\bm{\lambda}^\top \bm{\Sigma}^{\ast} \bm{\lambda}>0$. 
It follows from Proposition~\ref{th:haug}(c) that the squared returns process $(W_i)_{i \in \N}$ is $\alpha$-mixing with exponentially decaying mixing coefficients.
Since each $\bm{C}_i$ is a measurable function of $W_i,\dots,W_{i-r}$ it follows from Remark~1.8 of \citet{Bradley07} that $(\bm{\lambda}^\top \bm{C}_i)_{i \in \N}$ is also $\alpha$-mixing with mixing coefficients satisfying $\alpha_{\bm{C}}(n) \leq \alpha_{\bm{W}}(n-(r+1))$ for all $n \geq r + 2$. Therefore $\sum_{n=0}^{\infty} (\alpha_{\bm{C}}(n))^{\frac{\epsilon}{2 + \epsilon}} < \infty$ for all $\epsilon > 0$. 
Since $\E |L_1|^{8+\epsilon} < \infty$ and $\Psi_{\bstheta}(4+\frac{\epsilon}2) < 0$ it follows from \eqref{eq:2.5} that $\E |W_1|^{4+\epsilon/2} < \infty$ and, as a consequence, $\E |\lambda^\top \bm{C}_1|^{2+\epsilon/4} < \infty$. 
Thus, the CLT for $\alpha$-mixing sequences applies (see Theorem~18.5.3 of \citet{Ibragimov71}) so that 
\begin{equation*}
\sqrt{n}\Big( \frac{1}{n} \sum_{i=1}^n \bm{\lambda}^\top \bm{C}_i \Big) \std \mathcal{N}(0,\zeta),\quad\nto,
\end{equation*}
where
\begin{equation*}
\zeta = \E \bm{\lambda}^\top \bm{C}_1 \bm{C}_1^\top \bm{\lambda}  + 2\sum_{i=1}^{\infty} \E \bm{\lambda}^\top \bm{C}_1 \bm{C}_{1+i}^\top \bm{\lambda}.
\end{equation*}
After rearranging this equation we find  \eqref{eq:defSigAs}.
Let $\bm{B}_n = (b^n_{u,v})_{u,v=1}^{r+2}$ denote the matrix as defined in \eqref{eq:def:bn:ci}. Using \eqref{eq:wTw_lim} we get, as $\nto$, 
$$
b^n_{u,v} \stas \E (W_{1}-\mu)(W_{1+|u-v|} - \mu), \quad \quad 2 \leq u,v \leq r+1.
$$
Then the inner block of the matrix $\bm{B}_n$ converges a.s. to $\bm{\Gamma}^{-1}$. 
 This gives \eqref{eq:defSig}
which finishes the proof. 
\end{proof}

%
%

\section{IIE of the COGARCH process - strong consistency and asymptotic normality}

The objective of this section is to prove strong consistency and asymptotic normality of the IIE of the COGARCH parameter $\bstheta$. 
Let $(G_i(\bm{\theta}_0))_{i=1}^n$ be the returns originating from a COGARCH log-price process \eqref{eq:def:Pt}.
As auxiliary model we will use an AR$(r)$ model for fixed $r\ge 2$ as in Proposition \ref{pr:cog:aux}, whose parameters are estimated by one of the estimators $\hat{\bspi}_n$ from Definition~\ref{de:ywlse}, which we consider as functions of the COGARCH parameter $\bstheta$.

\subsection{Preliminary results} 

{We begin with an auxiliary result, which is a consequence of Theorem~3.2 of \cite{Kluppelberg04}.}

\begin{lemma}\label{le:sigt:dist}
Assume that $\E |L_1|^{2} < \infty$ and $\Psi_{\bm{\theta_i}}(1) < 0$ for $i=1,\dots,d$. Then for every $t > 0$,
\begin{equation*}
(\sigma_t^2(\bm{\theta}_1),\dots,\sigma_t^2(\bm{\theta}_d)) \eqd (\sigma_0^2(\bm{\theta}_1),\dots,\sigma_0^2(\bm{\theta}_d)) 
\end{equation*}
\end{lemma}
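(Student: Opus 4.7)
My plan is to adapt the proof of Theorem~3.2 in \cite{Kluppelberg04} (which gives the one-dimensional stationarity $\sigma_t^2(\bstheta) \eqd \sigma_0^2(\bstheta)$) to the multivariate setting, exploiting crucially that all coordinates $\sigma_t^2(\bstheta_i)$ are measurable functionals of a \emph{single} underlying L\'evy process $L$.

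First I would write, for each $i$, the identity from \eqref{eq:def:sigt},
\begin{equation*}
\sigma_t^2(\bstheta_i) \;=\; e^{-Y_{t-}(\bstheta_i)}\sigma_0^2(\bstheta_i) \;+\; \beta\int_0^t e^{Y_s(\bstheta_i)-Y_{t-}(\bstheta_i)}\, \diff s,
\end{equation*}
and unify these into a single vector equation for $i=1,\dots,d$. The key tool will be the time-reversal of $L$ on $[0,t]$: the process $\widetilde L_u := L_t - L_{(t-u)-}$, $u\in[0,t]$, has the same law as $(L_u)_{u\in[0,t]}$. Because each $Y(\bstheta_i)$ is built from the \emph{same} $L$ through the deterministic recipe \eqref{eq:def:yt}, this single time-reversal transfers \emph{simultaneously} to every coordinate, yielding the joint distributional identity
\begin{equation*}
\bigl(Y_{t-}(\bstheta_i)-Y_{(t-u)-}(\bstheta_i)\bigr)_{u\in[0,t],\,i=1,\dots,d}\;\eqd\;\bigl(Y_u(\bstheta_i)\bigr)_{u\in[0,t],\,i=1,\dots,d}.
\end{equation*}
After the substitution $s = t-u$ in the integral, this gives jointly in $i$
\begin{equation*}
\beta\int_0^t e^{Y_s(\bstheta_i)-Y_{t-}(\bstheta_i)}\, \diff s \;\eqd\; \beta\int_0^t e^{-Y_{u-}(\bstheta_i)}\, \diff u.
\end{equation*}

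Next I would handle the boundary term $e^{-Y_{t-}(\bstheta_i)}\sigma_0^2(\bstheta_i)$. Since $(\sigma_0^2(\bstheta_i))_{i=1}^d$ is independent of $L$ and marginally distributed as $\beta\int_0^{\infty}e^{-Y_s(\bstheta_i)}\diff s$, I would construct the starting values jointly from an independent copy $L'$ of $L$ via $\sigma_0^2(\bstheta_i) \eqd \beta\int_0^{\infty}e^{-Y'_s(\bstheta_i)}\diff s$ (with $Y'$ built from $L'$). Concatenating $L$ on $[0,t]$ with $L'$ (and using the additive cocycle structure of $Y$, as in \cite{Kluppelberg04}) realizes the boundary term as the joint tail contribution $\beta\int_t^{\infty}e^{-Y_u(\bstheta_i)}\diff u$ in joint distribution. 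Adding the two pieces then gives
\begin{equation*}
(\sigma_t^2(\bstheta_i))_{i=1}^d \;\eqd\; \Big(\beta\int_0^{\infty}e^{-Y_s(\bstheta_i)}\diff s\Big)_{i=1}^d \;\eqd\; (\sigma_0^2(\bstheta_i))_{i=1}^d.
\end{equation*}

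The main obstacle is the step which transfers the one-dimensional time-reversal argument to the joint level. Marginal stationarity at each $\bstheta_i$ is already given by Theorem~3.2 of \cite{Kluppelberg04}, but joint equality in distribution requires that the \emph{same} path-transformation on $L$ (respectively, on $L'$ for the boundary term) produces the required distributional equalities in every coordinate at once. This forces one to be careful about the joint construction of $(\sigma_0^2(\bstheta_i))_{i=1}^d$; the cleanest way around it is the independent-copy $L'$ construction above, which makes the joint law of the starting values canonical and compatible with the time-reversal argument. Once this is settled, the rest reduces to bookkeeping that mirrors the one-dimensional proof in \cite{Kluppelberg04}.
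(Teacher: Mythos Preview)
Your proposal is correct and is precisely the adaptation the paper has in mind: the paper does not spell out a proof but merely states that the lemma ``is a consequence of Theorem~3.2 of \cite{Kluppelberg04}'', and your argument is exactly the multivariate version of that proof, carried by the single time-reversal of $L$ applied simultaneously to all $Y(\bstheta_i)$. Your observation that the joint law of the starting vector $(\sigma_0^2(\bstheta_i))_{i=1}^d$ must be fixed via an independent copy $L'$ is the one point the paper leaves implicit, and your resolution is the canonical one needed for the later uses of the lemma (e.g.\ in the proof of Theorem~\ref{th:cog:cont}).
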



In what follows we shall need for fixed $\vp > 0$ the stochastic process
\begin{equation}\label{eq:def_ds}
K_s(\vp) = \sumzerous \frac{ (\Delta L_u)^2}{1 + \vp(\Delta L_u)^2}, \quad s \geq 0.
\end{equation}


\begin{lemma}
The process $(K_s(\vp))_{s \geq 0}$ is a L\'evy process and $\E |K_s(\vp)|^p < \infty$ for all $p \in \N$.  
\end{lemma}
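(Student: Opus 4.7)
The plan is to recognise $K_s(\varphi)$ as a pure-jump functional of the Poisson random measure $N_L$ associated with the driving L\'evy process $L$, apply the standard theory for such integrals to identify it as a subordinator, and then use that the resulting L\'evy measure has bounded support to conclude that all moments are finite.

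First I would write
\begin{equation*}
K_s(\varphi) \;=\; \int_{(0,s]}\!\!\int_{\R\setminus\{0\}} g(x)\, N_L(\diff u,\diff x), \qquad g(x) := \frac{x^2}{1+\varphi x^2},
\end{equation*}
where $N_L$ is the jump measure of $L$, a Poisson random measure on $(0,\infty)\times(\R\setminus\{0\})$ with intensity $\diff u\otimes \nu_L(\diff x)$. Since $g$ is non-negative, bounded by $1/\varphi$, and $g(x)\sim x^2$ as $x\to 0$, one has
\begin{equation*}
\int_{\R\setminus\{0\}} g(x)\,\nu_L(\diff x) \;\leq\; \int_{|x|\leq 1} x^2\,\nu_L(\diff x) + \frac{1}{\varphi}\,\nu_L(\{|x|>1\}) \;<\;\infty,
\end{equation*}
by the defining properties of the L\'evy measure $\nu_L$. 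Hence the integral is well defined and absolutely convergent (the sum has at most finitely many summands exceeding any threshold on any bounded interval, and the tail near zero is absolutely summable).

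Next I would invoke the standard result (see e.g.\ Sato, Chapter~4) that for any non-negative measurable $g$ with $\int g\,\diff\nu_L<\infty$, the process $s\mapsto \int_{(0,s]\times\R} g(x)N_L(\diff u,\diff x)$ is a subordinator with L\'evy measure $\nu_K = \nu_L\circ g^{-1}$ (restricted to $(0,\infty)$) and no drift. This gives independent and stationary increments, c\`adl\`ag paths and $K_0(\varphi)=0$, establishing the L\'evy property. The main technical point to check carefully is that the pathwise sum defining $K_s(\varphi)$ coincides a.s.\ with the Poisson integral; this uses the finiteness of $\int g\,\diff\nu_L$ to apply monotone convergence on truncations $\{|x|>\epsilon\}$, together with the fact that only finitely many jumps exceeding any $\epsilon>0$ occur on $(0,s]$.

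Finally, for the moment bound, the key observation is that $0\le g(x)\le 1/\varphi$ for all $x$, so the L\'evy measure $\nu_K$ is supported in $[0,1/\varphi]$. For any $p\in\N$,
\begin{equation*}
\int y^p \,\nu_K(\diff y) \;=\; \int g(x)^p\,\nu_L(\diff x) \;\leq\; \varphi^{-(p-1)}\!\!\int g(x)\,\nu_L(\diff x)\;<\;\infty.
\end{equation*}
By the classical moment criterion for L\'evy processes (Sato, Theorem~25.3), $\E|K_s(\varphi)|^p=\E K_s(\varphi)^p<\infty$ for every $p\in\N$. The main obstacle, though essentially routine, is the clean identification of the pathwise sum with the Poisson integral; once that is in place, both claims follow from general L\'evy theory.
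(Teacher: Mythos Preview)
Your proposal is correct and follows essentially the same approach as the paper: the paper simply declares the L\'evy property ``clear'' and then observes that the jumps $\Delta K_s(\varphi)$ are bounded by $1/\varphi$, invoking the standard fact (Applebaum, Theorem~2.4.7) that a L\'evy process with bounded jumps has all moments. Your argument is a more detailed version of the same idea, making the Poisson-integral representation explicit and phrasing the moment conclusion via Sato's criterion on the support of $\nu_K$, but the core mechanism---bounded jumps imply finite moments---is identical.
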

\begin{proof}
That $(K_s(\vp))_{s \geq 0}$ is a L\'evy process is clear. Since 
\begin{equation}\label{eq:supJump}
\sup_{s \geq 0} |\Delta K_s(\vp)| = \sup_{s \geq 0} \frac{ (\Delta L_s)^2}{1 + \vp(\Delta L_s)^2} \leq \frac{1}{\vp} < \infty
\end{equation}
it follows that $(K_s(\vp))_{s \geq 0}$ has bounded jumps and therefore it has moments of all orders (see e.g. Theorem~2.4.7 of \cite{Applebaum09}). 
\end{proof}

For $p \geq 1 $ consider the sets
\begin{equation}\label{eq:def:Thk}
\Theta^{(p)} \subset \calm^{(p)} := \{ \bstheta \in \calm: \Psi_{\bm{\theta}}\big(p \big) < 0\},
\end{equation}
where $\Theta^{(p)}$ is compact, and recall from \eqref{eq:2.5} that the condition $\Psi_{\bm{\theta}}\big(p) < 0$ implicitly requires $\E |L_1|^{2p } < \infty$.


\begin{lemma}\label{le:yt:ineq}
Let $p \geq 1$ and $t \geq 0$ be fixed and consider the sets $\Theta^{(p)}$ and $\calm^{(p)}$ as in \eqref{eq:def:Thk}. Then
\begin{itemize}
\item[(a)] there exist numbers $\bstheta^{\ast}_1,\dots,\bstheta^{\ast}_N \in \calm^{(p)}$ such that
\begin{equation*}\label{eq:bsupEmYs}
\sup_{\bstheta \in \Theta^{(p)}} e^{-Y_t(\bm{\theta})} \leq  \sum_{j=1}^N e^{-Y_t(\bstheta^{\ast}_j)}.
\end{equation*}
\item[(b)]\emph{(Proposition~2 in \citet{Kluppelberg06})}
there exists some $\si^*>0$ such that $\si_0(\bstheta)\ge \si^*$ a.s. for all $\bstheta\in\Theta$.
\end{itemize}
\end{lemma}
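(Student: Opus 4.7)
For part~(a), for a fixed realisation of $L$, the definition~\eqref{eq:def:yt} gives
\[
e^{-Y_t(\bstheta)} = e^{-\eta t}\prod_{0<u\le t}\bigl(1 + \vp(\Delta L_u)^2\bigr),
\]
which depends on $\bstheta$ only through $\eta$ and $\vp$, is pathwise decreasing in $\eta$, and pathwise increasing in $\vp$. Hence on any closed box $B = [\beta^-,\beta^+]\times[\eta^-,\eta^+]\times[\vp^-,\vp^+]$ the supremum of $e^{-Y_t(\bstheta)}$ over $\bstheta\in B$ is dominated by its value at the corner $\bstheta^B := (\beta^+,\eta^-,\vp^+)$.

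The plan is to cover the compact set $\Theta^{(p)}$ by finitely many such closed boxes whose corners $\bstheta^B$ still lie in $\calm^{(p)}$, and then to set $\bstheta^{\ast}_j:=\bstheta^{B_j}$. Since $\Psi_\bstheta(p)$ is continuous in $\bstheta$ whenever $\E|L_1|^{2p}<\infty$ (which holds on $\calm^{(p)}$ by Lemma~4.1(a) of \citet{Kluppelberg04}), the set $\calm^{(p)}$ is open, so for each $\bstheta_0\in\Theta^{(p)}$ one finds a closed box $B(\bstheta_0)\subset\calm^{(p)}$ containing $\bstheta_0$. Compactness of $\Theta^{(p)}$ then yields a finite subcover $B_1,\dots,B_N$. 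For every $\bstheta\in\Theta^{(p)}$ there is some $j$ with $\bstheta\in B_j$, and the monotonicity observation implies
\[
e^{-Y_t(\bstheta)}\le e^{-Y_t(\bstheta^{\ast}_j)}\le \sum_{k=1}^N e^{-Y_t(\bstheta^{\ast}_k)};
\]
taking the supremum over $\bstheta\in\Theta^{(p)}$ gives~(a).

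Part~(b) is precisely Proposition~2 of \citet{Kluppelberg06}, and so I would simply cite it, noting only that the assumption $\bstheta\in\Theta\subset\calm$ supplies the stationarity hypothesis that result requires.

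The main obstacle is ensuring that each corner $\bstheta^{\ast}_j$ still belongs to $\calm^{(p)}$. This is where openness of $\calm^{(p)}$ is essential: decreasing $\eta$ and increasing $\vp$ relative to a point of $\Theta^{(p)}$ moves the parameter towards the boundary $\Psi_\bstheta(p)=0$, so the boxes must be chosen small enough that the corresponding corner does not cross that boundary. Continuity of $\Psi_\bstheta(p)$ in $(\eta,\vp)$ together with compactness of $\Theta^{(p)}$ makes such a choice possible and guarantees that the covering is finite.
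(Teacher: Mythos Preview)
Your proof is correct and follows essentially the same route as the paper's: a Heine--Borel covering of the compact $\Theta^{(p)}$ by sets contained in the open $\calm^{(p)}$, combined with the pathwise monotonicity of $Y_t(\bstheta)$ in $\eta$ and $\vp$ to bound $e^{-Y_t(\bstheta)}$ by its value at a suitable ``corner'' point, then summing over the finite cover. The paper phrases the cover in terms of open sets whose closures lie in $\calm^{(p)}$ rather than closed boxes, but this is purely cosmetic; part~(b) is handled identically by citation.
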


\begin{proof}
(a) We use a Heine-Borel argument to control the exponential term. Since $\Theta^{(p)}$ is compact we can find a finite collection of open sets $(\Theta^{(p)}_j)_{j=1}^N$ such that $\Theta^{(p)} \subseteq \cup_{j=1}^N \Theta^{(p)}_j \subset \calm^{(p)}$. For each fixed $j$ the closure $\ov{\Theta^{(p)}_j}$ is a subset of $\calm^{(p)}$ and therefore there exists a point $\bstheta_j^{\ast} = (\beta_j^{\ast},\eta_j^{\ast},\vp_j^{\ast})^\top \in \calm^{(p)}$ such that $\eta \geq \eta_j^{\ast}, \vp \leq \vp_j^{\ast}$ for all $\bstheta \in \Theta^{(p)}_j$.
This implies that for all $\bstheta \in \ov{\Theta^{(p)}_j}$:
\begin{equation*}\label{eq:YtHB}
Y_t(\bstheta) = \eta t - \sumzerout \log ( 1 + \vp (\Delta L_u)^2 ) \geq \eta_j^{\ast} t - \sumzerout \log ( 1 + \vp_j^{\ast} (\Delta L_u)^2 ) = Y_t(\bstheta_j^{\ast}), \quad t \geq 0.
\end{equation*}
Therefore, 
$$
\sup_{\bstheta \in \Theta^{(p)}} e^{-Y_t(\bm{\theta})}  \leq \sum_{j=1}^N e^{-Y_t(\bstheta^{\ast}_j)},
$$
proving the statement.
\end{proof}

\begin{remark}\label{rem:f}
Both $\hatbspi_{n,\yw}$ (by \eqref{eq:3.6}) and $\hatbspi_{n,\ls}$ (as in the proof of Proposition~5.6 in \citet{Fasen18II}) can be written as a map $g:\R^{r+2} \rightarrow \R^{r+2}$ for $r\ge 2$ {with $g(\bsx)=(g_1(\bsx),\dots,g_{r+2}(\bsx))$ for $\bsx=(x_1,\dots,x_{r+2})$ applied to the vector}
\begin{equation}\label{eq:thIICon1}
f_n(\bstheta) = \Big( \frac{1}{n} \sum_{i=1}^n G^2_i(\bm{\theta}), \frac{1}{n}\sum_{i=1}^{n-h} G^2_i(\bm{\theta})G^2_{i+h}(\bm{\theta}), h = 0,\dots,r \Big), \quad n\in\N.
 \end{equation}
Since $g$ involves only matrix multiplications and matrix inversion of non-singular matrices, {it inherits the smoothness properties of $G_i(\bstheta)$ for $i\in\N$. }
Since $(G_i(\bstheta))_{i \in \N}$ is stationary and ergodic, Birkhoff's ergodic theorem applies and $f_n(\bstheta)$ converges as $n\to\infty$ pointwise to
 \beam\label{eq:f}
 f(\bstheta) = (\E G_1^2(\bstheta), \E G_1^2(\bstheta)G_{1+h}^2(\bstheta),h=0,\dots,r).
 \eeam
\end{remark}

\begin{remark}\label{re:dim2The}
The results that follow are related to continuity and differentiability of the random elements $(G_i(\bstheta),\bstheta\in\bsTheta)$ for $i\in\N$ with respect to $\bstheta$.
According to \eqref{eq:def:sigt} and \eqref{eq:def:sig0} we find
$$
G_i(\bstheta) = \int_{(i-1)\Delta}^{i\Delta} \sigma_s((\beta,\eta,\vp)) \diff L_s = \sqrt{\beta} \int_{(i-1)\Delta}^{i\Delta} \sigma_s((1,\eta,\vp)) \diff L_s = \sqrt{\beta}\,G_i((1,\eta,\vp)),
$$
which is linear in $\sqrt{\beta}$, hence, $(G_i(\bstheta),\bstheta\in\bsTheta)$ is obviously continuous in $\beta$ and has a partial derivative with respect to  $\beta>0$.
\end{remark}

\subsection{Strong consistency of the IIE}

To ensure strong consistency of $\hat{\bm{\theta}}_{n,\ii}$, we need to verify that $\hat{\bspi}_n(\bstheta)$ satisfies the uniform SLLN of Proposition~\ref{th:ii:asymp}(a). 
The results of Lemma~\ref{le:mu:cov} and Theorem~\ref{th:lse:con} guarantee point-wise strong consistency. 
Uniform strong consistency  will hold by continuity of $g$ (cf. Lemma~\ref{le:pi_f}), if we can apply a uniform SLLN to the sequence in \eqref{eq:thIICon1}.
 
Since the sequence of random elements $(G_i(\bstheta), \bstheta \in \Theta)_{i \in \N}$ is stationary and ergodic, we need to show (cf. Theorem~7 in \citet{Straumann06}) that   $G_i(\bstheta)$ is for every $i \in \N$ a continuous function of $\bstheta$ on $\Theta$ or on some compact subset $\Theta^{(p)}$ of $\Theta$ and that
\begin{equation*}
\E  \sup_{\bstheta \in \Theta^{(p)}} G^4_i(\bstheta)  < \infty.
\end{equation*}
Proving that $G_i(\bstheta)$ is $\omega$-wise continuous in its parameter $\bstheta$ is not straightforward, since
\begin{equation*}
G_i(\bstheta) = \int_{(i-1)\Delta}^{i\Delta} \sigma_s(\bm{\theta}) \diff L_s
\end{equation*}
is a stochastic integral, driven by an arbitrary L\'evy process, which also drives the stochastic volatility process.
If $L$ has finite variation, we can use dominated convergence to show continuity, but this is not possible when $L$ has infinite variation sample paths; cf. Remark~\ref{finitevar} below.
However, as we shall show in the next result, applying Kolmogorov's continuity criterion, we can always find a version $(\gcont_i(\bstheta))_{i \in \N}$ of the sequence $(G_i(\bstheta))_{i \in \N}$, which is continuous on a possibly smaller compact parameter space $\Theta^{(p)} \subseteq \Theta$ for $\Theta$.

\begin{theorem}[H\"older continuity]\label{th:cog:cont}
Assume that $\E |L_1|^{2p(1+\epsilon)} < \infty$ for some $p > 2$ and $\epsilon > 0$. Then there exists a version $(\gcont_i(\bstheta))_{i \in \N}$ of the random elements $(G_i(\bstheta))_{i \in \N}$ which is H\"older continuous of every order $\gamma\in [0,(p-2)/(2p))$ on $\ThetaHCone$ as defined in \eqref{eq:def:Thk}.
Additionally, for every $q \in [0,2p)$, $i \in \N$, and for 
\begin{equation}\label{eq:DefUi}
U_i = \supThetaOnePLoc \frac{ |\gcont_i(\bstheta_1) - \gcont_i(\bstheta_2)|}{\|\bstheta_1 - \bstheta_2 \|^{\gamma}}
\end{equation}
we have
\begin{equation}\label{eq:Esu2}
\E  \sup_{\bstheta \in \ThetaHCone} |\gcont_i(\bstheta)|^q  < \infty \quad \text{and} \quad \E  U_i^q  < \infty.
\end{equation}
\end{theorem}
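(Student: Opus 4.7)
The plan is to invoke Kolmogorov's continuity criterion (in the version that simultaneously provides a continuous modification and integrable moments of the H\"older seminorm) applied to the random field $\bstheta \mapsto G_i(\bstheta)$ on the three-dimensional compact set $\ThetaHCone$. Both conclusions of the theorem will follow from a single moment estimate of the form
\beao
\E |G_i(\bstheta_1)-G_i(\bstheta_2)|^{2p(1+\epsilon')}\le C\,\|\bstheta_1-\bstheta_2\|^{2p(1+\epsilon')}
\eeao
(for some $\epsilon'>0$) uniformly over $\bstheta_1,\bstheta_2\in\ThetaHCone$, the exponent budget $2p(1+\epsilon')$ against the ambient dimension $3$ producing the advertised H\"older range $\gamma<(p-2)/(2p)$ after the standard chaining bookkeeping and the Lyapunov step. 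As a preliminary reduction, Remark~\ref{re:dim2The} shows that $G_i(\bstheta)=\sqrt{\beta}\,G_i((1,\eta,\vp))$ is $C^\infty$ in $\beta$ with Lipschitz constant bounded on compact subsets of $(0,\infty)$, so it is enough to establish H\"older continuity jointly in $(\eta,\vp)$.

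To obtain the moment estimate I would write
\beao
G_i(\bstheta_1)-G_i(\bstheta_2)=\int_{(i-1)\Delta}^{i\Delta}\bigl(\sigma_s(\bstheta_1)-\sigma_s(\bstheta_2)\bigr)\,dL_s
\eeao
and apply a Kunita / Burkholder--Davis--Gundy inequality for L\'evy stochastic integrals to reduce the task to an $L^r$-bound on the volatility difference for $r$ slightly larger than $2p$. The latter is controlled by differentiating the representation \eqref{eq:def:sigt}: the parameter enters $\sigma_s^2(\bstheta)$ only through the driver $Y_s(\bstheta)$, which is a.s.\ smooth in $(\eta,\vp)$ with derivatives $\partial_\eta Y_s=s$ and $\partial_\vp Y_s=-K_s(\vp)$ for $K_s$ as in \eqref{eq:def_ds}. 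A mean-value argument combined with the elementary bound $|\sqrt{a}-\sqrt{b}|\le|a-b|/(2\sqrt{a\wedge b})$ and the uniform lower bound $\sigma_0(\bstheta)\ge\sigma^\ast$ from Lemma~\ref{le:yt:ineq}(b) then yields the pathwise estimate
\beao
|\sigma_s(\bstheta_1)-\sigma_s(\bstheta_2)|\le \|\bstheta_1-\bstheta_2\|\,R_s,
\eeao
in which $R_s$ is an envelope built from $\sigma_0^2(\bstheta)$, $\int_0^s e^{-Y_u(\bstheta)}du$ and $K_s(\vp)$. Lemma~\ref{le:yt:ineq}(a) lets us majorize this envelope, uniformly in $\bstheta\in\ThetaHCone$, by a finite sum of the same expressions evaluated at reference points $\bstheta_j^\ast\in\calm^{(p(1+\epsilon))}$.

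The hypotheses $\E|L_1|^{2p(1+\epsilon)}<\infty$ and $\Psi_{\bstheta_j^\ast}(p(1+\epsilon))<0$, combined with the moment characterization \eqref{eq:2.5} and the bounded-jump inequality \eqref{eq:supJump} for $K$, ensure that $\E R_s^r<\infty$ for some $r=2p(1+\epsilon')$ with $\epsilon'>0$; a H\"older inequality then closes the moment estimate. Feeding the bound into Kolmogorov's criterion produces the modification $\gcont_i(\bstheta)$ and the bound $\E U_i^q<\infty$ for every $q<2p$; the sup-moment in \eqref{eq:Esu2} follows by fixing $\bstheta_0\in\ThetaHCone$, covering the compact parameter set by finitely many balls of diameter smaller than $1$, chaining through them to bound $\sup_{\bstheta}|\gcont_i(\bstheta)|$ by $|\gcont_i(\bstheta_0)|$ plus a finite multiple of $U_i$, and invoking the pointwise moment bound $\E|G_i(\bstheta_0)|^q<\infty$ from Proposition~\ref{th:haug} and \eqref{eq:2.5}. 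The main obstacle is engineering the envelope $R_s$ so its integrability uses exactly the given moment/stability assumptions: this requires a careful coupling of the parameter-dependent stationary initial values $\sigma_0^2(\bstheta)\eqd\beta\int_0^\infty e^{-Y_u(\bstheta)}du$ across $\bstheta\in\ThetaHCone$ so that their supremum is controlled through finitely many applications of Lemma~\ref{le:yt:ineq}(a) without loss of the required moments.
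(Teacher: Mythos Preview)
Your overall strategy---reduce to $(\eta,\vp)$ via Remark~\ref{re:dim2The}, control $\E|G_i(\bstheta_1)-G_i(\bstheta_2)|^\alpha$ through a BDG/Protter bound on the stochastic integral, then feed into Kolmogorov---is exactly the paper's. But your moment bookkeeping breaks down at the key step, and the gap is not cosmetic.

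You assert that the envelope $R_s\approx\sup_{\bstheta}\|\gradtwo\sigma_s(\bstheta)\|$ satisfies $\E R_s^{r}<\infty$ for some $r=2p(1+\epsilon')$. Under the standing hypothesis $\E|L_1|^{2p(1+\epsilon)}<\infty$ and $\bstheta\in\ThetaHCone$, Lemma~\ref{le:bEsupSnB} (with $k=1$) delivers only $\E\sup_{\bstheta}\|\gradtwo\sigma_0^2(\bstheta)\|^{p}<\infty$; obtaining the $2p$-th moment would require the stronger assumption $\E|L_1|^{4p(1+\epsilon)}<\infty$ and the smaller parameter set $\ThetaHCtwo$ (this is precisely the hypothesis of Lemma~\ref{le:cog:diff}, not of the present theorem). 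With only $p$ moments of the envelope available, your route yields at best $\E|G_i(\bstheta_1)-G_i(\bstheta_2)|^{p}\le C\|\bstheta_1-\bstheta_2\|^{p}$, and then Kolmogorov gives $\E U_i^q<\infty$ and $\E\sup|\gcont_i|^q<\infty$ only for $q<p$, not the claimed range $q<2p$ in \eqref{eq:Esu2}. (Your arithmetic ``exponent $2p(1+\epsilon')$ against dimension $3$ gives $\gamma<(p-2)/(2p)$'' is also inconsistent: after the reduction you are in dimension $2$, and in any case that ratio does not produce $(p-2)/(2p)$.)

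The device you are missing is the elementary inequality $|a-b|^{2p}\le|a^2-b^2|^p$ for $a,b\ge 0$. Applying it to $\sigma_s$ gives
\[
\E|\sigma_s(\bstheta_1)-\sigma_s(\bstheta_2)|^{2p}\le\E|\sigma_s^2(\bstheta_1)-\sigma_s^2(\bstheta_2)|^{p}\le\Big(\E\sup_{\bstheta}\|\gradtwo\sigma_0^2(\bstheta)\|^{p}\Big)\|\bstheta_1-\bstheta_2\|^{p},
\]
so that the \emph{left} exponent in Protter's inequality is $2p$ (hence $q<2p$ in Kolmogorov/Schilling) while the \emph{right} side consumes only $p$ moments of $\gradtwo\sigma_0^2$, which is exactly what Lemma~\ref{le:bEsupSnB} provides under $\E|L_1|^{2p(1+\epsilon)}<\infty$. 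The resulting Kolmogorov bound $\E|G_1(\bstheta_1)-G_1(\bstheta_2)|^{2p}\le c\|\bstheta_1-\bstheta_2\|^{p}$ in dimension $d=2$ gives precisely the stated range $\gamma<(p-2)/(2p)$. Your square-root bound $|\sqrt a-\sqrt b|\le|a-b|/(2\sqrt{a\wedge b})$ goes the wrong way for this purpose: it trades one power of $\sigma^2$-difference for one power of $\sigma$-difference, whereas the paper's inequality trades one for two.
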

\begin{proof}
Without loss of generality we prove this for $i = 1$. We find a continuous version of the random element $G_1(\bstheta)$ on $\ThetaHCone$. 
We first prove continuity with respect to $(\eta,\vp)$ and assume that $\bstheta_1, \bstheta_2 \in \ThetaHCone$ with $\beta_1,\beta_2 = 1$. Using the simple inequality $|a-b|^{2p} \leq |a^2-b^2|^p$, the stationarity of $\sigma_0(\bstheta)$ in Lemma~\ref{le:sigt:dist}, its differentiability in \eqref{eq:c.1} and the mean value theorem gives
\begin{equation}\label{eq:E:dif:sig}
\begin{split}
\int_0^{\Delta} \E  |  \sigma_s(\bstheta_1) -  \sigma_s(\bstheta_2)|^{2p} \, \diff s & \leq 
\int_0^{\Delta} \E  |  \sigma^2_s(\bstheta_1) -  \sigma^2_s(\bstheta_2)|^p \\
&  = \Delta \E |  \sigma^2_0(\bstheta_1) -  \sigma^2_0(\bstheta_2)|^p  \\
& \leq \Delta \Big( \E \sup_{\bstheta \in \ThetaHCone} | \gradtwo \sigma^2_0(\bstheta) |^p \Big)   \|\bstheta_1 - \bstheta_2\|^{p}  < \infty
\end{split}
\end{equation}
by Lemma~\ref{le:bEsupSnB} with $k = 1$. By (A2) of Proposition~\ref{th:haug}, $(L_t)_{t \geq 0}$ is a martingale. Since $\E |L_1|^{2p} < \infty$ and $\int_0^{\Delta} \E |  \sigma_s(\bstheta_1) -  \sigma_s(\bstheta_2)|^{2p}  \diff s < \infty$ we can apply Theorem~66 of Ch.~5 in \citet{Protter90} to the stochastic integral in \eqref{eq:def:Pt} and obtain
$$
\E |G_1(\bstheta_1) - G_1(\bstheta_2)|^{2p}
=  \E \Big|  \int_0^{\Delta} ( \sigma_s(\bm{\theta}_1) -  \sigma_s(\bm{\theta}_2) ) \diff L_s  \Big|^{2p}
\leq c^{\ast} \int_0^{\Delta} \E |  \sigma_s(\bstheta_1) -  \sigma_s(\bstheta_2)|^{2p} \diff s,
$$
where $c^{\ast}$ is a positive constant. 
This combined with \eqref{eq:E:dif:sig} gives 
\begin{equation}\label{eq:thcogco1}
\E |G_1(\bstheta_1) - G_1(\bstheta_2)|^{2p} \leq c{\Delta}\|\bstheta_1 - \bstheta_2\|^{p},
\end{equation}
where $c = c^\ast\Delta\E \sup_{\bstheta \in \ThetaHCone} | \gradtwo \sigma^2_0(\bstheta) |^p$. Since $\beta_1,\beta_2=1$ we show continuity with respect to $(\eta,\vp)$; i.e. the parameter space has dimension $d=2$. 
Since $p >  2=d$ we can apply Kolmogorov's continuity criterion (Theorem 10.1 in \citet{Schilling14BM}, see also Theorem~2.5.1 of Ch. 5 in \citet{Khoshnevi02}).
Then there exists a version $(G_1^{(c)}(\bstheta),\bstheta\in\ThetaHCone)$ of $(G_1(\bstheta),\bstheta\in \ThetaHCone)$ which is H\"older continuous of every order $\gamma\in [0,(p-2)/(2p))$; hence, also continuous. 
Since $\Theta$ is compact, Lemma~\ref{le:kol_mo} together with \eqref{eq:thcogco1} gives $\E  \sup_{\bstheta \in \ThetaHCone} |\gcont_1(\bstheta)|^q  < \infty$ for every $q \in [0,2p)$. Finally, the second expectation in \eqref{eq:Esu2} is finite by Theorem 10.1 in \cite{Schilling14BM}.

Now because of Remark~\ref{re:dim2The}, $\gcont_1(\bstheta)$ is linear in $\sqrt{\beta}$ and therefore the results can be generalized for the map $\bstheta \mapsto \gcont_1(\bstheta)$ on $\ThetaHCone$. Indeed, let $\beta^\ast$ be as in \eqref{eq:betaUas} and 
\begin{equation*}\label{eq:dfbelowa}
\beta_\ast = \inf \{\beta > 0: (\beta,\eta,\vp) \in \Theta\} > 0.
\end{equation*}
 Now, for arbitrary $\bstheta_1,\bstheta_2 \in \ThetaHCone$ we can use Remark~\ref{re:dim2The}, the mean value theorem for $\beta \mapsto \sqrt{\beta}$ and the H\"older continuity of order $\gamma$ of $(\eta,\vp) \mapsto \gcont_1((1,\eta,\vp))$, the definition of the $\ell^1$-norm and the fact that $\gamma \in (0,1)$ to get
\begin{equation}\label{eq:230}
\begin{split}
\quad\quad\quad & |\gcont_1(\bstheta_1) - \gcont_1(\bstheta_2)| \\
& \leq |\gcont_1((1,\eta_1,\vp_1)) - \gcont_1((1,\eta_2,\vp_2))| \sqrt{\beta^\ast} + \frac{1}{2\sqrt{\beta_{\ast}}}|\beta_1 - \beta_2|\sup_{\bstheta \in \ThetaHCone} |\gcont_1(\bstheta)|  \\
& \leq K \|(\eta_1,\vp_1) - (\eta_2,\vp_2)\|^\gamma \sqrt{\beta^\ast} + \frac{1}{2\sqrt{\beta_{\ast}}}|\beta_1 - \beta_2|^{\gamma} |2\beta^\ast|^{1-\gamma}  \sup_{\bstheta \in \ThetaHCone} |\gcont_1(\bstheta)|  \\
& \leq \|\bstheta_1 - \bstheta_2\|^\gamma \Big( K\sqrt{\beta^\ast} +   \frac{1}{2\sqrt{\beta_{\ast}}} |2\beta^\ast|^{1-\gamma} \sup_{\bstheta \in \ThetaHCone} |\gcont_1(\bstheta)|  \Big),
\end{split}
\end{equation}
showing the H\"older continuity of $\bstheta \mapsto \gcont_1(\bstheta)$ on $\ThetaHCone$. Now, the first expectation in \eqref{eq:Esu2} is finite since $|\gcont_1(\bstheta)| \leq \beta^\ast |\gcont_1((1,\eta,\vp))|$. Now let $\bstheta_1,\bstheta_2 \in \ThetaHCone$ be such that $0 < \|\bstheta_1-\bstheta_2\| < 1$.  Using the inequality at the first line of \eqref{eq:230} and the definition of the $\ell^1$-norm gives
\begin{equation}\label{eq:222}
\begin{split}
\quad\quad & \supThetaOnePLoc \frac{|\gcont_1(\bstheta_1) - \gcont_1(\bstheta_2)|}{\|\bstheta_1 - \bstheta_2\|^\gamma} \\
 & \leq \Bigg( \supThetaOnePLoc \frac{|\gcont_1((1,\eta_1,\vp_1) - \gcont_1((1,\eta_2,\vp_2)|}{\|\bstheta_1 - \bstheta_2\|^\gamma}  \Bigg) \sqrt{\beta^\ast}\\
 & \quad\quad\quad\quad\quad\quad\quad\quad + \supThetaOnePLoc \frac{|\beta_1 - \beta_2|}{2\sqrt{\beta_{\ast}}\|\bstheta_1-\bstheta_2\|^\gamma} |\gcont_1(\bstheta)| \\
 & \leq \Bigg( \supThetaOnePLoc \frac{|\gcont_1((1,\eta_1,\vp_1) - \gcont_1((1,\eta_2,\vp_2)|}{\|(\eta_1,\vp_1) - (\eta_2,\vp_2)\|^\gamma}  \Bigg) \sqrt{\beta^\ast} + \frac{1}{2\sqrt{\beta_{\ast}}}\sup_{\bstheta \in \ThetaHCone} |\gcont_1(\bstheta)| \\ 
\end{split}
\end{equation}
Applying the supremum and raising both sides of \eqref{eq:222} to the power $q$ gives the result.
\end{proof}

\begin{remark}\label{re:cont:G}
In view of Theorem~\ref{th:cog:cont} we will from now on work with a  continuous version of the  returns $(G_i(\bstheta),\bstheta\in\ThetaHCone)_{i \in \N}$.
\end{remark}

\begin{theorem}[Strong consistency of the IIE]\label{th:ii:cons}
Assume that $\E |L_1|^{2p(1+\epsilon)} < \infty$ for some $p > 2$ and $\epsilon > 0$ and let $(G_i(\bstheta_0))_{i=1}^n$ be the  returns \eqref{eq:def:Gi} with parameter $\bstheta_0 \in \ThetaHCone$ from \eqref{eq:def_ds}. 
Suppose that the auxiliary AR$(r)$ model for $r\ge 2$ is estimated by the LSE or the YWE of Definition \ref{de:ywlse}. Then 
\begin{equation*}
\hat{\bm{\theta}}_{n,\ii} \overset{\text{a.s.}}{\rightarrow} \bm{\theta}_0, \quad \nto.
\end{equation*}
\end{theorem}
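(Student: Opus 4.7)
The plan is to verify the single hypothesis needed by Proposition~\ref{th:ii:asymp}(a), namely the uniform SLLN
\begin{equation*}
\sup_{\bstheta \in \ThetaHCone} \| \hat{\bspi}_n(\bstheta) - \bspi_{\bstheta} \| \stas 0, \quad \nto.
\end{equation*}
To do this I would invoke the decomposition in Remark~\ref{rem:f}: both the LSE and YWE can be written as $\hat{\bspi}_n(\bstheta) = g(f_n(\bstheta))$, where $f_n$ is the sample-moment vector in \eqref{eq:thIICon1} and $g$ is a smooth function (involving only matrix products and inversion of a matrix that remains non-singular on a neighbourhood of $f(\bstheta)$, uniformly in $\bstheta$, thanks to the non-degeneracy established in the proof of Proposition~\ref{pr:cog:aux}). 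Since $\ThetaHCone$ is compact, $f(\Theta^{(p(1+\epsilon))})$ sits inside a compact set on which $g$ is uniformly continuous, so a uniform SLLN for $f_n$ transfers immediately to one for $\hat{\bspi}_n$, with limit $g(f(\bstheta)) = \bspi_{\bstheta}$.

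Next I would pass, via Remark~\ref{re:cont:G}, to the continuous version $(\gcont_i(\bstheta), \bstheta \in \ThetaHCone)_{i\in\N}$ furnished by Theorem~\ref{th:cog:cont}. By stationarity of $(\sigma_s(\bstheta))_{s \geq 0}$ and stationarity of the increments of $(P_t(\bstheta))_{t \geq 0}$ (Corollary~3.1 of \citet{Kluppelberg04}), and since ergodicity transfers along measurable transformations of the driving L\'evy process, the sequence $(\gcont_i(\cdot))_{i\in\N}$, viewed as a stationary process in the Banach space $C(\ThetaHCone,\R)$ equipped with the sup-norm, is stationary and ergodic. The same then holds for the vector-valued sequence of functions $\bstheta \mapsto (\gcont_i(\bstheta)^2,\gcont_i(\bstheta)^2\gcont_{i+h}(\bstheta)^2)$ entering $f_n$.

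With this setup I would apply the uniform ergodic theorem for stationary ergodic sequences in $C(\ThetaHCone)$ (Theorem~7 in \citet{Straumann06}). Its two hypotheses are precisely (i) continuity in $\bstheta$, which is the content of Theorem~\ref{th:cog:cont}, and (ii) the integrability bound
\begin{equation*}
\E \sup_{\bstheta \in \ThetaHCone} \gcont_i(\bstheta)^4 < \infty \quad \text{and} \quad \E \sup_{\bstheta \in \ThetaHCone} \gcont_i(\bstheta)^2 \gcont_{i+h}(\bstheta)^2 < \infty,
\end{equation*}
for $h=0,\dots,r$. The first bound follows from \eqref{eq:Esu2} with $q=4<2p$ (allowed because $p>2$); the second follows by Cauchy--Schwarz from the first. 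Hence $\sup_{\bstheta\in\ThetaHCone}\|f_n(\bstheta)-f(\bstheta)\|\stas 0$, and composing with $g$ yields the uniform SLLN. Proposition~\ref{th:ii:asymp}(a) then gives $\hat{\bm{\theta}}_{n,\ii}\stas\bstheta_0$.

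The main obstacle is the continuity-in-parameter issue: one really needs to work with the H\"older-continuous version $\gcont_i$ rather than $G_i$ itself, and to verify that the sup over $\bstheta$ of $\gcont_i$ has enough moments to feed into the uniform ergodic theorem. Both points have already been dealt with in Theorem~\ref{th:cog:cont} via Kolmogorov's continuity criterion, so in this theorem the work reduces to assembling those ingredients and invoking the continuous-mapping argument for $g$.
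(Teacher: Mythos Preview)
Your proposal is correct and follows essentially the same route as the paper: reduce to the uniform SLLN required by Proposition~\ref{th:ii:asymp}(a), write $\hat\bspi_n=g\circ f_n$ via Remark~\ref{rem:f}, pass to the continuous version from Theorem~\ref{th:cog:cont}, apply Theorem~7 of \citet{Straumann06} after checking the moment bound $\E\sup_{\bstheta}G_i^4(\bstheta)<\infty$ (via \eqref{eq:Esu2} with $q=4<2p$) and Cauchy--Schwarz for the cross terms, and then transfer uniform convergence through $g$. The paper packages this last step as Lemma~\ref{le:pi_f}, which is exactly the uniform-continuity argument you sketch.
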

\begin{proof}
According to Proposition~\ref{th:ii:asymp}(a), strong consistency of the IIE will follow if as $\nto$,
$$
\sup_{\bm{\theta} \in \ThetaHCone}  \| \hatbspi_{n,\ls}(\bm{\theta}) - \bspi_{\bstheta}  \| \overset{\text{a.s.}}{\rightarrow} 0 \quad \text{and} \quad
\sup_{\bm{\theta} \in \ThetaHCone}  \| \hatbspi_{n,\yw}(\bm{\theta}) - \bspi_{\bstheta}  \| \overset{\text{a.s.}}{\rightarrow} 0.
$$
By Remark~\ref{rem:f} and  Lemma~\ref{le:pi_f} it suffices to prove that 
\beam\label{supf}
\sup_{\bstheta \in \ThetaHCone} \|f_n(\bstheta) - f(\bstheta) \| \stas 0,\quad \nto,
\eeam
for $f_n$ and $f$ as defined in \eqref{eq:thIICon1} and \eqref{eq:f}, respectively.
The Cauchy-Schwarz inequality gives for every $h \in \N_0$, 
\beam\label{eq:supgigh}
\E \sup_{\bm{\theta} \in \ThetaHCone} G^2_1(\bm{\theta})G^2_{1+h}(\bm{\theta})  
& \le & \Big( \E  \sup_{\bm{\theta} \in \ThetaHCone} G^4_1(\bm{\theta}) \Big)^{\frac{1}{2}}
\Big( \E  \sup_{\bm{\theta} \in \ThetaHCone} G^4_{1+h}(\bm{\theta}) \Big)^{\frac{1}{2}} < \infty.
\eeam
The right-hand side of \eqref{eq:supgigh} is finite by Theorem~\ref{th:cog:cont}. It also follows from the same theorem that $\E  \sup_{\bm{\theta} \in \ThetaHCone} G^2_1(\bm{\theta}) < \infty$ and, hence, by Theorem~7 in \citet{Straumann06} the uniform SLLN holds and we obtain for all $h \in \N_0$ as $\nto$,
\begin{equation}\label{eq:thIICon2}
\begin{split}
& \sup_{\bm{\theta} \in \ThetaHCone} \Big | \frac{1}{n} \sum_{i=1}^n G^2_i(\bm{\theta}) - \E  G^2_1(\bm{\theta})  \Big |  \overset{\text{a.s.}}{\rightarrow}  0 \quad \text{and} \\
& \sup_{\bm{\theta} \in \ThetaHCone} \Big | \frac{1}{n} \sum_{i=1}^n G^2_i(\bm{\theta})G^2_{i+h}(\bm{\theta}) - \E  G^2_1(\bm{\theta})G^2_{1+h}(\bm{\theta})  \Big | \overset{\text{a.s.}}{\rightarrow} 0.
\end{split}
 \end{equation}
Hence \eqref{supf} follows from \eqref{eq:thIICon2} 
finishing the proof.
\end{proof}

\begin{remark}\label{finitevar}
If the L\'evy process $(L_t)_{t\geq 0 }$ has finite variation sample paths, then the stochastic integral in \eqref{eq:def:Gi} can be treated pathwise as a Riemann-Stieltjes integral, such that continuity of $(G_i(\bstheta),\bstheta\in \Theta)_{i \in \N}$ follows from Lemma~\ref{le:yt:ineq}(c) and dominated convergence. Therefore, Theorem~\ref{th:cog:cont} is valid for $\bstheta_0 \in \Theta \supseteq$ $\ThetaHCone$ for $p > 2$ and some $\epsilon > 0$.
Additionally, since the total variation process is also a L\'evy process we can use Theorem~66 of Ch.~5 in \citet{Protter90} to show that $\E L^4_1 < \infty$ implies $\E  \sup_{\bstheta \in \Theta}G^4_i(\bstheta) < \infty$ for all $i \in \N$. Therefore, also Theorem~\ref{th:ii:cons} is valid for $\bstheta_0 \in \Theta \supseteq \ThetaHCone$.
\end{remark}

%
%
\subsection{Asymptotic normality of the IIE}

In order to prove asymptotic normality of the IIE, we need to verify the conditions (b.1), (b.2) and (b.3) of Proposition~\ref{th:ii:asymp}. 
We recall that (b.2) has been proved in Theorem~\ref{th:ar:norm}, and it remains to prove (b.1) and (b.3), which are related to the smoothness of $\hatbspi_n(\bstheta)$ as a function of $\bstheta$.

\subsubsection{Differentiability properties of $(G_i(\bstheta),\bstheta\in\ThetaHCone)$}

Condition (b.1),  refers to the differentiability of the map $\hatbspi_n(\bstheta)$.
By Remark~\ref{rem:f} and the chain rule we only need to prove differentiability of $G_i(\bstheta)$ with respect to $\bstheta$ for every $i \in \N$. 
Since $G_i(\bstheta)$ is defined in terms of a stochastic integral we can not simply interchange the order of  the Riemann differentiation and the stochastic integration, however,  under appropriate regularity conditions formulated in \citet{James84} this is possible. 

We start by investigating the candidate for the differential of $(G_i(\bstheta),\bstheta\in\ThetaHCone)$ with $\ThetaHCone$ as in \eqref{eq:def:Thk}, namely the map

\begin{equation}\label{eq:def:grdG}
\bstheta \mapsto  \int_{(i-1)\Delta}^{i\Delta} \gradtheta \sigma_s(\bstheta) \diff L_s := \gradtheta G_i(\bstheta).
\end{equation}
We show in Lemma~\ref{le:cog:diff} that we can find a version of the integral on the rhs, which is continuous on a subset $\ThetaHCtwo$ of $\ThetaHCone$. 
Then, Theorem~\ref{le:cog:dif2} asserts that $G_i(\bstheta)$ is differentiable on $\ThetaHCtwo$ and that its differential is indeed given by \eqref{eq:def:grdG}.

\begin{lemma}[H\"older continuity of derivatives]\label{le:cog:diff}
Assume that $\E |L_1|^{4p(1+\epsilon)} < \infty$ for some $p > 2$ and $\epsilon > 0$. Then there exists a version $(\gradtheta \gcont_i(\bstheta))_{i \in \N}$ of the random elements $(\gradtheta G_i(\bstheta))_{i \in \N}$ which is H\"older continuous of every order $\gamma\in [0,(p-2)/p)$ on $\ThetaHCtwo$ as defined in \eqref{eq:def:Thk}.
Additionally, for every $q \in [0,p)$, $i \in \N$, $l \in \{1,2,3\}$, and for 
\begin{equation}\label{eq:DefVi}
V_i = \supThetaTwoPLoc \frac{ |\partiall G_i(\bstheta_1) - \partiall G_i(\bstheta_2)|}{\|\bstheta_1 - \bstheta_2 \|^{\gamma}}
\end{equation}
we have
\begin{equation*}\label{eq:EsuVi}
\E  \sup_{\bstheta \in \ThetaHCtwo} \Big|\partiall G_i(\bstheta)\Big|^q  < \infty \quad \text{and} \quad \E  V_i^q  < \infty.
\end{equation*}
\end{lemma}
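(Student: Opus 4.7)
The plan is to mirror the strategy of Theorem \ref{th:cog:cont}, applied now to each partial derivative $\partiall G_1(\bstheta)$ in place of $G_1(\bstheta)$. By Remark \ref{re:dim2The} the case $l$ corresponding to $\beta$ is essentially trivial since $G_1((\beta,\eta,\vp)) = \sqrt{\beta}\,G_1((1,\eta,\vp))$, so that $\partial_\beta G_1(\bstheta) = G_1((1,\eta,\vp))/(2\sqrt{\beta})$, and its H\"older continuity on $\ThetaHCtwo \subseteq \ThetaHCone$ follows directly from Theorem \ref{th:cog:cont} together with the fact that the parameter space is bounded away from $\beta = 0$. Hence the real work lies in the partial derivatives with respect to $\eta$ and $\vp$, and we may fix $\beta_1=\beta_2=1$ and $i=1$ without loss of generality.

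The key bound to establish is: for $\bstheta_1,\bstheta_2\in\ThetaHCtwo$,
\begin{equation*}
\E \bigl|\partiall G_1(\bstheta_1) - \partiall G_1(\bstheta_2)\bigr|^{p} \leq c\,\|\bstheta_1 - \bstheta_2\|^{p},
\end{equation*}
with $p>2$. The chain of arguments is: first, formally $\partiall G_1(\bstheta) = \int_0^\Delta \partiall \sigma_s(\bstheta)\, \diff L_s$; using assumption (A2) of Proposition \ref{th:haug} so that $L$ is a martingale, together with $\E|L_1|^{p}<\infty$, Theorem 66 of Ch.~5 in \citet{Protter90} yields
\begin{equation*}
\E\bigl|\partiall G_1(\bstheta_1) - \partiall G_1(\bstheta_2)\bigr|^{p} \le c^{\ast} \int_0^{\Delta} \E \bigl|\partiall\sigma_s(\bstheta_1) - \partiall\sigma_s(\bstheta_2)\bigr|^{p} \diff s.
\end{equation*}
Second, by Lemma \ref{le:sigt:dist} (applied to the joint law of derivatives in $\bstheta$) the integrand is independent of $s$ in distribution, so it equals $\Delta\, \E |\partiall \sigma_0(\bstheta_1)-\partiall \sigma_0(\bstheta_2)|^p$. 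Third, a mean value theorem applied along the segment from $\bstheta_1$ to $\bstheta_2$ gives
\begin{equation*}
\E \bigl|\partiall\sigma_0(\bstheta_1) - \partiall\sigma_0(\bstheta_2)\bigr|^{p} \leq \Bigl(\E \sup_{\bstheta \in \ThetaHCtwo} \bigl|\gradtwo\,\partiall\sigma_0(\bstheta)\bigr|^{p} \Bigr)\,\|\bstheta_1-\bstheta_2\|^{p}.
\end{equation*}
Finiteness of this supremum-moment is exactly the point where the stronger moment assumption $\E|L_1|^{4p(1+\epsilon)}<\infty$ enters: differentiating the explicit formula \eqref{eq:def:sigt} twice in $(\eta,\vp)$ produces polynomials in $s$, $Y_s(\bstheta)$ and the L\'evy-type functional $K_s(\vp)$ from \eqref{eq:def_ds}, multiplied by $e^{-Y_{t-}(\bstheta)}$ and integrals of $e^{Y_s(\bstheta)}$; control of the $p$-th moment of products of these quantities requires $2p$-th moments of the first-derivative building blocks, which is why the moment requirement essentially doubles compared to Theorem \ref{th:cog:cont}. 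I would invoke the corresponding appendix lemma (analogous to Lemma \ref{le:bEsupSnB} but for $k=2$), whose proof combines Lemma \ref{le:yt:ineq}(a)-(b) with Minkowski and H\"older inequalities to dominate these terms.

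With the moment bound in hand, applying Kolmogorov's continuity criterion (Theorem 10.1 in \citet{Schilling14BM}) on the effectively two-dimensional parameter space $\ThetaHCtwo$ produces a version $(\gradtheta \gcont_1(\bstheta))$ that is H\"older continuous of every order $\gamma < (p-2)/p$, which is precisely the range announced. The bound $\E \sup_{\bstheta \in \ThetaHCtwo} |\partiall G_1(\bstheta)|^q < \infty$ for $q\in[0,p)$ then follows from Lemma \ref{le:kol_mo} applied to the moment inequality above plus a pointwise moment estimate at a single $\bstheta$, while $\E V_1^q < \infty$ is the H\"older-seminorm moment statement built into Kolmogorov's criterion. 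Finally, stationarity of increments of $L$ propagates the result from $i=1$ to all $i\in\N$. The main obstacle is step (3), namely bounding $\E \sup_{\bstheta} |\gradtwo\,\partiall \sigma_0(\bstheta)|^{p}$: it requires careful bookkeeping of the multiple derivatives of $\sigma^2_0(\bstheta)$ and a Heine-Borel style argument as in Lemma \ref{le:yt:ineq}(a) to remove the supremum in $\bstheta$ at the cost of replacing $e^{-Y_s(\bstheta)}$ by a finite sum $\sum_j e^{-Y_s(\bstheta_j^{\ast})}$, whose $p$-th moments are then finite thanks to $\Psi_{\bstheta_j^{\ast}}(p)<0$ for $\bstheta_j^{\ast}\in\calm^{(2p(1+\epsilon))}$.
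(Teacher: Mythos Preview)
Your proposal is correct and follows essentially the same route as the paper: reduce to $i=1$ and $\beta=1$ via Remark~\ref{re:dim2The}, apply Protter's Theorem~66 to $\int_0^\Delta \partiall\sigma_s(\bstheta)\,\diff L_s$, use stationarity of $\gradtwo\sigma_s(\bstheta)$ plus the mean value theorem together with Lemma~\ref{le:bEsupSnB} for $k=2$ to obtain the Kolmogorov increment bound, and then conclude via Kolmogorov's criterion and Lemma~\ref{le:kol_mo}. The only point where the paper is slightly more explicit is the stationarity of the derivative: since Lemma~\ref{le:sigt:dist} is stated for $\sigma_t^2(\bstheta)$ rather than $\gradtwo\sigma_t(\bstheta)$, the paper derives $\partiall(\sigma_s(\bstheta_1)-\sigma_s(\bstheta_2))\eqd\partiall(\sigma_0(\bstheta_1)-\sigma_0(\bstheta_2))$ by passing to limits along rational $h$ in the difference quotients, which is the small step you would need to add.
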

\begin{proof}
Without loss of generality we consider $i = 1$. Note that in view of Remark~\ref{re:dim2The} we can write \eqref{eq:def:grdG} as
\begin{equation}\label{eq:def:grdG2}
 \Big(   \frac{1}{2\sqrt{\beta}}  \int_0^{\Delta}\sigma_s((1,\eta,\vp))  \diff L_s, \sqrt{\beta} \int_0^{\Delta} \frac{\partial}{\partial \eta} \sigma_s((1,\eta,\vp)) \diff L_s, 
\sqrt{\beta} \int_0^{\Delta} \frac{\partial}{\partial \vp} \sigma_s((1,\eta,\vp))  \diff L_s \Big)^\top.
\end{equation}
From Remark~\ref{re:cont:G} the first component of \eqref{eq:def:grdG2} is continuous in $\beta$ even on $\Theta$.
 For the remaining two components, we show continuity with respect to $(\eta,\phi)$. Thus, assume that $\bstheta_1, \bstheta_2 \in \ThetaHCtwo$ with $\beta_1,\beta_2 = 1$.

Using the distributional property of $\sigma_s(\cdot)$ in Lemma~\ref{le:sigt:dist} and the differentiability of $\bstheta \mapsto \sigma_s(\bstheta)$ in Lemma~\ref{le:sigt:diff} gives for every Borel set $B \in \mathcal{B}(\R)$ that
\begin{equation*}\label{eq:diff_sig}
\begin{split}
& \quad\,\, \P \Big( \frac{\partial}{\partial \eta}  ( \sigma_s(\bstheta_1) - \sigma_s(\bstheta_2) ) \in B \Big)\\
& = \P \Big( \lim_{\htzero, h \in \Q} \frac{ [ \sigma_s(\bstheta_1 + (h,0))- \sigma_s(\bstheta_1) ] - [ \sigma_s(\bstheta_2 + (h,0))- \sigma_s(\bstheta_2)]}{h}  \in B \Big)\\
& = \lim_{\htzero, h \in \Q} \P \Big(  \frac{ [ \sigma_s(\bstheta_1 + (h,0))- \sigma_s(\bstheta_1) ] - [ \sigma_s(\bstheta_2 + (h,0))- \sigma_s(\bstheta_2)]}{h}  \in B \Big)\\
& = \lim_{\htzero, h \in \Q} \P \Big(  \frac{ [ \sigma_0(\bstheta_1 + (h,0))- \sigma_0(\bstheta_1) ] - [ \sigma_0(\bstheta_2 + (h,0))- \sigma_0(\bstheta_2)]}{h}  \in B \Big)\\
& = \P \Big( \frac{\partial}{\partial \eta}  ( \sigma_0(\bstheta_1) - \sigma_0(\bstheta_2) ) \in B \Big),
\end{split}
\end{equation*}
so that 
\begin{equation}\label{eq:lecd1}
\frac{\partial}{\partial \eta}  ( \sigma_s(\bstheta_1) - \sigma_s(\bstheta_2)) \eqd  \frac{\partial}{\partial \eta}  ( \sigma_0(\bstheta_1) - \sigma_0(\bstheta_2)).
\end{equation}
Similar calculations show that \eqref{eq:lecd1} is also valid for $\frac{\partial}{\partial \eta}$ replaced by $\frac{\partial}{\partial \vp}$. Thus, $(\gradtwo \sigma_s(\bstheta))_{s \geq 0}$ is stationary and it follows from its differentiability in \eqref{eq:c.2} and the mean value theorem that
\begin{equation*}\label{eq:E:dif:sig2}
\begin{split}
\int_0^{\Delta} \E  \|  \gradtwo \sigma_s(\bstheta_1) -  \gradtwo \sigma_s(\bstheta_2)\|^{p} \, \diff s &  = \Delta \E \|  \gradtwo \sigma_0(\bstheta_1) -  \gradtwo \sigma_0(\bstheta_2)\|^p  \\
& \leq \Delta \Big( \E \sup_{\bstheta \in \ThetaHCone} \| \gradtwo^2 \sigma_0(\bstheta) \|^p \Big)   \|\bstheta_1 - \bstheta_2\|^{p}  < \infty,
\end{split}
\end{equation*}
by Lemma~\ref{le:bEsupSnB} with $k = 2$. The rest of the proof follows along the same lines as in Theorem~\ref{th:cog:cont}.
\end{proof}

%
%

%
%

\begin{theorem}[Differentiable version of $(G_i(\bstheta))_{i\in\N}$]\label{le:cog:dif2}
Assume the conditions of Lemma~\ref{le:cog:diff}. 
Then there is a version $(G_i(\bstheta),\bstheta\in\ThetaHCtwo)_{i\in\N}$ for $\ThetaHCtwo$ as in \eqref{eq:def:Thk}, which is continuously differentiable and its derivative is given a.s. by $(\gradtheta G_i(\bstheta),\bstheta\in\ThetaHCtwo)_{i\in\N}$.
\end{theorem}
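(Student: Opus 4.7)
The plan is to identify the continuous version of $(\gradtheta G_i(\bstheta),\bstheta\in\ThetaHCtwo)$ supplied by Lemma~\ref{le:cog:diff} as the classical parameter derivative of a suitable version of $(G_i(\bstheta),\bstheta\in\ThetaHCtwo)$. Following the route signalled in the introduction, I would invoke the differentiation-under-the-stochastic-integral result of \citet{James84}, whose hypotheses in the present setting reduce to pointwise differentiability of the integrand $\bstheta\mapsto\sigma_s(\bstheta)$ for each $s$, a local $L^p$ bound on the candidate derivative uniform in $\bstheta$, and continuity of the candidate derivative. All three are already in hand from the ingredients used to prove Lemma~\ref{le:cog:diff}: pointwise differentiability of $\sigma_s(\bstheta)$ comes from the volatility-differentiability result invoked there, the uniform moment bounds from Lemma~\ref{le:bEsupSnB}, and continuity of $\bstheta\mapsto\gradtheta G_i(\bstheta)$ on $\ThetaHCtwo$ is precisely the content of Lemma~\ref{le:cog:diff}.

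A more self-contained route, which I would carry out in parallel, is via an $L^{2p}$-limit argument. Fix $i\in\N$, $l\in\{1,2,3\}$, $\bstheta$ in the interior of $\ThetaHCtwo$, a coordinate direction $\mathbf{e}_l$, and $h\neq 0$ small enough that $\bstheta+t\mathbf{e}_l\in\ThetaHCtwo$ for $t$ between $0$ and $h$. Linearity of the stochastic integral gives
\begin{equation*}
\frac{G_i(\bstheta+h\mathbf{e}_l)-G_i(\bstheta)}{h} = \int_{(i-1)\Delta}^{i\Delta}\frac{\sigma_s(\bstheta+h\mathbf{e}_l)-\sigma_s(\bstheta)}{h}\,\diff L_s.
\end{equation*}
By the pathwise mean value theorem the integrand equals $\partiall\sigma_s(\bstheta+\xi_s(h)\mathbf{e}_l)$ for some $\xi_s(h)$ between $0$ and $h$, converges pointwise to $\partiall\sigma_s(\bstheta)$ as $h\to 0$, and is dominated by $\sup_{t\in[0,h]}|\partiall\sigma_s(\bstheta+t\mathbf{e}_l)|$, whose $L^{2p}$-norm is controlled via Lemma~\ref{le:bEsupSnB}. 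Applying Theorem~66 of Ch.~V in \citet{Protter90}, exactly as in the proof of Theorem~\ref{th:cog:cont}, then shows that the right-hand side converges in $L^{2p}$ to $\int_{(i-1)\Delta}^{i\Delta}\partiall\sigma_s(\bstheta)\,\diff L_s = \partiall G_i(\bstheta)$. Since all three partial derivatives exist and are continuous on $\ThetaHCtwo$ (the latter by Lemma~\ref{le:cog:diff}), the classical multivariate analysis fact that continuous partial derivatives imply continuous total differentiability yields the desired continuously differentiable version with derivative $\gradtheta G_i(\bstheta)$.

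The principal obstacle is the production of a uniform $L^{2p}$ bound needed to pass to the limit inside the stochastic integral. This is exactly where the moment assumption $\E|L_1|^{4p(1+\epsilon)}<\infty$ of Lemma~\ref{le:cog:diff} is required: it guarantees, via Lemma~\ref{le:bEsupSnB} and the compactness of $\ThetaHCtwo$, that the relevant suprema of derivatives of $\sigma_s$ sit in $L^{2p}$. Once this bound is in place, the mean-value identity, the $L^{2p}$-continuity argument, and the promotion of continuous partials to continuous Fréchet differentiability are all routine.
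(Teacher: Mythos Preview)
Your first route via \citet{James84} is the paper's approach. One caveat: you summarise the key hypothesis as ``a local $L^p$ bound on the candidate derivative uniform in $\bstheta$'', but what Theorem~2.2 of \citet{James84} actually requires (and what the paper verifies) is a single non-decreasing \emph{predictable} process $(\lambda_t)_{t\ge0}$ with $\int_0^t\|\gradtwo\sigma_s(\bstheta)\|^2\,\diff\langle L\rangle_s<\lambda_t$ a.s.\ for every $\bstheta\in\ThetaHCtwo$, together with absolute continuity of $\bstheta\mapsto\sigma_s(\bstheta)$ for each fixed $s$. Most of the paper's proof is spent constructing this $\lambda_t$ via explicit pathwise bounds on $\|\gradtwo\sigma_s(\bstheta)\|$, obtained through the Heine--Borel reduction of Lemma~\ref{le:yt:ineq}(a); Lemma~\ref{le:bEsupSnB} supplies only moment bounds on the supremum, which is not the same thing as a predictable pathwise dominant.

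Your second route has a gap. $L^{2p}$-convergence of the difference quotient to $\partiall G_i(\bstheta)$, established separately for each fixed $\bstheta$, does not by itself produce a version of $G_i$ that is a.s.\ differentiable at every $\bstheta$: the exceptional null set depends on $\bstheta$ and on the sequence $h\to0$. Your final step, ``continuous partials imply continuous differentiability'', presupposes that the pathwise partial derivatives already exist, which is precisely what is at issue. One way to repair this without invoking \citet{James84}: use a stochastic Fubini theorem to verify, for each fixed $(\bstheta,h)$, the a.s.\ identity $G_i(\bstheta+h\mathbf{e}_l)-G_i(\bstheta)=\int_0^h\partiall G_i(\bstheta+t\mathbf{e}_l)\,\diff t$, extend it to all $(\bstheta,h)$ by the joint continuity furnished by Theorem~\ref{th:cog:cont} and Lemma~\ref{le:cog:diff}, and only then differentiate pathwise via the fundamental theorem of calculus.
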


\begin{proof}
Without loss of generality we consider $i = 1$. From Remark~\ref{re:dim2The} it follows that $G_1(\bstheta) = \sqrt{\beta} G_1((1,\eta,\vp))$ so that obviously
$$
\frac{\partial}{\partial \beta} G_1(\bstheta) = \frac{1}{2\sqrt{\beta}} G_1((1,\eta,\vp)) = \int_0^{\Delta} \frac{\partial}{\partial \beta} \sigma_s(\bstheta) \diff L_s.
$$
Interchanging the partial differentiation with respect to $(\eta,\phi)$ and the stochastic integral requires the four regularity conditions of Theorem~2.2 in \citet{James84}. 
Let  $\calf_t:=\sigma( \{ L_s, 0 \leq s \leq t\})$, such that $(\calf_t)_{t \ge 0}$ is the filtration generated by the L\'evy process $L$.
Condition (i) of that paper is satisfied, since $(\sigma_t(\bstheta))_{t\ge0}$ is predictable, we consider the parameter space $\calm$ with the Borel $\sigma$-algebra, and the parameter $\bstheta$ is independent of $t$. 
Since $\sigma_s(\bstheta) = \sqrt{\beta} \sigma_s((1,\eta,\vp))$ these regularity conditions need only to be checked for the map $(\eta,\vp) \mapsto \sigma_s((1,\eta,\vp))$. 
Condition (ii) requires that $\int_0^{\Delta} \sigma^2_s(\bstheta) \diff \langle L \rangle_s < \infty$ a.s. for every $\bstheta\in\ThetaHCtwo$, where $\langle L \rangle=(\langle L\rangle_s)_{s\ge 0}$ is the characteristic of the martingale $L$. Since $L$ is a square integrable L\'{e}vy process, $ \langle L \rangle_s = s \E L_1^2$ and thus this condition holds since $s \mapsto \sigma_s(\bstheta)$ has bounded sample paths on the compact $[0,\Delta]$. 
The first part of condition (iii)  requires that for every fixed $s$, the map $\bstheta \mapsto \sigma_s(\bstheta)$ is absolutely continuous.
From the definition of $\sigma_s^2(\bstheta)$ in \eqref{eq:def:sigt} we have for $\beta_1=\beta_2=1$, 
\begin{equation}\label{eq:leCogd1}
\sigma_s^2(\bstheta) = e^{-Y_{s-}(\bstheta)}\Big( \int_0^s e^{Y_v(\bstheta)} \diff v +\int_0^{\infty} e^{-Y_v(\bstheta)} \diff v \Big) =: h(\bstheta)(f(\bstheta) + g(\bstheta) ).
\end{equation}
Then for fixed $\bstheta_1, \bstheta_2 \in \ThetaHCtwo$ we use Lemma~\ref{le:sigt:diff} in combination with the mean value theorem and Lemma~\ref{le:yt:ineq}(a) to get
\beam\label{eq:leCogd2}
& \quad & |\sigma^2_s(\bm{\theta}_1) - \sigma^2_s(\bm{\theta}_2)| \\
& \leq &\Big| (f(\bstheta_1)+ g(\bstheta_1))\Big(h(\bstheta_1)-h(\bstheta_2) + h(\bstheta_2)\big(f(\bstheta_1)-f(\bstheta_2) + g(\bstheta_1)-g(\bstheta_2\big)  \Big)\Big| \notag\\
& \leq & (h(\bstheta_2)+f(\bstheta_1) + g(\bstheta_1))\Big( |h(\bstheta_1)-h(\bstheta_2)| + |f(\bstheta_1)-f(\bstheta_2)| + |g(\bstheta_1)-g(\bstheta_2)|  \Big) \notag\\
& \leq & \sup_{\bstheta \in \Theta}  \{h(\bstheta)+f(\bstheta) + g(\bstheta) \} \Big( |h(\bstheta_1)-h(\bstheta_2)| + |f(\bstheta_1)-f(\bstheta_2)| + |g(\bstheta_1)-f(\bstheta_2)| \Big) \notag\\
& \leq  & \| \bstheta_1 - \bstheta_2\| \sum_{j=1}^N \Big\{ \sup_{\bstheta \in \Theta} (h(\bstheta)+f(\bstheta) + g(\bstheta)) \Big\} 
\Big\{e^{-Y_s(\bstheta_j^{\ast})}(s+K_s(\vp_\ast)) \notag\\
 & + &  \int_0^s e^{Y_v(\bstheta_j^{\ast})}(v+K_v(\vp_\ast)) \diff v  
   + \int_0^{\infty}   e^{-Y_v(\bstheta_j^{\ast})}(v+K_v(\vp_\ast)) \diff v   \Big\},\notag
\eeam
where $(\bstheta_j^\ast)_{j=1}^N \in \calm^{(2p(1+\epsilon))}$.
Since $\Theta$ is compact and for each fixed $s \geq 0$, $\bstheta \mapsto \sigma_s(\bstheta)$ is continuous, $ \sup_{\bstheta \in \Theta} \{h(\bstheta)+f(\bstheta) + g(\bstheta)\}$ is finite. 
Furthermore, Lemma~\ref{le:ws:mom2} implies that the other three random variables at the right-hand side of \eqref{eq:leCogd2} have finite first moment, and are therefore also a.s. finite. Thus \eqref{eq:leCogd2} implies that the map $\bstheta \mapsto \sigma^2_s(\bstheta)$ is a.s. Lipschitz continuous on $\ThetaHCtwo$ and, as a consequence, absolutely continuous on $\ThetaHCtwo$.
For the second part of condition (iii) we recall first that we have assumed that $\beta=1$, such that we focus on the partial differentiation of the parameter $(\eta,\vp)^\top$.
A non-decreasing predictable process $(\lambda_t)_{t \geq 0}$ is needed such that for every $t$ and $\bstheta \in \ThetaHCtwo$
\begin{equation*}
\int_0^t \|\gradtwo \sigma_s(\bstheta)\|^2 \diff \langle L \rangle_s < \lambda_t, \quad \text{a.s.}
\end{equation*}
From \eqref{eq:leCogd1}, the product rule {and Proposition~2 in \cite{Kluppelberg06}} we find
\begin{equation}\label{eq:leCogd3}
\| \gradtwo \sigma_s(\bstheta) \| \leq \frac{1}{2\sigma^{\ast}}\{\| \gradtwo h(\bstheta)  \| ( f(\bstheta) + g(\bstheta) ) + h(\bstheta)  \| \gradtwo f(\bstheta) +\gradtwo  g(\bstheta)  \| \}.
\end{equation}
We use Lemma~\ref{le:sigt:diff} and the definition of the process $(Y_t(\bstheta))_{t \geq 0}$ in \eqref{eq:def:yt}.
First note that 
\begin{equation}\label{eq:defEVPStar}
\eta\le\sup\{\eta > 0: (\beta,\eta,\vp) \in \Theta\} =:\eta^\ast < \infty, \quad \vp\geq \inf\{\vp > 0: (\beta,\eta,\vp) \in \Theta\} =:\vp_\ast > 0
\end{equation}
and we get the bound
\begin{equation}\label{eq:leCogd5}
f(\bstheta) = \int_0^s e^{Y_v(\bstheta)} \diff v =  \int_0^s \exp \Big\{ \eta v - \sumzerous \log{ (1 + \vp (\Delta L_u)^2) } \Big\} \diff v \leq s e^{ \eta^\ast s}.
\end{equation}
Hence it follows from \eqref{eq:leCogd1} that
\beam
\label{eq:leCogd6}
\| \gradtwo f(\bstheta)  \| & =&  \int_0^s v e^{Y_v(\bstheta)} \diff v +  \int_0^s e^{Y_v(\bstheta)} K_v(\vp)  \diff v  \leq \Big( s + K_s(\vp_\ast) \Big) s e^{\eta^\ast s},\\
\label{eq:leCogd4}
\| \gradtwo h(\bstheta)  \| & =&  s e^{-Y_{s-}(\bstheta)} +
  e^{-Y_{s-}(\bstheta)} K_s(\vp)  \leq e^{-Y_{s}(\bstheta)}\big( s + K_s(\vp_\ast) \big),\\
\label{eq:leCogd7}
 \| \gradtwo g(\bstheta)  \| & =& \int_0^{\infty} v e^{-Y_v(\bstheta)} \diff v +  \int_0^{\infty} e^{-Y_v(\bstheta)} K_v(\vp)  \diff v \leq \int_0^{\infty}  \big( v + K_v(\vp_\ast) )  e^{-Y_v(\bstheta)} \diff v.
 \eeam

From \eqref{eq:leCogd3} and the bounds given in \eqref{eq:leCogd5}, \eqref{eq:leCogd6},  \eqref{eq:leCogd4}, and \eqref{eq:leCogd7} we obtain
\beam
\label{eq:Hutt4}
 \|\gradtwo \sigma_s(\bstheta)\| &  \leq & \frac{1}{2\sigma^{\ast}} e^{-Y_s(\bstheta)} (s + K_s(\vp_\ast))
\Big( se^{\eta^\ast s} + \int_0^{\infty} e^{-Y_v(\bstheta)} \diff v \Big) \\ 
\nonumber
& + &  \frac{1}{2\sigma^{\ast}} e^{-Ys-(\bstheta)} \bigg\{ ( s +  K_s(\vp_\ast) )s e^{\eta^\ast s}  + \int_0^{\infty} e^{-Y_v(\bstheta)} (v + K_v(\vp_\ast)) \diff v \bigg\} =: l_s(\bstheta).
\eeam
Using the compactness of $\ThetaHCtwo$, \eqref{eq:Hutt4} and Lemma~\ref{le:yt:ineq}(a) gives
$$
\sup_{(\eta,\vp) \in \ThetaHCtwo} \|\gradtwo \sigma_s(\bstheta)\| \leq \sup_{(\eta,\vp) \in \ThetaHCtwo} l_s(\bstheta) \leq \sum_{j=1}^N l_s(\bstheta_j^{\ast}),
$$
where $(\bstheta_j^{\ast})_{j=1}^N$ in $\calm^{(2p(1+\epsilon))}$. 
Thus,
\begin{equation*}
\int_0^t \|\gradtwo \sigma_s(\bstheta)\|^2 \diff \langle L \rangle_s < 1 + \E L^2_1  \int_0^t \big| \sum_{j=1}^N l_s(\bstheta_j^{\ast}) \big|^2 \diff s := \lambda_t,\quad 0\le t\le \Delta,
\end{equation*}
which is a well defined process. 
Since $(\lambda_t)_{t \geq 0}$ is adapted to the filtration $(\calf_t)_{t\ge0}$ and continuous, it is predictable. 
The fourth regularity condition we need to check is that the maps
\begin{equation*}
\bstheta \mapsto  \int_0^{\Delta} \sigma_s(\bm{\theta}) \diff L_s \quad \text{and} \quad \bstheta \mapsto  \int_0^{\Delta} \gradtwo \sigma_s(\bstheta) \diff L_s
\end{equation*}
are continuous, which has been proved in Theorem~\ref{th:cog:cont} and Lemma~\ref{le:cog:diff}. This 
concludes the proof.
\end{proof}

\begin{remark}\label{re:diff:G}
In view of Theorem~\ref{le:cog:dif2} we will from now on work with returns  $( G_i(\bstheta),\bstheta\in\ThetaHCtwo)_{i \in \N}$ with $\ThetaHCtwo$ as in \eqref{eq:def:Thk}, which are continuously differentiable with
\begin{equation*}
\gradtheta G_i(\bstheta) =  \int_{(i-1)\Delta}^{i\Delta}  \gradtheta \sigma_s(\bstheta) \diff L_s, \quad i \in \N.
\end{equation*}
As a consequence, also the map $\bstheta \mapsto \hat{\bspi}_n(\bstheta)$ is continuously differentiable on $\ThetaHCtwo$, hence, condition (b.1) of  Proposition~\ref{th:ii:asymp} holds.
\end{remark}

\subsubsection{Convergence of the derivatives}

%
%

Finally, we prove condition (b.3) of Proposition~\ref{th:ii:asymp}

\begin{proposition}[Consistency of the derivatives]\label{prop:b3}
Assume that $\E |L_1|^{4p(1+\epsilon)} < \infty$ for some $p > 2/5$ and $\epsilon > 0$. Let $\hat{\bspi}_n$ be one of the estimators $\hatbspi_{n,\ls}$ and $\hatbspi_{n,\yw}$ from Definition~\ref{de:ywlse}.
 Then for every sequence $(\bstheta_n)_{n \in \N}\subset \ThetaHCtwo$  and $\bstheta_n \stas \bstheta_0$ we have $\nabla_{\bstheta} \hat{\bspi}_n(\bstheta_n) \stp  \nabla_{\bstheta} \bspi_{\bstheta_0}$ as $\nto$. 
\end{proposition}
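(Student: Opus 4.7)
The key is to exploit the factorisation $\hat{\bspi}_n(\bstheta) = g(f_n(\bstheta))$ from Remark~\ref{rem:f}, which also gives $\bspi_{\bstheta} = g(f(\bstheta))$ via \eqref{eq:f}. The map $g$ is smooth on a neighbourhood of $f(\bstheta_0)$ since it consists of finitely many matrix multiplications together with the inversion of an autocovariance matrix which is non-singular at $\bstheta_0$ by Proposition~\ref{pr:cog:aux}. By the chain rule,
\begin{equation*}
\gradtheta \hat{\bspi}_n(\bstheta) = g'(f_n(\bstheta))\,\gradtheta f_n(\bstheta) \quad\text{and}\quad \gradtheta \bspi_{\bstheta} = g'(f(\bstheta))\,\gradtheta f(\bstheta),
\end{equation*}
so the proof reduces to showing (i) $f_n(\bstheta_n) \stas f(\bstheta_0)$ (so that $g'(f_n(\bstheta_n)) \stas g'(f(\bstheta_0))$ by continuity of $g'$), and (ii) $\gradtheta f_n(\bstheta_n) \stas \gradtheta f(\bstheta_0)$. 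Since almost sure convergence is stronger than convergence in probability, this yields the claim.

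Step (i) is immediate: the proof of Theorem~\ref{th:ii:cons} already establishes the uniform SLLN $\sup_{\bstheta \in \ThetaHCone} \|f_n(\bstheta) - f(\bstheta)\| \stas 0$, which combined with the continuity of $f$ and the hypothesis $\bstheta_n \stas \bstheta_0$ gives $f_n(\bstheta_n) \stas f(\bstheta_0)$. For step (ii) I would apply Theorem~7 in \citet{Straumann06} (the uniform SLLN for $C(\ThetaHCtwo)$-valued stationary ergodic random fields) componentwise to the averages defining $\gradtheta f_n(\bstheta)$, whose summands are
\begin{equation*}
\gradtheta G_i^2(\bstheta) = 2 G_i(\bstheta)\gradtheta G_i(\bstheta) \quad\text{and}\quad \gradtheta \bigl(G_i^2(\bstheta) G_{i+h}^2(\bstheta)\bigr),
\end{equation*}
for $h = 0,\dots,r$. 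Stationarity and ergodicity of these $C(\ThetaHCtwo)$-valued sequences are inherited from the L\'evy process, since each summand is a measurable functional of the increments of $L$ on $((i-1)\Delta,(i+h)\Delta]$. Continuity in $\bstheta$ on $\ThetaHCtwo$ is provided by Theorem~\ref{le:cog:dif2} and Lemma~\ref{le:cog:diff}.

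The main obstacle is verifying the uniform integrability $\E \sup_{\bstheta \in \ThetaHCtwo}\|\cdot\| < \infty$ for each summand. By Cauchy--Schwarz,
\begin{equation*}
\E \sup_{\bstheta \in \ThetaHCtwo}\|G_i(\bstheta)\gradtheta G_i(\bstheta)\| \leq \Bigl(\E \sup_{\bstheta \in \ThetaHCtwo} G_i^2(\bstheta)\Bigr)^{1/2}\Bigl(\E \sup_{\bstheta \in \ThetaHCtwo}\|\gradtheta G_i(\bstheta)\|^2\Bigr)^{1/2},
\end{equation*}
and the four-factor terms originating from $\gradtheta (G_i^2 G_{i+h}^2)$ are handled analogously by H\"older's inequality with exponents $(4,4,4,4)$. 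The first factor is finite by Theorem~\ref{th:cog:cont} and the second by Lemma~\ref{le:cog:diff}, provided the moment condition $\E|L_1|^{4p(1+\epsilon)}<\infty$ supplies enough integrability for the chosen exponents; this is the place where the precise value of $p$ in the hypothesis enters. Once these suprema are controlled, the uniform SLLN yields $\sup_{\bstheta \in \ThetaHCtwo}\|\gradtheta f_n(\bstheta) - \E[\gradtheta f_1(\bstheta)]\| \stas 0$, and the identification $\E[\gradtheta f_1(\bstheta)] = \gradtheta f(\bstheta)$ is justified by interchanging $\gradtheta$ and $\E$, which is legal by the very same dominated-convergence argument driven by these bounds. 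Combined with continuity of $\gradtheta f$ and $\bstheta_n \stas \bstheta_0$ this delivers (ii), and the chain-rule assembly completes the proof.
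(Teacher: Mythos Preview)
Your overall strategy---reducing via the chain rule to the convergence of $f_n(\bstheta_n)$ and $\gradtheta f_n(\bstheta_n)$, and handling each by a uniform limit theorem plus continuity of the limit---is exactly the paper's. The difference lies in the tool used for the uniform convergence of $\gradtheta f_n$: the paper does \emph{not} invoke Straumann's uniform SLLN here but instead proves a stochastic equicontinuity estimate (Lemma~\ref{le:prop55VF1}), combining pointwise ergodic convergence with a local H\"older bound whose random H\"older constant has finite mean, and then appeals to Theorem~10.2 in \citet{Pollard90EP}. This yields only uniform convergence in probability, which is all the proposition asserts; your route via the uniform SLLN would give the stronger almost sure statement. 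Both approaches rest on the same moment inputs from Theorem~\ref{th:cog:cont} and Lemma~\ref{le:cog:diff}, so neither buys materially more than the other.

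Two points need correction. First, your ergodicity justification is wrong: $G_i(\bstheta)$ is \emph{not} a functional of the increments of $L$ on a finite window, because $\sigma_s(\bstheta)$ depends on $\sigma_0^2(\bstheta)$, which by \eqref{eq:def:sig0} and \eqref{eq:def:sigt} involves the entire past of $L$. The correct argument writes the $C(\ThetaHCtwo)$-valued summand as a measurable map of the i.i.d.\ sequence $(S_k)_{k\le i}$ of increment paths over unit intervals, exactly as in Lemma~\ref{le:erg:gra}, and then invokes the $C(\Theta)$-valued version of that construction. Second, your H\"older exponents $(4,4,4,4)$ require $\E\sup_{\bstheta}\|\gradtheta G_i(\bstheta)\|^4<\infty$, which by Lemma~\ref{le:cog:diff} would force $p>4$; the hypothesis only supplies $p>5/2$ (the printed ``$2/5$'' is a typo---compare Lemma~\ref{le:prop55VF1}). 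Use instead the three-factor split $G_i\cdot(\partial_l G_i)\cdot G_{i+h}^2$ with exponents $(5,\,5/2,\,5/2)$ as in \eqref{eq:Edevmpa}, so that the gradient factor needs only a moment of order $5/2<p$.
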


\begin{proof}
{Recall from Remark~\ref{rem:f} that we can write the two estimators $\hatbspi_{n,\ls}$ and $\hatbspi_{n,\yw}$  as a continuously differentiable map $g:\R^{r+2} \rightarrow \R^{r+2}$, whose Jacobi matrix exists and all partial derivatives of $g$ are continuous. 
Hence, $\hat{\bspi}_n(\bstheta) = (g_1(f_n(\bstheta)),\dots,g_{r+2}(f_n(\bstheta)))^\top$ for $\bstheta=(\beta,\eta,\varphi)=:(\theta_1,\theta_2,\theta_3)$, and we obtain for the partial derivatives by the chain rule
\begin{equation}\label{eq:leb31}
\frac{\partial}{\partial \theta_l} g_k (f_n(\bstheta)) = \Big( \frac{\partial g_k (f_n(\bstheta))}{\partial x_1},\dots,
\frac{\partial g_k (f_n(\bstheta))}{\partial x_{r+2}} \Big) \Big(\frac{\partial}{\partial \theta_l} f_n(\bstheta)\Big).
\end{equation}
 for every $l = 1,2,3$ and $k = 1,\dots,r+2$. 
By the continuous mapping theorem and \eqref{eq:leb31} it suffices to prove that as $\nto$,}
\begin{equation*}\label{eq:leb32}
f_n(\bstheta_n) \stp f(\bstheta_0) \quad \text{and} \quad \frac{\partial}{\partial \theta_j} f_n(\bstheta_n) \stp \frac{\partial}{\partial \theta_l} f(\bstheta_0), \quad l = 1,2,3.
\end{equation*}
Let  $l \in \{1,2,3\}$ be fixed. It follows from \eqref{eq:thIICon2} and from Lemma~\ref{le:prop55VF1} that  as $\nto$,
\begin{equation}\label{eq:lb322}
\sup_{\bstheta \in \ThetaHCtwo} \|f_n(\bstheta) -  f(\bstheta)\| \stp 0 \quad \text{and} \quad \sup_{\bstheta \in \ThetaHCtwo} \Big\|\partiall f_n(\bstheta) -  \partiall  f(\bstheta)\Big\| \stp 0.
\end{equation}
Since
\begin{equation}\label{eq:fbn}
\begin{split}
\|f_n(\bstheta_n) - f(\bstheta_0)\| & \leq \|f_n(\bstheta_n) -  f(\bstheta_n)\| +  \|f(\bstheta_n) -  f(\bstheta_0)\| \\
 & \leq \sup_{\ThetaHCtwo} \|f_n(\bstheta) -  f(\bstheta)\| +  \|f(\bstheta_n) -  f(\bstheta_0)\|,
\end{split}
\end{equation}
from the continuity of $f$ on $\ThetaHCtwo$, the fact that $\bstheta_n \stp \bstheta_0$, and \eqref{eq:lb322} it follows that $f_n(\bstheta_n) \stp f(\bstheta_0)$. Similar calculations as in \eqref{eq:fbn} show that 
\begin{equation*}
\frac{\partial}{\partial \theta_l} f_n(\bstheta_n) \stp \frac{\partial}{\partial \theta_l} f(\bstheta_0)
\end{equation*}
concluding the proof.

\end{proof}
 
We are now ready to state asymptotic normality of the IIE.

\begin{theorem}[Asymptotic normality of the IIE]\label{th:ii:norm}
Assume that $\E |L_1|^{4p(1+\epsilon)} < \infty$ for some $p > 2/5$ and $\epsilon > 0$. Let $(G_i(\bstheta_0))_{i=1}^n$ be the  returns \eqref{eq:def:Gi} with true parameter $\bstheta_0$ is an element of the interior of $\ThetaHCtwo$ as defined in \eqref{eq:def:Thk}. 
Suppose that the auxiliary AR$(r)$ model for $r\ge 2$ is estimated by the LSE or the YWE of Definition \ref{de:ywlse}. 
If the matrix $\bm{\Sigma} =\bm{\Sigma}_{\bstheta_0}$ defined in Theorem~\ref{th:ar:norm} is positive definite and $\gradtheta \bspi(\bstheta_0)$ has full column rank 3, then 
\begin{equation*}
\sqrt{n}(\hat{\bm{\theta}}_{n,\ii} - \bstheta_0) \std \mathcal{N}(0,\Xi_{\bstheta_0}), \quad \nto,
\end{equation*}
where $\Xi_{\bstheta_0}$ is defined in \eqref{eq:2.19}.
\end{theorem}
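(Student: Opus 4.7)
The plan is to obtain the theorem as a direct consequence of Proposition~\ref{th:ii:asymp}(b), whose conclusion, together with the formula \eqref{eq:2.19}, is exactly the statement we wish to prove. The work reduces to verifying the hypotheses \eqref{eq:2.15} and (b.1)--(b.4) of that proposition in our setting. Each of these has in fact already been established in the preceding results of Section~4, so what remains is essentially a bookkeeping exercise in which the pieces are assembled.

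I would proceed through the hypotheses in order. The uniform SLLN \eqref{eq:2.15} is the main step in the proof of Theorem~\ref{th:ii:cons}, from which I would lift it without change. Condition (b.1), namely the continuous differentiability of $\bstheta\mapsto\hat{\bspi}_n(\bstheta)$ on $\ThetaHCtwo$, is recorded in Remark~\ref{re:diff:G}; it rests on the differentiable version of the returns constructed in Theorem~\ref{le:cog:dif2}, the chain rule, and the smoothness of the map $g$ noted in Remark~\ref{rem:f}. Condition (b.2), the pointwise asymptotic normality $\sqrt{n}(\hat{\bspi}_n(\bstheta)-\bspi_{\bstheta})\std\mathcal{N}(0,\bm{\Sigma}_{\bstheta})$, is exactly Theorem~\ref{th:ar:norm} for both the LSE and the YWE, and the positive definiteness of $\bm{\Sigma}_{\bstheta_0}$ is assumed. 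Condition (b.3) is precisely Proposition~\ref{prop:b3}, while the full column rank of $\gradtheta\bspi(\bstheta_0)$ is assumed; (b.4) is also part of the hypotheses. An appeal to Proposition~\ref{th:ii:asymp}(b) then yields the claim.

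The only point worth thinking about carefully, and therefore the genuine ``obstacle'' in the proof, is compatibility of the moment and parameter-space assumptions across the cited results. The preparatory lemmas are stated on a nested family of compact sets $\Theta\supseteq\ThetaHCone\supseteq\ThetaHCtwo$ with progressively stronger moment requirements, and Theorem~\ref{th:ar:norm} imposes the additional constraint $\Psi_{\bstheta}(4+\epsilon/2)<0$. Before applying Proposition~\ref{th:ii:asymp}(b) I would therefore check explicitly that, under the strongest present assumption $\E|L_1|^{4p(1+\epsilon)}<\infty$ and $\bstheta_0\in\mathrm{int}\,\ThetaHCtwo$, every cited conclusion (the uniform SLLN from Theorem~\ref{th:ii:cons}, the differentiability from Theorem~\ref{le:cog:dif2} and Remark~\ref{re:diff:G}, the CLT of Theorem~\ref{th:ar:norm}, and the consistency of derivatives from Proposition~\ref{prop:b3}) is valid on the same working parameter space $\ThetaHCtwo$. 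Once these compatibilities are recorded, the asymptotic variance identified by Proposition~\ref{th:ii:asymp}(b) is exactly $\Xi_{\bstheta_0}=(\mathcal{J}_{\bstheta_0})^{-1}\mathcal{I}_{\bstheta_0}(\mathcal{J}_{\bstheta_0})^{-1}$ with $\mathcal{J}_{\bstheta_0}$ and $\mathcal{I}_{\bstheta_0}$ as in \eqref{eq:2.22}, completing the proof.
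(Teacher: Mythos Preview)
Your proposal is correct and follows essentially the same route as the paper's own proof: both reduce the theorem to Proposition~\ref{th:ii:asymp}(b) and verify its hypotheses by citing Theorem~\ref{th:ii:cons} for the uniform SLLN, Theorem~\ref{th:ar:norm} for (b.2), Proposition~\ref{prop:b3} (together with the rank assumption) for (b.3), and the differentiability result behind Remark~\ref{re:diff:G} for (b.1). Your attention to the nesting $\ThetaHCtwo\subseteq\ThetaHCone\subseteq\Theta$ and the compatibility of the moment assumptions is exactly the one non-trivial bookkeeping point the paper also flags.
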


\begin{proof}
The asymptotic normality follows from  Proposition~\ref{th:ii:asymp}. 
Since $\bstheta \in \ThetaHCtwo \subseteq  \ThetaHCone\subseteq\Theta$, Theorem~\ref{th:ii:cons} implies condition (a). 
Conditions (b.1) and (b.3) are valid by Proposition~\ref{prop:b3} and the fact that $\gradtheta \bspi(\bstheta_0)$ has full column rank 3. Furthermore, (b.2) holds by Theorem~\ref{th:ar:norm}, since $\bm{\Sigma}_{\bstheta_0}$ is positive definite.
\end{proof}

\begin{remark}[Estimation of the asymptotic covariance matrix of $\hat{\bm{\theta}}_{n,\ii}$]
The asymptotic covariance matrix of $\hat{\bm{\theta}}_{n,\ii}$ given in Theorem~\ref{th:ii:norm} depends on $K,\gradtheta \bspi_{\bstheta_0}, \bm{\Sigma}_{\bstheta_0}$ and $\Omega$. Using the map $\bstheta \mapsto \gradtheta \bspi_{\bstheta}$ from \eqref{eq:def:pi:th} we compute $\gradtheta \bspi_{\hat{\bm{\theta}}_{n,\ii}}$. An application of the continuous mapping Theorem in combination with the continuity of $\bstheta \mapsto \gradtheta \bspi_{\bstheta}$ and Theorem~\ref{th:ii:cons} gives $\gradtheta \bspi_{\hat{\bm{\theta}}_{n,\ii}} \stas \gradtheta \bspi_{\bstheta_0}$. Recall $\bm{\Sigma}_{\bstheta_0}$ as in \eqref{eq:defSig} which depends on the inverse of the autocovariance function $\Gamma_{\bstheta_0}$ and on $\bm{\Sigma}^\ast_{\bstheta_0}$ as defined in \eqref{eq:defSigAs}. A strongly consistent estimator of $\Gamma_{\bstheta_0}$ is given by $\Gamma_{\hat{\bm{\theta}}_{n,\ii}}$. Let $\hat{C_k}$ be as in \eqref{cmatrix} with $W_k$ replaced by $G^2_k(\bm{\theta}_0)$ and $\bspi = (\mu,\bm{a},\gamma(0))$ replaced by $\hatbspi_n$. Then we estimate $\bm{\Sigma}^\ast_{\bstheta_0}$ by
\begin{equation*}
\hat{\mu}_{n,C_1,C_1^T} + 2 \sum_{i=1}^{n-r-1} \hat{\mu}_{n,C_1,C_{1+i}^T},
\end{equation*}
where
\begin{equation*}
\hat{\mu}_{n,C_1,C_{1+i}^T} = \frac{1}{n-i-r}\sum_{k=1}^{n-i-r} C_k C_{k+i}^T, \quad i = 0,\dots,n-r-1.
\end{equation*}
\end{remark}

\begin{remark}
If the L\'evy process $(L_t)_{t\geq 0 }$ has finite variation sample paths, then the stochastic integral in \eqref{eq:def:Gi} can be treated pathwise as a Riemann-Stieltjes integral, such that continuous differentiability of $(G_i(\bstheta),\bstheta\in \Theta)_{i \in \N}$ follows by dominated convergence with dominating function as in \eqref{eq:Hutt4}. Therefore, Theorem~\ref{le:cog:dif2} is valid for $\bstheta_0 \in \Theta \supseteq \ThetaHCtwo$ for some $p > 2$ and $\epsilon > 0$.
Additionally, since the total variation process is also a L\'evy process we can use Theorem~66 of Ch.~5 in \citet{Protter90} to show that, if {$\E L^{8 + \delta}_1 < \infty$ for some $\delta > 0$, then $\E  \sup_{\bstheta \in \ThetaHCone} \| \gradtheta G_i(\bstheta)\|^{4+\delta/2} < \infty$ for all $i \in \N$.} This combined with Remark~\ref{finitevar} and a dominated convergence argument can be applied to show that Lemma~\ref{prop:b3} is valid for $\bstheta_0 \in \ThetaHCtwo$, and, as a consequence, also Theorem~\ref{th:ii:norm}.
\end{remark}

%
%

\section{Simulation study}

The data used for estimation is a sample of COGARCH squared returns $\bm{G}^2_n = (G^2_i(\bstheta_0))_{i=1}^n$ as defined in \eqref{eq:def:Gi} with true parameter value $\bstheta_0 \in \Theta$ as in \eqref{eq:def:Th} observed on a fixed grid of size $\Delta = 1$. We choose a pure jump Variance Gamma (VG) process as the L\'evy process, which has infinite activity and has been used successfully for modeling stock prices (see \citet{Haug07} and reference therein). The L\'evy measure of the VG process with parameter $C > 0$ has Lebesgue density
\begin{equation}\label{eq:ss1}
\nu_L(\diff x) = \frac{C}{|x|} \exp \{ -(2C)^{1/2}|x| \} \diff x, \quad x \not= 0.
\end{equation}

The Indirect Inference method of \citet{Gourieroux93} based on simulations was originally proposed to estimate models where the binding function is difficult or impossible to compute. However, the binding function $\bstheta \mapsto \bspi_{\bstheta}$ from Proposition~\ref{pr:bind:inj} can be computed explicitly from the formulas {given in Theorem~3.1 of \citet{Haug07}}
and the Yule-Walker equations in \eqref{eq:yw}, leading to the IIE
\begin{equation}\label{eq:ii:no:sim}
\hat{\bm{\theta}}_{n,\iistar} := \argminA_{\bm{\theta} \in \Theta} \| \hatbspi_n - \bspi_{\bstheta} \|_{\bm{\Omega}}.
\end{equation} 
We perform a simulation study to evaluate the finite sample performance of the IIE $\hat{\bm{\theta}}_{n,\iistar}$ in \eqref{eq:ii:no:sim} and also to compare it with the method of moments (MM) estimator $\hat{\bm{\theta}}_{n,\mm}$ (Algorithm 1 in \citet{Haug07}) and the optimal prediction based (OPB) estimator $\hat{\bm{\theta}}_{n,\opb}$ (equation (7) in \citet{Bibbona15}). 
As in the simulation studies in \cite{Bibbona15,Haug07}, we take the VG process with true parameter value $\bstheta_0 = (0.04, 0.053, 0.038)$ and $C = 1$ in \eqref{eq:ss1}, which implies $\Psi_{\bstheta_0}(4) = -0.0261 < 0$. Under these conditions, all three estimators $\hat{\bm{\theta}}_{n,\mm}$ (Theorem~3.8 in \cite{Haug07}), $\hat{\bm{\theta}}_{n,\opb}$ (Theorem 3.1 in \cite{Bibbona15}) and IIE $\hat{\bm{\theta}}_{n,\iistar}$ (Theorem~\ref{th:ii:cons}) are consistent.

The MM is based on $r$ empirical autocovariances, OPB based on $r$ predictors, and IIE based on an AR($r$) auxiliary model. Inspection of several empirical autocovariance functions of the squared returns $\bm{G}^2_n$ with $n = 10\,000$ revealed $r = 70$ as a suitable number of lags in most of the cases. Since we have to fix $r$ in a simulation study, we choose $r = 70$ for all three estimators.

We compare the three estimators in a simulation study in Section~\ref{sec:sim_gen}. Then we show in Section~\ref{sec:fin_bias} how the IIE based on simulations defined in \eqref{eq:DefIIE} can reduce the finite sample bias of $\hat{\bm{\theta}}_{n,\iistar}$ considerably. Finally, to understand how the condition $\Psi_{\bstheta_0}(4) < 0$ affects the estimation, we investigate the finite sample bias of both IIEs with two different true parameter values $\bstheta^{(1)}_0$ and $\bstheta^{(2)}_0$ satisfying $\Psi_{\bstheta^{(2)}_0}(4) < \Psi_{\bstheta_0}(4) < \Psi_{\bstheta^{(1)}_0}(4) < 0$, where $\Psi_{\bstheta^{(1)}_0}(4)$ is near zero.

\subsection{Simulation results}\label{sec:sim_gen}

The computations were performed using the R software (\citet{Rsoftware}). Simulation of the COGARCH process and computation of $\hat{\bm{\theta}}_{n,\mm}$ and $\hat{\bm{\theta}}_{n,\opb}$ are performed with the COGARCH R package from \citet{Bibbona14} (see also the YUIMA R package in \citet{Iacus17cogarch} for the simulation and estimation of higher order COGARCH models).
We first compute $\hat{\bm{\theta}}_{n,\mm}$ based on the sample $\bm{G}^2_n$. 
The estimators  $\hat{\bm{\theta}}_{n,\opb}$ and $\hat{\bm{\theta}}_{n,\iistar}$ are computed via the optimization routine \textit{optim} in R, which requires an initial parameter value and we take $\hat{\bm{\theta}}_{n,\mm}$. 
To compute $\hat{\bspi}_n$ in \eqref{eq:ii:no:sim} we use the YWE from Definition~\ref{de:ywlse} and take the identity matrix for $\bm{\Omega}$ to compute $\hat{\bm{\theta}}_{n,\iistar}$.
In principle, there is an optimal choice of $\Omega$ (see Remark~3 of \citet{deLuna01} and Prop.~4 of \citet{Gourieroux93}).
It depends on the covariance matrix $\bm{\Sigma}$ of the auxiliary model in \eqref{eq:defSig} (see also Remark 4.24(b) in \cite{Fasen18II}). 
This matrix depends on an infinite series and on covariances between COGARCH returns to the powers 2,4,6 and 8, and has no explicit expression.
According to Remark~3 of \cite{deLuna01} and empirical evidence reported on p.~S97f  of \citet{Gourieroux93} the gain of efficiency when using the optimal weight matrix is negligible, so that we only consider estimators based on the identity matrix for $\Omega$.

\begin{table}[t]
\centering
\begin{tabular}{rrrrrr}
  \hline
 & & Mean & Std & RMSE & RB \\ 
  \hline
& $\hat{\beta}$ & 0.04698 & 0.02032 & 0.02148 & 0.17457 \\ 
$\hat{\bm{\theta}}_{n,\iistar}$ & $\hat{\eta}$ & 0.05038 & 0.01482 & 0.01504 & -0.04939 \\ 
& $\hat{\vp}$ & 0.03243 & 0.00994 & 0.01139 & -0.14663 \\ 
\hline
& $\hat{\beta}$ & 0.05226 & 0.01805 & 0.02182 & 0.30658 \\ 
$\hat{\bm{\theta}}_{n,\mm}$ & $\hat{\eta}$ & 0.05662 & 0.01576 & 0.01616 & 0.06827 \\ 
& $\hat{\vp}$ & 0.03667 & 0.01023 & 0.01031 & -0.03513 \\ 
\hline
& $\hat{\beta}$ & 0.04439 & 0.01609 & 0.01667 & 0.10965 \\  
$\hat{\bm{\theta}}_{n,\opb}$ & $\hat{\eta}$ & 0.05274 & 0.01317 & 0.01317 & -0.00489 \\ 
& $\hat{\vp}$ & 0.03583 & 0.00815 & 0.00843 & -0.05712 \\ 
   \hline
\hline  
& $\hat{\beta}$ & 0.04204 & 0.02032 & 0.02041 & 0.05105 \\ 
$\hat{\bm{\theta}}_{n,\ii}$ & $\hat{\eta}$ &  0.05318 & 0.01623 & 0.01622 & 0.00336 \\ 
& $\hat{\vp}$ & 0.03661 & 0.00955 & 0.00965 & -0.03661 \\
\hline
\hline
\end{tabular}
\caption{Performance assessment based on $1\,000$ independent samples of COGARCH squared returns $\bm{G}^2_n$ for $n = 10\,000$, sampled 
with parameter values $\beta_0 = 0.04$, $\eta_0 = 0.053$ and $\vp_0 = 0.038$: mean, standard deviation (Std), root mean squared error (RMSE) and relative bias (RB). Both IIEs $\hat{\bm{\theta}}_{n,\iistar}$ in \eqref{eq:ii:no:sim} and $\hat{\bm{\theta}}_{n,\ii}$ in \eqref{eq:ss_3} used the identity matrix for $\bm{\Omega}$. The IIE $\hat{\bm{\theta}}_{n,\ii}$ is based on $K = 100$ simulated paths.}\label{tb:mc2}
\end{table}

{We focus on the YWE for the auxiliary model, a comparison including the LSE will be given in \citet{Thiago}.}
The estimator $\hat{\bm{\theta}}_{n,\opb}$ only returns a result when $\Psi_{\hat{\bm{\theta}}_{n,\opb}}(4) < 0$. 
The estimators $\hat{\bm{\theta}}_{n,\mm}$ and $\hat{\bm{\theta}}_{n,\iistar}$ always return a value. The results are given in Table~\ref{tb:mc2}, where we excluded those paths for which the condition $\Psi_{\hat\bstheta_n}(4) <  0$ is not satisfied for at least one of the estimators compared here. The results are based on $1\,000$ independent samples of COGARCH squared returns.

The results in Table~\ref{tb:mc2} for the estimators $\hat{\bm{\theta}}_{n,\mm}$ and $\hat{\bm{\theta}}_{n,\opb}$ are similar to those of Table~2 in \citet{Bibbona15}. The OPB estimator has the smallest RMSE. The MM has the smallest relative bias for the parameter $\vp$, and the OPB the smallest for $\beta$ and $\phi$. 
The estimator $\hat{\bm{\theta}}_{n,\iistar}$ performed similarly to $\hat{\bm{\theta}}_{n,\mm}$, but it has a large bias for the parameters $\beta$ and $\vp$. This is probably due to the fact that $\hat{\bm{\theta}}_{n,\iistar}$ depends on $\hatbspi_n$, which is a biased estimator of $\bspi$ even for AR models with i.i.d. noise as shown in \citet{Shaman88}. 
The auxiliary AR model from Proposition~\ref{pr:cog:aux} has stationary and ergodic residuals, and 
certainly $\hatbspi_n$ has a bias, which propagates to the IIE. 
As a remedy, we use the IIE based on simulations and show that it can reduce the bias of $\hat{\bm{\theta}}_{n,\iistar}$, {it also outperforms 
$\hat{\bm{\theta}}_{n,\mm}$ and $\hat{\bm{\theta}}_{n,\opb}$.}

\subsection{Finite sample bias}\label{sec:fin_bias}

\begin{table}[t]
\centering
\begin{tabular}{rrrrrr}
  \hline
   & & \multicolumn{4}{c}{$n = 5\,000$}\\
 \hline
 & & Mean & Std & RMSE & RB \\ 
  \hline
& $\hat{\beta}$ & 0.04710 & 0.02196 & 0.02307 & 0.17738 \\ 
$\hat{\bm{\theta}}_{n,\iistar}$ & $\hat{\eta}$ & 0.04977 & 0.02036 & 0.02061 & -0.06094 \\ 
& $\hat{\vp}$ & 0.03168 & 0.01317 & 0.01461 & -0.16637 \\ 
\hline  
& $\hat{\beta}$ & 0.04999 & 0.03228 & 0.03377 & 0.24968 \\ 
$\hat{\bm{\theta}}_{n,\ii}$ & $\hat{\eta}$ & 0.05935 & 0.02458 & 0.02538 & 0.11974 \\  
& $\hat{\vp}$ & 0.03990 & 0.01379 & 0.01391 & 0.04989 \\ 
  \hline
   & & \multicolumn{4}{c}{$n = 7\,500$}\\
 \hline
 & & Mean & Std & RMSE & RB \\ 
  \hline
& $\hat{\beta}$ & 0.05093 & 0.02015 & 0.02291 & 0.27323 \\ 
$\hat{\bm{\theta}}_{n,\iistar}$ & $\hat{\eta}$ & 0.05401 & 0.01786 & 0.01788 & 0.01896 \\ 
& $\hat{\vp}$ & 0.03439 & 0.01158 & 0.01212 & -0.09502 \\ 
\hline  
& $\hat{\beta}$ & 0.04181 & 0.02375 & 0.02381 & 0.04537 \\ 
$\hat{\bm{\theta}}_{n,\ii}$ & $\hat{\eta}$ & 0.05322 & 0.01897 & 0.01896 & 0.00408 \\  
& $\hat{\vp}$ & 0.03668 & 0.01093 & 0.01101 & -0.03487 \\ 
\hline
\end{tabular}
\caption{Performance assessment based on $1\,000$ independent samples of COGARCH squared returns $\bm{G}^2_n$ for $n = 5\,000$ and $n = 7\,500$,
sampled with parameter values $\beta_0 = 0.04$, $\eta_0 = 0.053$ and $\vp_0 = 0.038$: mean, standard deviation (Std), root mean squared error (RMSE) and relative bias (RB). Both IIEs $\hat{\bm{\theta}}_{n,\iistar}$ in \eqref{eq:ii:no:sim} and $\hat{\bm{\theta}}_{n,\ii}$ in \eqref{eq:ss_3} used the identity matrix for $\bm{\Omega}$. The IIE $\hat{\bm{\theta}}_{n,\ii}$ is based on $K = 100$ simulated paths.}\label{tb:mcn}
\end{table}

In \citet{Gourieroux00, Gourieroux10} it is shown that Indirect Inference based on simulations can reduce the finite sample bias considerably, in particular, when the bias originates from the estimator of the auxiliary model. 
The idea of the bias reduction is that the IIE 
\begin{equation}\label{eq:ss_3}
\hat{\bm{\theta}}_{n,\ii} = \argminA_{\bstheta \in \Theta} \bigg\|\hatbspi_n - \frac{1}{K} \sum_{k=1}^K \hatbspi_{n,k}(\bstheta) \bigg\|_{\bm{\Omega}}, \quad K \in \N,
\end{equation}
from Definition \ref{de:ii} finds a $\bstheta \in \Theta$ which minimizes the distance between two biased estimators, $\hatbspi_n$ and $\frac{1}{K} \sum_{k=1}^K \hatbspi_{n,k}(\bstheta)$. As they have a similar bias, they have a chance to cancel.
We proceed to investigate the finite sample performance of the estimator $\hat{\bm{\theta}}_{n,\ii}$ in \eqref{eq:ss_3}. 

According to \cite{Gourieroux10}, the number of simulated paths $K$ in \eqref{eq:ss_3} has to be large enough to ensure that $\E\,\hatbspi_{n}(\bstheta)$ is well approximated by $\frac{1}{K} \sum_{k=1}^K \hatbspi_{n,k}(\bstheta)$ for all $\bstheta$ appearing in the optimization algorithm. Furthermore, the asymptotic variance of the IIE decreases with $K$ (see Eq.~\eqref{eq:2.22}). 
To compute $\hat{\bm{\theta}}_{n,\ii}$ we need to evaluate the function
\begin{equation}\label{eq:ss_4}
\bstheta \mapsto \frac{1}{K} \sum_{k=1}^K \hatbspi_{n,k}(\bstheta)
\end{equation}
for all $\bstheta$ giving a representation of the parameter space. To compute \eqref{eq:ss_4} for a fixed $\bstheta$, we simulate $K$ independent samples $\bm{G}_{n,k}(\bm{\theta}):=(G_i^{(k)}(\bm{\theta}))_{i=1}^n$ for $k = 1,\dots,K$. For different $\bstheta$ we use the same pseudo-random numbers to generate the $K$ independent samples, which turns \eqref{eq:ss_4} into a deterministic function of $\bstheta$ and thus suitable for optimization.

\begin{figure}
\centering
  \includegraphics[height=10cm, width = 14cm]{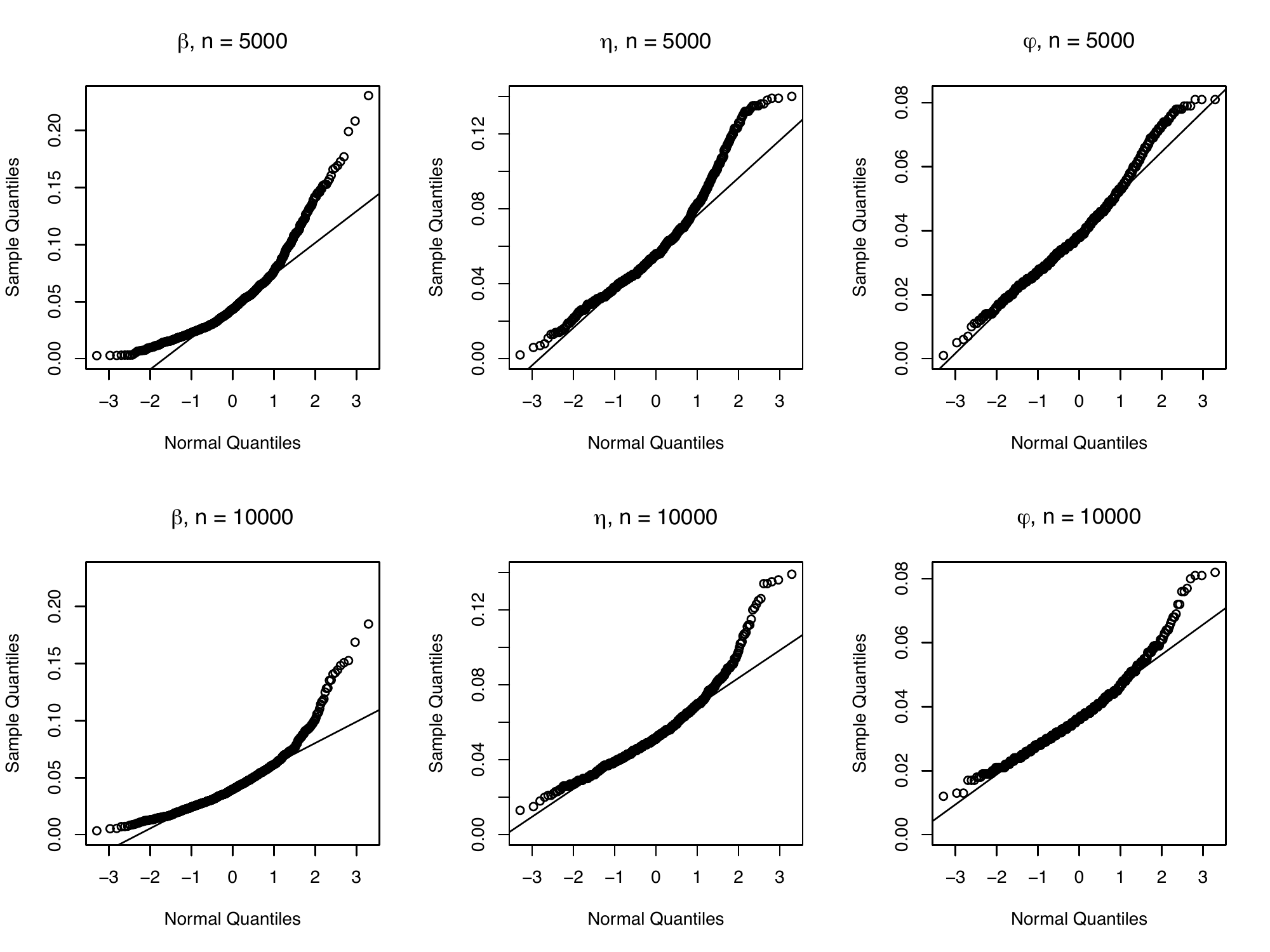}
\caption{QQ plots of the estimators $\hat{\bm{\theta}}_{n,\ii}$ of $\bstheta_0$ as in Table~\ref{tb:mc2} for $n = 5\,000$ (top line) and $n = 10\,000$ (bottom line).}\label{fig:qqnorm}
\end{figure}

In order to save computation time when computing \eqref{eq:ss_4} we use for every simulated path the fact that $\bm{G}_{n,k}(\bm{\theta}) = \sqrt{\beta}\bm{G}_{n,k}((1,\eta,\phi))$ (see Remark~\ref{re:dim2The}) and thus it follows from Definition~\ref{de:ywlse} that
\begin{equation}\label{eq:ss_6}
\hat{\bspi}_{n,k}(\bstheta) = 
\begin{pmatrix}
\hat{\mu}_n(\bstheta) \\
\hat{\bm{a}}_{n,k}(\bstheta) \\
\hat{\gamma}_{n,k}(0;\bstheta)
\end{pmatrix} = \begin{pmatrix}
\beta \hat{\mu}_{n,k}((1,\eta,\phi)) \\
\hat{\bm{a}}_{n,k}((1,\eta,\phi)) \\
\beta^2\hat{\gamma}_{n,k}(0;(1,\eta,\phi))
\end{pmatrix}.
\end{equation}

As it is computationally impossible to perform the optimization \eqref{eq:ss_3} for all $\bstheta \in \Theta$, we have to restrict $\Theta$ in a reasonable way, 
and we restrict $\Theta$ to values in the set 
\beao
\Theta_{\text{rest}} := \{ \bstheta \in \Theta: \Psi_{ \bstheta}(4) < 0, \,\bstheta \in (0,\hat\beta_{\text{max}})\times (0,\hat\eta_{\text{max}}) \times (0,\hat\vp_{\text{max}}) \}
\eeao
where $\hat\beta_{\text{max}}$, $\hat\eta_{\text{max}}$ and $\hat\vp_{\text{max}}$ are upper bounds for the estimated parameters 
 from Table~\ref{tb:mc2} for all $1\,000$ independent samples $\bm{G}_n^2$ and all estimators.

For $K = 100$ and $n = 10\,000$, every evaluation of \eqref{eq:ss_4} takes approximately $13$ minutes on a personal computer. The next goal would be to evaluate \eqref{eq:ss_3} using a gradient based routine. This is out of reach with respect to computation time. As a remedy we adopt the strategy of precomputing \eqref{eq:ss_4} on a fine grid $\Theta_{\text{grid}} \subset \Theta_{\text{rest}}$. The set $\Theta_{\text{grid}}$ was created by generating an equally spaced grid on $\Theta_{\text{rest}}$ with componentwise distance for the parameters $\eta$ and $\vp$ equal to $0.001$ (resulting in about $6.000$ different points). The grid for the component $\beta$ was then created with spacing $0.001$, but without the need to simulate the COGARCH path again by using the relation in \eqref{eq:ss_6}. Afterwards, {with COGARCH returns $\bm{G}^2_n$ generated independently from the samples $\bm{G}_{n,k}(\bm{\theta}), k = 1,\dots,K,$ applied to compute \eqref{eq:ss_4},} we compute $\hatbspi_n$, and the estimator $\hat{\bm{\theta}}_{n,\ii}$ is then simply given by
\begin{equation*}
\argminA_{\bstheta \in \Theta_{\text{grid}}} \bigg\| \hatbspi_n - \frac{1}{K} \sum_{k=1}^K \hatbspi_{n,k}(\bstheta) \bigg\|_{\bm{\Omega}},
\end{equation*}
where we choose the identity matrix for $\bm{\Omega}$. 
The results are presented in the bottom line of Table~\ref{tb:mc2}.
We notice a significant bias reduction for the simulation based estimator $\hat{\bm{\theta}}_{n,\ii}$ compared to $\hat{\bm{\theta}}_{n,\iistar}$. The standard deviation of the estimator for $\eta$ is slightly larger for $\hat{\bm{\theta}}_{n,\ii}$, but this is expected since the simulations increase the asymptotic variance by a factor of $(1 + \frac{1}{K})$ as can be seen from \eqref{eq:2.22}. The relative bias of $\hat{\bm{\theta}}_{n,\ii}$ is also smaller than those of the estimators $\hat{\bm{\theta}}_{n,\mm}$ and $\hat{\bm{\theta}}_{n,\opb}$. 
Since the standard deviations of the components of $\hat{\bm{\theta}}_{n,\ii}$ are larger than for those of $\hat{\bm{\theta}}_{n,\opb}$ and the bias reduction is comparable for the parameters $\eta$ and $\vp$, the RMSE does not seem to improve, even though the bias of $\hat{\bm{\theta}}_{n,\ii}$ is smaller.

We also compare the performance of the IIE with and without simulation for different sample sizes $n$ with $\bstheta_0$ as in Table~\ref{tb:mc2}. The results are given in Table~\ref{tb:mcn}. For $n = 5\,000$ we only observe a bias reduction of $\hat{\bm{\bstheta}}_{n,\ii}$ for $\hat{\eta}$, whereas the bias reduction of $\hat{\bm{\bstheta}}_{n,\ii}$ is noticeable for all three components already for $n = 7\,500$ and of course for $n = 10\,000$; cf. Table~\ref{tb:mc2}. 

We also can see in Figure~\ref{fig:qqnorm} that for $n = 5\,000$ and $n = 10\,000$ the asymptotic normality of $\hat{\bm{\theta}}_{n,\ii}$ has not yet been reached, although some improvement for growing sample sizes is visible in the QQ plots of $\hat{\beta}$ and $\hat{\eta}$, however not for $\hat{\vp}$.

To clarify if the bias reduction of $\hat{\bm{\theta}}_{n,\ii}$ depends on the choice of the true parameter values we perform a simulation study with two different values: $\bstheta_0^{(1)} = (0.04,0.051,0.040)$ and
$\bstheta_0^{(2)} = (0.04,0.055,0.036)$. Both values are in
the stationarity region with 
$$\Psi_{\bstheta_0^{(1)}}(4) = -0.0060,\quad 
\Psi_{\bstheta_0}(4) = -0.0261, 
\quad \Psi_{\bstheta_0^{(2)}}(4) = -0.0460.$$ 
The results are presented in Table~\ref{tb:mc2Th}. 
As for $\bstheta_0$ in Table~\ref{tb:mc2}, they also show significant bias reduction for both values for the estimator $\hat{\bm{\bstheta}}_{n,\ii}$ based on simulations, when compared to $\hat{\bm{\bstheta}}_{n,\iistar}$. However, the bias for $\hat{\beta}^{(1)}$ is much higher than for $\hat{\beta}$ and $\hat{\beta}^{(2)}$ reflecting the fact that $\Psi_{\bstheta_0^{(1)}}(4)$ is very close to zero. The estimators $\hat{\eta}^{(1)}$ and $\hat{\vp}^{(1)}$ seem to be robust with respect to this fact.
Additionally, the relative biases for $\hat{\beta}^{(2)}$ and $\hat{\vp}^{(2)}$ are even smaller than those for $\hat{\beta}$ and $\hat{\vp}$
and $\hat{\beta}^{(1)}$ and $\hat{\vp}^{(1)}$.

\begin{table}[ht]
\centering
\begin{tabular}{rrrrrr}
  \hline
   & & \multicolumn{4}{c}{$\bstheta_0^{(1)} = (0.04,0.051,0.040)$}\\
 \hline
 & & Mean & Std & RMSE & RB \\ 
  \hline
& $\hat{\beta}$ & 0.05452 & 0.02341 & 0.02754 & 0.36298 \\ 
$\hat{\bm{\theta}}_{n,\iistar}$ & $\hat{\eta}$ & 0.05027 & 0.01294 & 0.01296 & -0.01433 \\ 
& $\hat{\vp}$ & 0.03478 & 0.00857 & 0.01003 & -0.13046 \\ 
\hline  
& $\hat{\beta}$ & 0.04586 & 0.02133 & 0.02211 & 0.14658 \\ 
$\hat{\bm{\theta}}_{n,\ii}$ & $\hat{\eta}$ & 0.05142 & 0.01421 & 0.01421 & 0.00827 \\ 
& $\hat{\vp}$ & 0.03788 & 0.00872 & 0.00897 & -0.05300 \\ 
  \hline
   & & \multicolumn{4}{c}{$\bstheta_0^{(2)} = (0.04,0.055,0.036)$}\\
 \hline
 & & Mean & Std & RMSE & RB \\ 
  \hline
& $\hat{\beta}$ & 0.04315 & 0.01858 & 0.01883 & 0.07886 \\ 
$\hat{\bm{\theta}}_{n,\iistar}$ & $\hat{\eta}$ & 0.05177 & 0.01603 & 0.01635 & -0.05867 \\ 
& $\hat{\vp}$ & 0.03109 & 0.01057 & 0.01165 & -0.13643 \\ 
\hline  
& $\hat{\beta}$ & 0.04084 & 0.01829 & 0.01830 & 0.02090 \\ 
$\hat{\bm{\theta}}_{n,\ii}$ & $\hat{\eta}$ & 0.05571 & 0.01666 & 0.01667 & 0.01295 \\ 
& $\hat{\vp}$ & 0.03570 & 0.00948 & 0.00948 & -0.00828 \\ 
\hline
\end{tabular}
\caption{Performance assessment based on $1\,000$ independent samples of COGARCH squared returns $\bm{G}^2_n$ for $n = 10\,000$, sampled 
with parameter values $\bstheta_0^{(1)} = (0.04,0.051,0.040)$ and $\bstheta_0^{(2)} = (0.04,0.055,0.036)$: mean, standard deviation (Std), root mean squared error (RMSE) and relative bias (RB). Both IIEs $\hat{\bm{\theta}}_{n,\iistar}$ in \eqref{eq:ii:no:sim} and $\hat{\bm{\theta}}_{n,\ii}$ in \eqref{eq:ss_3} used the identity matrix for $\bm{\Omega}$. The IIE $\hat{\bm{\theta}}_{n,\ii}$ is based on $K = 100$ simulated paths.
}\label{tb:mc2Th}
\end{table}

%
%
\section*{Acknowledgement}

Thiago do R\^ego Sousa gratefully acknowledges support from CNPq, the National Council for Scientific and Technological Development, Brazil and of the TUM Graduate School. 
All authors take pleasure to thank Carsten Chong, Jean Jacod, Viet Son Pham, and Robert Stelzer for helpful discussions. We thank Vicky Fasen for careful reading and some critical remarks, which lead to an improvement of our paper.


%
%

\appendix

\section{Appendix to Section 4.1}

The first Lemma states important properties about moments of a continuous version of a stochastic process found via Kolmogorov's continuity criterion. The property stated in \eqref{eq:supEkol} is used to apply a uniform SLLN in Theorem \ref{th:ii:cons}. Lemma~\ref{le:sigt:diff} is used to compute $\gradtheta \sigma^2_0(\bstheta)$ and $\gradtheta^2 \sigma^2_0(\bstheta)$, needed to find a continuous version of the map $\bstheta \mapsto \int_{(i-1)\Delta}^{i\Delta} \sigma_s(\bstheta) \diff L_s$ in Theorem~\ref{th:cog:cont}, and of $\bstheta \mapsto \int_{(i-1)\Delta}^{i\Delta} \gradtheta \sigma_s(\bstheta) \diff L_s$ in Lemma~\ref{le:cog:diff}.

\begin{lemma}\label{le:kol_mo}
Let  $(X(\bstheta),\bstheta\in\Theta)$ be a stochastic process with $\Theta \subset \R_{+}^d$ compact for $d\in\N$.
Assume that there exist positive constants $p,c,\epsilon$ such that for all $\bstheta_1,\bstheta_2 \in \Theta$: 
$$
\E  |X^{(c)}(\bm{\theta}_1) - X(\bm{\theta}_2)|^p \leq c  \|\bm{\theta}_1 - \bm{\theta}_2 \|^{d+\epsilon}.
$$
Then there exists a continuous version $(\xcont(\bstheta),\bstheta\in\Theta)$ of $(X(\bstheta),\bstheta\in\Theta)$ such that
\begin{equation}\label{eq:supEkol}
\E \sup_{\theta \in \Theta} |\xcont(\bstheta)|^q < \infty.
\end{equation}
\end{lemma}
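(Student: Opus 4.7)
The plan is to deduce both assertions from a direct application of the Kolmogorov--Chentsov continuity criterion, combined with compactness of $\Theta$ to upgrade local Hölder regularity into a global moment bound on the supremum.

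First I would invoke the criterion in the form of Theorem~10.1 of \citet{Schilling14BM} (equivalently Theorem~2.5.1 of Ch.~5 in \citet{Khoshnevi02}): the assumed moment bound
$$\E|X(\bstheta_1)-X(\bstheta_2)|^p \le c\|\bstheta_1-\bstheta_2\|^{d+\epsilon}$$
delivers a modification $\xcont$ whose sample paths are Hölder continuous of any order $\gamma \in (0,\epsilon/p)$ on $\Theta$. Fix one such $\gamma$. The dyadic chaining proof of that theorem moreover produces quantitative control on the Hölder seminorm
$$N_\gamma := \sup_{\substack{\bstheta_1\neq\bstheta_2 \\ \bstheta_1,\bstheta_2 \in \Theta}} \frac{|\xcont(\bstheta_1) - \xcont(\bstheta_2)|}{\|\bstheta_1 - \bstheta_2\|^\gamma},$$
namely $\E N_\gamma^q < \infty$ for every $q$ strictly below $p$.

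Second, since $\Theta$ is compact its diameter $D := \sup_{\bstheta_1,\bstheta_2 \in \Theta}\|\bstheta_1 - \bstheta_2\|$ is finite, and fixing any reference point $\bstheta_0 \in \Theta$ the Hölder estimate yields the deterministic bound
$$\sup_{\bstheta \in \Theta}|\xcont(\bstheta)| \le |\xcont(\bstheta_0)| + D^\gamma N_\gamma.$$
Raising to the $q$-th power, using $(a+b)^q \le 2^{q}(a^q+b^q)$, and taking expectations reduces the problem to bounding $\E|\xcont(\bstheta_0)|^q = \E|X(\bstheta_0)|^q$ and $\E N_\gamma^q$ separately. The second term has just been controlled, while the first is finite whenever $X$ admits a finite $q$-th moment at one single point, a condition that in every application of this lemma in the paper is supplied by stationarity of the COGARCH volatility together with the moment assumptions imposed on $L_1$.

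The only delicate point I foresee is tracking the admissible range of $q$ through the chaining argument: finiteness of $\E N_\gamma^q$ is obtained only for $q$ strictly below $p$, not at the endpoint. This is precisely why the applications of this lemma in Theorem~\ref{th:cog:cont} and Lemma~\ref{le:cog:diff} are stated with a strict inequality (e.g.\ $q \in [0,2p)$ rather than $q \le 2p$), and why in those applications the exponent in the hypothesis is inflated by a factor $(1+\epsilon)$ so that the gap between the working Hölder exponent $\gamma$ and $\epsilon/p$ can absorb the desired value of $q$.
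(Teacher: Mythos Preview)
Your proposal is correct and follows essentially the same strategy as the paper: invoke Kolmogorov--Chentsov to obtain a H\"older modification together with a moment bound on the H\"older seminorm, then use compactness of $\Theta$ to pass to a supremum bound. The only cosmetic difference is that the paper covers $\Theta$ by finitely many small sets and anchors at one reference point in each, whereas you use a single reference point and the global H\"older seminorm directly; your version is slightly cleaner. You also correctly flag that finiteness of $\E|X(\bstheta_0)|^q$ at one point is an implicit extra hypothesis (the paper's proof uses it in its last line without stating it in the lemma), and that the restriction $q<p$ is what drives the strict inequalities in the applications.
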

\begin{proof}
Since $\Theta$ is compact we can use the Heine-Borel theorem to find a finite collection of open sets $(\Theta_j)_{j=1}^N$ such that $\Theta \subset \cup_{j=1}^N \Theta_j$ and $\|\bm{\theta}_1-\bm{\theta}_2\| \leq \delta^{\ast}$ for every $\bstheta_1, \bstheta_2 \in \Theta_j$. Choosing an arbitrary $\bstheta_j \in \Theta_j \cap \Theta$ for $j=1,\dots,N$ and using $|a-b|^q \leq 2^{q-1}|a^q-b^q|$ gives for $q<p$,
\begin{equation*}
\begin{split}
\E  \sup_{\bstheta \in \Theta} |\xcont(\bstheta)|^q & \leq \sum_{j=1}^N \E \sup_{\bstheta \in \Theta_j} |\xcont(\bstheta)|^q \\
& \leq \sum_{j=1}^N 2 ^{q-1} \E \sup_{\bstheta \in \Theta_j} \big\{ |\xcont(\bstheta) - \xcont(\bstheta_j)|^q + |\xcont(\bstheta_j)|^q \big\} \\
& \leq 
2 ^{q-1} \sum_{j=1}^N  (1 + \E |\xcont(\bstheta_j)|^q) < \infty,
\end{split}
\end{equation*}
since  $\E |\xcont(\bstheta)|^p < \infty$ for all $\bstheta \in \Theta$.
\end{proof}


The next Lemma gives necessary conditions for the existence of a continuous version of a stochastic process, and of fractional moments of order $p \geq 1$ of a random variable that appears when computing inequalities involving moments of $\sigma_0^2(\bstheta)$.

{The following Lemma is well-known from Analysis, and can be found for instance as Exercise~6 in Ch.~15.7 of \cite{Koenigsberger}.}

\begin{lemma}\label{le:pi_f}
Suppose that $g:\R^p \rightarrow \R^q$ is continuous and that
\begin{equation*}
\sup_{\bstheta \in \Theta} \|f_n(\bstheta) - f(\bstheta) \| \stas 0, \quad \nto,
\end{equation*}
where $(f_n(\bstheta))_{n \in \N}$ is a sequence of random vectors in $\R^p$, $f: \Theta \in \R^d \mapsto \R^p$ is a deterministic function and $\Theta$ is compact. Then as $\nto$,
\begin{equation*}
\sup_{\bstheta \in \Theta} \| g(f_n(\bstheta)) - g(f(\bstheta)) \| \stas 0.
\end{equation*}
\end{lemma}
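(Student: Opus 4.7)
The plan is a straightforward uniform-continuity argument carried out pathwise. Fix $\omega$ in the probability-one event $\Omega_0$ on which $\sup_{\bstheta\in\Theta}\|f_n(\bstheta)-f(\bstheta)\|\to 0$, and argue the deterministic uniform convergence of $g\circ f_n$ to $g\circ f$ for this $\omega$; since $\P(\Omega_0)=1$, the almost sure conclusion follows.

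The only point that needs care is confining all relevant values of $f$ and $f_n$ to a common compact subset of $\R^p$ on which $g$ is uniformly continuous. Since $f$ is a deterministic map from the compact set $\Theta$ into $\R^p$ and, in the setting in which the lemma is applied, continuous, its image $f(\Theta)$ is compact. Choose $N=N(\omega)$ such that $\sup_{\bstheta\in\Theta}\|f_n(\bstheta)-f(\bstheta)\|\le 1$ for all $n\ge N$, and set
\[
K := \bigl\{y\in\R^p : \operatorname{dist}(y,f(\Theta))\le 1\bigr\}.
\]
Then $K$ is closed and bounded, hence compact, and contains $f(\bstheta)$ together with $f_n(\bstheta)$ for every $\bstheta\in\Theta$ and every $n\ge N$.

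By continuity of $g$ on $\R^p$, its restriction $g|_K$ is uniformly continuous. Given $\eps>0$, choose $\delta\in(0,1]$ such that $x,y\in K$ with $\|x-y\|<\delta$ imply $\|g(x)-g(y)\|<\eps$; then choose $N'(\omega)\ge N(\omega)$ with $\sup_{\bstheta\in\Theta}\|f_n(\bstheta)-f(\bstheta)\|<\delta$ for all $n\ge N'(\omega)$. Since both $f_n(\bstheta)$ and $f(\bstheta)$ lie in $K$, uniform continuity yields
\[
\sup_{\bstheta\in\Theta}\|g(f_n(\bstheta))-g(f(\bstheta))\|\le\eps,\quad n\ge N'(\omega),
\]
and sending $\eps\downarrow 0$ along a countable sequence completes the pathwise argument. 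There is no substantive obstacle: the construction of $K$ is the only nontrivial step, and it rests on compactness of $f(\Theta)$, which is automatic whenever $f$ is continuous, as in all applications of the lemma here.
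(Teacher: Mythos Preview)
Your argument is correct and is the standard one. The paper itself does not prove this lemma; it simply records it as a well-known analysis fact and refers to a textbook exercise. In that sense there is nothing to compare against, and your pathwise uniform-continuity argument via a compact $1$-neighbourhood $K$ of $f(\Theta)$ is exactly the expected proof.

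One minor remark: the lemma as stated does not assume $f$ continuous, and without that assumption (or at least boundedness of $f(\Theta)$) the set $K$ you construct need not be compact. You already flag this and note that continuity of $f$ holds in all applications of the lemma in the paper, where $f(\bstheta)=(\E G_1^2(\bstheta),\E G_1^2(\bstheta)G_{1+h}^2(\bstheta),h=0,\dots,r)$ is continuous on $\Theta$. That caveat is appropriate and suffices.
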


\begin{lemma}\label{le:ws:mom2}
Let $p,b \geq 1$, $a,k \geq 0$, $\bm{\theta} \in \calm$ be fixed and $(K_s(\tildevp))_{s \geq 0}$ as defined in \eqref{eq:def_ds} for fixed $\tildevp > 0$. If $\E |L_1|^{2p(1+\epsilon)} < \infty$ and $\Psi_{\bm{\theta}}(p(1+\epsilon)) < 0$ for some $\epsilon > 0$, then
\begin{equation*}\label{eq:}
\E  \bigg( \int_0^{\infty} (s^a + s^k K^b_s(\tildevp))    e^{-Y_s(\bm{\theta})} \diff s \bigg)^p < \infty.
\end{equation*}
\end{lemma}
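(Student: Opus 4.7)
The plan is to apply Minkowski's integral inequality to move the $p$-th moment inside the Lebesgue integral, and then handle the two summands separately, using the Laplace transform of $Y_s(\bstheta)$ from \eqref{eq:mgf:yt} for the first and a Hölder decoupling of $K_s^b(\tildevp)$ from $e^{-Y_s(\bstheta)}$ for the second.

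First, the elementary bound $(x+y)^p \leq 2^{p-1}(x^p + y^p)$ together with Minkowski's integral inequality (valid since $p \geq 1$) reduces the task to showing that
\begin{equation*}
I_1 := \int_0^\infty s^a \bigl(\E e^{-pY_s(\bstheta)}\bigr)^{1/p}\, \diff s \quad\text{and}\quad I_2 := \int_0^\infty s^k \bigl(\E K_s^{bp}(\tildevp) e^{-pY_s(\bstheta)}\bigr)^{1/p}\, \diff s
\end{equation*}
are both finite. By \eqref{eq:mgf:yt}, $\E e^{-pY_s(\bstheta)} = e^{s\Psi_{\bstheta}(p)}$, and the map $p \mapsto \Psi_{\bstheta}(p)$ is convex on $[0,\infty)$ with $\Psi_{\bstheta}(0)=0$; writing $p = \tfrac{1}{1+\epsilon}\cdot p(1+\epsilon) + \tfrac{\epsilon}{1+\epsilon}\cdot 0$ and invoking convexity together with the hypothesis $\Psi_{\bstheta}(p(1+\epsilon)) < 0$ yields $\Psi_{\bstheta}(p) \leq \Psi_{\bstheta}(p(1+\epsilon))/(1+\epsilon) < 0$, so the integrand of $I_1$ decays exponentially in $s$ and $I_1 < \infty$.

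For $I_2$ I would apply Hölder's inequality with conjugate exponents $(1+\epsilon)/\epsilon$ and $1+\epsilon$ to obtain
\begin{equation*}
\E K_s^{bp}(\tildevp) e^{-pY_s(\bstheta)} \leq \bigl(\E K_s^{bp(1+\epsilon)/\epsilon}(\tildevp)\bigr)^{\epsilon/(1+\epsilon)}\, e^{s\Psi_{\bstheta}(p(1+\epsilon))/(1+\epsilon)}.
\end{equation*}
By \eqref{eq:supJump} the process $K(\tildevp)$ has jumps bounded by $1/\tildevp$, so all its cumulants are finite and linear in $s$, which gives for every $q > 0$ a polynomial $P_q$ with $\E K_s^q(\tildevp) \leq P_q(s)$ for all $s \geq 0$. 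Substituting this moment bound, the integrand of $I_2$ is dominated by $s^k\, [P_{bp(1+\epsilon)/\epsilon}(s)]^{\epsilon/(p(1+\epsilon))}\, e^{s\Psi_{\bstheta}(p(1+\epsilon))/(p(1+\epsilon))}$, that is, a polynomial in $s$ times a decaying exponential, hence integrable on $[0,\infty)$.

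The main obstacle is the polynomial moment bound $\E K_s^q(\tildevp) \leq P_q(s)$. For integer $q$ it is immediate from the multinomial expansion of $\E K_s^q(\tildevp)$ in the (linear-in-$s$) cumulants of the bounded-jump Lévy process $K(\tildevp)$; for non-integer $q$ one interpolates via Hölder from the integer case. Everything else is a straightforward combination of convexity, Hölder, and Minkowski.
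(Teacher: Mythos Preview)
Your proposal is correct and takes a genuinely different, more streamlined route than the paper. The paper discretizes the time integral into pieces $Q_j(\bstheta)=\int_j^{j+1}(\cdots)\,\diff s$, bounds each $\E Q_j^p$ by passing to the supremum of the integrand over $[j,j+1]$ and invoking Doob's martingale inequality for the submartingale $e^{-p(1+\epsilon)Y_s(\bstheta)}$, and then reassembles the pieces via a multinomial-type expansion adapted to non-integer~$p$. Your approach bypasses all of this: Minkowski's integral inequality pushes the $L^p$-norm inside the time integral in one stroke, so no discretization, no sup-bound, and no Doob inequality are needed; the convexity of $q\mapsto\Psi_{\bstheta}(q)$ replaces that machinery with a one-line verification that $\Psi_{\bstheta}(p)<0$. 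Your treatment of $I_2$ via H\"older plus polynomial moment growth of $K_s(\tildevp)$ is essentially the same device the paper uses inside each interval $[j,j+1]$, just applied globally rather than piecewise. One minor caveat: convexity of $\Psi_{\bstheta}$ is only guaranteed on the interval $[0,p(1+\epsilon)]$ where the Laplace exponent is finite (equivalently $\E|L_1|^{2q}<\infty$), not on all of $[0,\infty)$ as you wrote; but that interval is exactly where your interpolation lives, so the argument stands. Overall your proof is shorter and conceptually cleaner; the paper's discretization buys nothing additional in this setting.
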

\begin{proof}
The proof is similar to the proof of Proposition~4.1 in \citet{LM}. For every $j \in \N_0$ define $Q_j(\bstheta):=  \int_j^{j+1} (s^a + s^k K^b_s(\tildevp))    e^{-Y_s(\bm{\theta})} \diff s$. Then
\beam
\E Q^p_j(\bstheta) & = & \E \Big( \int_j^{j+1} (s^a + s^k K^b_s(\tildevp))    e^{-Y_s(\bm{\theta})} \diff s \Big) ^p \notag\\
& \leq & \E \Big( \sup_{j \leq s \leq j+1} (s^a + s^k K^b_s(\tildevp))    e^{-Y_s(\bm{\theta})}  \Big)^p \notag\\
& \leq & \E\Big( \big( (j+1)^a + (j+1)^k K^b_{j+1}(\tildevp) )^p \sup_{j \leq s \leq j+1}  e^{-pY_s(\bm{\theta})} \Big) \label{eq:Qj:expans}\\
& \leq & \Big( \E \big((j+1)^a + (j+1)^k K^b_{j+1}(\tildevp) \big)^{p(1+\epsilon)/\epsilon} \Big)^{\epsilon/(1+\epsilon)} \Big( \E  \sup_{j \leq s \leq j+1}  e^{-p(1+\epsilon)Y_s(\bm{\theta})}  \Big)^{1/(1+\epsilon)} \notag
\eeam
by the H\"older inequality. Since by Lemma~\ref{le:yt:ineq} $(K_s(\tildevp))_{s \geq 0}$ is a L\'{e}vy process with moments of all orders, repeated differentiation of the characteristic function of $K_{j+1}(\tildevp)$ gives a constant $c > 0$ such that
\begin{equation}\label{eq:E:lhsQj}
\E \big((j+1)^a + (j+1)^k K^b_{j+1}(\tildevp) \big)^{p(1+\epsilon)/\epsilon}  \leq c (j+1)^{mp(1+\epsilon)/\epsilon},
\end{equation}
where $m = a+k+b$.
Since the process $(e^{Y_s(\bstheta) - s\Psi_{\bm{\theta}}(1)})_{s \geq 0}$ is a martingale, we can use Doob's martingale inequality, the Laplace transform in \eqref{eq:mgf:yt} and the fact that $\Psi_{\bm{\theta}}(1) < 0$ to get 
\begin{equation}\label{eq:E:rhsQj}
\begin{split}
\E \sup_{j \leq s \leq j+1}  e^{-p(1+\epsilon)Y_s(\bm{\theta})} & \leq e^{-(j+1)p(1+\epsilon) \Psi_{\bm{\theta}}(1)}  \E  \sup_{j \leq s \leq j+1} e^{-p(1+\epsilon)Y_s(\bm{\theta}) +sp(1+\epsilon) \Psi_{\bm{\theta}}(1)}  \\ 
& \leq e^{-(j+1)p(1+\epsilon) \Psi_{\bm{\theta}}(1)}  \E   e^{-p(1+\epsilon)Y_{j+1}(\bm{\theta}) +p(1+\epsilon)(j+1) \Psi_{\bm{\theta}}(1)}   \\
& =   \E   e^{-p(1+\epsilon)Y_{j+1}(\bm{\theta})}   \\
& =  e^{(j+1)\Psi_{\bm{\theta}}(p(1+\epsilon))}.
\end{split}
\end{equation}
Equation \eqref{eq:Qj:expans} together with \eqref{eq:E:lhsQj} and \eqref{eq:E:rhsQj} gives
\begin{equation}\label{eq:ws:ineq}
\E Q^p_j(\bstheta) \leq c^\ast   (j+1)^{mp} e^{(j+1)\Psi_{\bm{\theta}}(p(1+\epsilon))/(1+\epsilon)}<\infty,
\end{equation}
where $c^\ast = c^{\epsilon/(1+\epsilon)}$.
Let $\alpha:= \lfloor p \rfloor$ be the integer part of $p$ and suppose that $p > \alpha$. 
Then 
\begin{equation}\label{eq:int:0n:ws}
\begin{split}
\Big( \int_0^n (s^a + s^k K^b_s(\tildevp))    e^{-Y_s(\bm{\theta})} \diff s \Big)^p & = \Big( \sum_{j=0}^{n-1} Q_j(\bstheta) \Big)^p \\ 
& = \sum_{j_1=0}^{n-1} \dots \sum_{j_{\alpha} = 0}^{n-1} Q_{j_1}(\bstheta) \dots Q_{j_{\alpha}}(\bstheta) \Big( \sum_{j_{\alpha+1} = 0}^{n-1} Q_{j_{\alpha+1}}(\bstheta) \Big)^{p-\alpha} \\
& \leq  \sum_{j_1=0}^{n-1} \dots \sum_{j_{\alpha} = 0}^{n-1} \sum_{j_{\alpha+1} = 0}^{n-1} Q_{j_1}(\bstheta)  \dots Q_{j_{\alpha}}(\bstheta) Q_{j_{\alpha+1}}^{p - \alpha}(\bstheta).
\end{split}
\end{equation}
If $p$ is an integer the last sum in \eqref{eq:int:0n:ws} disappears. By \eqref{eq:ws:ineq}, for each $j = 1,\dots,\alpha + 1, Q_{j} \in \call^{p}$ so we can apply the H\"{o}lder inequality with $\frac{1}{p} + \dots + \frac{1}{p} + \frac{p-\alpha}{p} = 1$ to the right-hand side of \eqref{eq:int:0n:ws}. This together with \eqref{eq:ws:ineq} gives
\beam
& &  \E \Big( \int_0^n (s^a + s^k K^b_s(\tildevp))    e^{-Y_s(\bm{\theta})} \diff s \Big)^p \notag\\
& \leq &  \sum_{j_1=0}^{n-1} \dots \sum_{j_{\alpha} = 0}^{n-1} \sum_{j_{\alpha+1} = 0}^{n-1} \big( \E (Q_{j_1}^{p}(\bstheta) \big)^{\frac{1}{p}} \dots \big( \E Q_{j_{\alpha}}^{p}(\bstheta) \big)^{\frac{1}{p}} \big( \E (Q_{j_{\alpha+1}}^p(\bstheta) \big)^{\frac{p-\alpha}{p}} \label{eq:lewsm_1}\\
& \leq & c^\ast \Big( \sum_{j=0}^{n-1}   (j + 1)^{m} e^{(j+1)/(p(1+\epsilon)) \Psi_{\bm{\theta}}(p(1+\epsilon))} \Big)^{\alpha} \Big( \sum_{j = 0}^{n-1}  (j + 1)^{m(p-\alpha)} e^{(j+1)(p-\alpha)/(p(1+\epsilon)) \Psi_{\bm{\theta}}(p(1+\epsilon))} \Big).\notag
\eeam
Since $\Psi_{\bm{\theta}}(p(1+\epsilon)) < 0$ both series in \eqref{eq:lewsm_1} converge. The monotone convergence theorem applied to the expectation in the first line of \eqref{eq:lewsm_1} gives the result.
\end{proof}

\begin{lemma}\label{le:sigt:diff}
Let $\bstheta = (\beta,\eta,\vp)$ with $\beta,\eta,\vp > 0$ and consider the process $(Y_s(\bstheta))_{s \geq 0}$ as in \eqref{eq:def:yt}. Let $K_s(\vp)$ be as defined in \eqref{eq:def_ds}. Then:
\begin{itemize}
\item[(a)] For every fixed $s > 0$,
\begin{equation}\label{eq:a.1}
\gradtwo \big( e^{-Y_{s}(\bm{\theta})} \big) = e^{-Y_s(\bstheta)} \begin{pmatrix}  -s  \\  K_s(\vp) \end{pmatrix}.
\end{equation}

\item[(b)] If $\E |L_1|^{2(1+\epsilon)} < \infty$ and $\Psi_{\bstheta}(1+\epsilon) < 0$ for some $\epsilon > 0$, then
\begin{equation}\label{eq:c.1}
\gradtwo \Big( \int_0^{\infty} e^{-Y_s(\bm{\theta})} \diff s \Big) = \int_0^{\infty} e^{-Y_s(\bstheta)} \begin{pmatrix}    -s  \\  K_s(\vp) \end{pmatrix}  \diff s
\end{equation}
and
\begin{equation}\label{eq:c.2}
\gradtwo^2 \Big( \int_0^{\infty} e^{-Y_s(\bm{\theta})} \diff s \Big) = \int_0^{\infty} e^{-Y_s(\bm{\theta})} \begin{pmatrix}
s^2 & -s  K_s(\vp)   \\
 -s  K_s(\vp)  & ( d^2_s(\vp) + d'_s(\vp)  )\\
\end{pmatrix}  \diff s.
\end{equation}
\end{itemize}
\end{lemma}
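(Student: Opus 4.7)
The plan is to prove (a) by pathwise term-by-term differentiation of the random series defining $Y_s(\bstheta)$, and to prove (b) by two successive applications of Leibniz' rule for differentiation under the integral sign, with the integrability of the dominating functions supplied by Lemma~\ref{le:ws:mom2}.

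\textbf{Part (a).} The partial $\partial_\eta Y_s(\bstheta) = s$ is immediate. For $\partial_\vp$ I will differentiate the series $\sum_{0<u\le s}\log(1+\vp(\Delta L_u)^2)$ term-by-term: each derivative $(\Delta L_u)^2/(1+\vp(\Delta L_u)^2)$ is bounded by $(\Delta L_u)^2$, and $\sum_{0<u\le s}(\Delta L_u)^2<\infty$ a.s.\ by the quadratic-variation bound for a L\'evy process. By the Weierstrass $M$-test on any compact neighborhood of $\vp$ bounded below by some $\vp_*>0$, the differentiated series converges locally uniformly and equals $K_s(\vp)$. Hence $\gradtwo Y_s(\bstheta) = (s,-K_s(\vp))^\top$, and the chain rule $\gradtwo e^{-Y_s(\bstheta)} = -e^{-Y_s(\bstheta)}\gradtwo Y_s(\bstheta)$ yields \eqref{eq:a.1}.

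\textbf{Part (b), first derivative.} I will apply Leibniz' rule to pull $\gradtwo$ inside the $\diff s$-integral, with part (a) as the pointwise derivative. The decisive step is to dominate $|\gradtwo e^{-Y_s(\bstheta')}|$ uniformly for $\bstheta'$ in some neighborhood of the reference $(\eta,\vp)$ by an a.s.\ integrable function of $s$. Using that $\Psi_{(\cdot)}(1+\epsilon)$ is continuous, decreasing in $\eta$ and increasing in $\vp$, I choose a rectangle $\mathcal{N} = [\eta_*,\eta^*]\times[\vp_*,\vp^*]\ni(\eta,\vp)$ on which $\Psi_{(\eta_*,\vp^*)}(1+\epsilon)<0$ still holds. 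For $\bstheta'\in\mathcal{N}$, one has $e^{-Y_s(\bstheta')}\le e^{-Y_s((\eta_*,\vp^*))}$ (since $-Y_s$ increases as $\eta$ decreases and $\vp$ increases) and $K_s(\vp')\le K_s(\vp_*)$ (because $x^2/(1+\vp x^2)$ is decreasing in $\vp$). Thus $|\gradtwo e^{-Y_s(\bstheta')}|\le (s+K_s(\vp_*))\,e^{-Y_s((\eta_*,\vp^*))}$, and Lemma~\ref{le:ws:mom2} with $p=1$ (applied once with $a=1,k=0$ and once with $a=0,k=0,b=1$) shows that the $\diff s$-integral of this bound has finite expectation, hence is a.s.\ finite. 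Leibniz' rule then gives \eqref{eq:c.1}.

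\textbf{Part (b), second derivative.} I first compute the integrand of \eqref{eq:c.2} pointwise through the chain-rule identity $\gradtwo^2 e^{-Y_s} = e^{-Y_s}\bigl[(\gradtwo Y_s)(\gradtwo Y_s)^\top - \gradtwo^2 Y_s\bigr]$. The only nonzero second partial of $Y_s$ is $\partial^2_{\vp\vp}Y_s = -\partial_\vp K_s(\vp)$, obtained by a further termwise differentiation of $K_s(\vp)$ whose terms $(\Delta L_u)^4/(1+\vp(\Delta L_u)^2)^2$ are again dominated by $(\Delta L_u)^2$ on any neighborhood of $\vp$ bounded below by $\vp_*$; with the abbreviations $d_s(\vp)=K_s(\vp)$ and $d'_s(\vp)=\partial_\vp K_s(\vp)$ this produces precisely the matrix displayed in \eqref{eq:c.2}. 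A second Leibniz argument then moves $\gradtwo^2$ past the integral: using the elementary inequality $\sum a_u^2\le(\sum a_u)^2$ for $a_u\ge 0$ one has $|d'_s(\vp')|\le K_s^2(\vp_*)$ on $\mathcal{N}$, so each entry of $\gradtwo^2 e^{-Y_s(\bstheta')}$ is dominated on $\mathcal{N}$ by a constant multiple of $(s^2 + sK_s(\vp_*) + K_s^2(\vp_*))\,e^{-Y_s((\eta_*,\vp^*))}$. Lemma~\ref{le:ws:mom2} with $p=1$ and $b\in\{1,2\}$ (using that $K$ is a L\'evy process with bounded jumps, hence has moments of all orders) yields the integrability of this dominating function, completing the proof of \eqref{eq:c.2}.

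\textbf{Main obstacle.} The crux is handling the potentially infinite activity of $L$ rigorously: the pathwise differentiations of the random series must be controlled by the a.s.\ finite quadratic variation $\sum(\Delta L_u)^2$, and the dominating functions for Leibniz' rule must be built on a parameter neighborhood on which $\Psi(1+\epsilon)$ stays strictly negative so that Lemma~\ref{le:ws:mom2} supplies the required integrability; everything else reduces to bookkeeping with the chain rule.
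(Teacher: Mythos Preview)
Your proposal is correct and follows essentially the same approach as the paper: termwise differentiation of the jump series for (a) justified by the a.s.\ finite quadratic variation $\sum_{0<u\le s}(\Delta L_u)^2$, and dominated convergence (Leibniz' rule) for (b) with the integrability of the dominating function supplied by Lemma~\ref{le:ws:mom2} at $p=1$. The only cosmetic differences are that the paper invokes the Heine--Borel covering of Lemma~\ref{le:yt:ineq}(a) to bound $\sup e^{-Y_s(\bstheta)}$ by a finite sum $\sum_j e^{-Y_s(\bstheta_j^\ast)}$, whereas you exploit monotonicity on a single rectangle, and the paper bounds $|K_s'(\vp)|\le K_s(\vp_\ast)/\vp_\ast$ via the bounded-jump estimate \eqref{eq:supJump} rather than your $|d_s'(\vp)|\le K_s^2(\vp_\ast)$ via $\sum a_u^2\le(\sum a_u)^2$; both routes feed into Lemma~\ref{le:ws:mom2} in the same way.
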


\begin{proof}
(a) The partial derivatives of $Y_s(\bstheta) = \eta s - \sumzerous \log ( 1 + \vp (\Delta L_u)^2 )$ are given by
$$
\frac{\partial Y_s(\bstheta)}{\partial \eta} = s \quad \text{and} \quad \frac{\partial Y_s(\bstheta)}{\partial \vp} =  -K_s(\vp),
$$
where the derivative with respect to $\vp$ follows by dominated convergence since we have the following bound independent of $\vp$:
\begin{equation}\label{eq:ds_bound}
K_s(\vp) \leq \sumzerous (\Delta L_u)^2 < \infty.
\end{equation}
A simple application of the chain rule gives \eqref{eq:a.1}.\\
(b) It follows from Lemma~\ref{le:yt:ineq}(a) that we can find a collection of points $(\bstheta_j^{\ast})_{j=1}^N$ in $\calm$ such that
\begin{equation}\label{eq:supETh}
\sup_{\bstheta \in \Theta} e^{-Y_s(\bm{\theta})} \leq \sum_{j=1}^N e^{-Y_s(\bstheta_j^{\ast})}, \quad s \geq 0.
\end{equation}
The first derivative of $K_s(\vp)$ follows from dominated converge with the upper bound in \eqref{eq:ds_bound} and is given by
\begin{equation*}\label{eq:ds_der}
K_s'(\vp) = - \sumzerous \frac{ (\Delta L_u)^4}{(1 + \vp(\Delta L_u)^2)^2},\quad s \geq 0.
\end{equation*}
Now, similar calculations as in \eqref{eq:supJump} show that $|K_s'(\vp)| \leq K_s(\vp_\ast)/\vp_\ast$ for $\vp_\ast$ as defined in \eqref{eq:defEVPStar}. This combined with \eqref{eq:supETh} allows us to obtain an upper bound for the sum of the bounds of the absolute values of the integrals at the r.h.s. of \eqref{eq:c.1} and \eqref{eq:c.2} given by
\begin{equation}\label{eq:combo}
 \sum_{j=1}^N \int_0^{\infty} e^{-Y_s(\bstheta_j^{\ast})}\big( s + s^2 + K_s(\vp_\ast)(1 + 2 s + 1/\phi_\ast) + d^2_s(\vp_\ast) \big) \diff s.
\end{equation}
Since $\E |L_1|^{2(1+\epsilon)} < \infty$ and $\Psi_{\bstheta_j^\ast}(1+\epsilon) < 0$ for all $j = 1,\dots,N$ we can apply Lemma~\ref{le:ws:mom2} with $p = 1$ to prove that the integral in \eqref{eq:combo} has finite first moment and is therefore well defined. This allows us to use dominated convergence to differentiate under the integral sign and then use the chain and product rule combined with \eqref{eq:a.1} to obtain \eqref{eq:c.1} and \eqref{eq:c.2}.
\end{proof}

\begin{lemma}\label{le:bEsupSnB}
Let $p \geq 1$ and $k \in \{1,2\}$. If $\E |L_1|^{2kp(1+\epsilon)} < \infty$ for some $\epsilon > 0$ then
\begin{equation*}\label{eq:bEsupa1}
\E \sup_{\bstheta \in \Theta^{(p(1 + \epsilon))} } \| \gradtwo^k \sigma^2_0(\bstheta) \|^p < \infty \quad \text{and} \quad \E \sup_{\bstheta \in \Theta^{(kp(1 + \epsilon))} } \| \gradtwo^k \sigma_0(\bstheta) \|^p < \infty.
\end{equation*}
\end{lemma}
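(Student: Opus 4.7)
Plan: Both assertions flow from the explicit integral representations of $\gradtwo^k \sigma_0^2(\bstheta)$ in Lemma~\ref{le:sigt:diff}, combined with the Heine--Borel bound of Lemma~\ref{le:yt:ineq}(a) and the moment estimate of Lemma~\ref{le:ws:mom2}. The bound on $\sigma_0$-derivatives will then be reduced to the $\sigma_0^2$-case via the chain rule, using the a.s. lower bound $\sigma_0 \geq \sigma^* > 0$ from Lemma~\ref{le:yt:ineq}(b).

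For $\gradtwo^k \sigma_0^2$, I would start from the identity $\sigma_0^2(\bstheta) \eqd \beta \int_0^\infty e^{-Y_s(\bstheta)}\diff s$ and Lemma~\ref{le:sigt:diff}, noting that each entry of $\gradtwo^k \sigma_0^2(\bstheta)$ is an integral whose integrand is $e^{-Y_s(\bstheta)}$ multiplied by a polynomial of degree $\leq 2k$ in $s$, $K_s(\vp)$, and $K_s'(\vp)$. To bring $\sup_\bstheta$ inside, I would combine three ingredients: (i) the Heine--Borel estimate $\sup_{\bstheta \in \Theta^{(p(1+\epsilon))}} e^{-Y_s(\bstheta)} \leq \sum_{j=1}^N e^{-Y_s(\bstheta_j^\ast)}$ from Lemma~\ref{le:yt:ineq}(a) with $\bstheta_j^\ast \in \calm^{(p(1+\epsilon))}$; (ii) the monotonicity $\sup_{\vp \geq \vp_\ast} K_s(\vp) \leq K_s(\vp_\ast)$ for $\vp_\ast = \inf\{\vp:(\beta,\eta,\vp) \in \Theta\}>0$; and (iii) the bound $\sup_{\vp \geq \vp_\ast}|K_s'(\vp)| \leq K_s(\vp_\ast)/\vp_\ast$ obtained inside the proof of Lemma~\ref{le:sigt:diff}. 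These give a pathwise upper bound of the form
\begin{equation*}
\sup_{\bstheta \in \Theta^{(p(1+\epsilon))}} \|\gradtwo^k \sigma_0^2(\bstheta)\| \leq C \sum_{j=1}^N \int_0^\infty Q_k(s, K_s(\vp_\ast))\, e^{-Y_s(\bstheta_j^\ast)} \diff s,
\end{equation*}
with $Q_k$ a polynomial with nonnegative coefficients. Raising to the $p$-th power, applying Minkowski's inequality to split $Q_k^p$ into monomial pieces, and invoking Lemma~\ref{le:ws:mom2} on each piece---which applies since $\E|L_1|^{2p(1+\epsilon)}<\infty$ is implied by $\E|L_1|^{2kp(1+\epsilon)}<\infty$, and $\Psi_{\bstheta_j^\ast}(p(1+\epsilon))<0$ holds by definition of $\calm^{(p(1+\epsilon))}$---delivers the first assertion.

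For $\gradtwo^k \sigma_0$, I would invoke the chain rule applied to $\sigma_0 = \sqrt{\sigma_0^2}$. The identity $\gradtwo \sigma_0 = \gradtwo \sigma_0^2/(2\sigma_0)$ combined with $\sigma_0 \geq \sigma^\ast$ reduces the $k=1$ case immediately to the $\sigma_0^2$-bound just established. For $k=2$, differentiating once more yields
\begin{equation*}
\gradtwo^2 \sigma_0 = \frac{\gradtwo^2 \sigma_0^2}{2\sigma_0} - \frac{(\gradtwo \sigma_0^2)(\gradtwo \sigma_0^2)^\top}{4\sigma_0^3},
\end{equation*}
so $\|\gradtwo^2 \sigma_0\|^p \leq C(\|\gradtwo^2 \sigma_0^2\|^p + \|\gradtwo \sigma_0^2\|^{2p})$. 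On $\Theta^{(2p(1+\epsilon))} \subseteq \Theta^{(p(1+\epsilon))}$, the first term is controlled by the $\sigma_0^2$-bound with $k=2$; the second term is the $\sigma_0^2$-bound with $k=1$ but with $p$ replaced by $2p$, which requires $\E|L_1|^{4p(1+\epsilon)}<\infty$ and the corresponding L\'evy exponent condition---precisely the hypotheses available for $k=2$. The main obstacle will be the bookkeeping needed to match each application of Lemma~\ref{le:ws:mom2} against the correct moment order on $L_1$ and the exponent condition $\Psi_\bstheta(q(1+\epsilon))<0$ for $q\in\{p,2p\}$ on the appropriate compact subset; the inclusions $\Theta^{(q)} \subseteq \Theta^{(q')}$ for $q \geq q'$ reconcile these constraints across the four sub-cases.
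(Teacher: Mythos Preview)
Your proposal is correct and follows essentially the same route as the paper: both arguments combine the explicit integral formulas for $\gradtwo^k\sigma_0^2(\bstheta)$ from Lemma~\ref{le:sigt:diff}, the Heine--Borel bound of Lemma~\ref{le:yt:ineq}(a), the uniform estimates $K_s(\vp)\le K_s(\vp_\ast)$ and $|K_s'(\vp)|\le K_s(\vp_\ast)/\vp_\ast$, and the moment result of Lemma~\ref{le:ws:mom2}, then pass to $\sigma_0$ via the chain rule and the a.s.\ lower bound $\sigma_0\ge\sigma^\ast$. The paper packages the first step slightly more compactly by directly citing the bound \eqref{eq:combo} already derived in the proof of Lemma~\ref{le:sigt:diff}, but the content is identical.
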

\begin{proof}
For $k \in \{1,2\}$ let $R_{kp}$ denote the integral defined in \eqref{eq:combo} with $(\bstheta_j^\ast)_{j=1}^N \in \calm^{(kp(1 + \epsilon))}$ as in \eqref{eq:def:Thk}. By the same argument preceding \eqref{eq:combo} and from \eqref{eq:def:sig0} we get
\begin{equation}\label{eq:betaR}
 \sup_{\bstheta \in \Theta^{(kp(1 + \epsilon))}}  \Big( \|\gradtwo\sigma^2_0(\bstheta)\| +  \|\gradtwo^2 \sigma^2_0(\bstheta)\| \Big) \leq c \beta^\ast R_{kp}, \quad k = 1,2,
\end{equation}
where $c > 0$ and 
\begin{equation}\label{eq:betaUas}
\beta^\ast = \sup\{\beta > 0: (\beta,\eta,\vp) \in \Theta\} < \infty.
\end{equation}
Since from Lemma~\ref{le:yt:ineq}(b) we know that $\sigma_0(\bstheta) \geq \sigma^{\ast} > 0$, the chain rule implies that
\begin{equation}\label{eq:bouGr1}
\|\gradtwo \sigma_0(\bstheta)\| \leq \frac{1}{\sigma^{\ast}} \|\gradtwo \sigma^2_0(\bstheta)\|.
\end{equation}
Using \eqref{eq:bouGr1} combined with the chain rule for the second order derivative gives
\begin{equation}\label{eq:bouGr2}
\|\gradtwo^2 \sigma_0(\bstheta)\| \leq \frac{1}{4\sigma^{\ast}} \|\gradtwo^2 \sigma^2_0(\bstheta)| + \frac{1}{8(\sigma^{\ast})^3}\|\gradtwo \sigma^2_0(\bstheta)\|^2.
\end{equation}
Using \eqref{eq:betaR} combined with \eqref{eq:bouGr1} gives 
\begin{equation*}
\E \sup_{\bstheta \in \Theta^{(p(1 + \epsilon))}} \| \gradtwo \sigma_0(\bstheta) \|^p \leq  \E \Big(  \frac{1}{\sigma^{\ast}}c\beta^\ast R_{p}  \Big)^p < \infty,
\end{equation*}
by an application of Lemma~\ref{le:ws:mom2}.
Now, \eqref{eq:betaR} combined with \eqref{eq:bouGr2} gives
\begin{equation*}
\E \sup_{\bstheta \in \Theta^{(2p(1 + \epsilon))}} \| \gradtwo^2 \sigma_0(\bstheta) \|^p \leq  \E \Big(  \frac{1}{4\sigma^{\ast}}c\beta^\ast R_{2p}  +  \frac{1}{8(\sigma^{\ast})^3}(c\beta^\ast R_{2p})^2 \Big)^p < \infty,
\end{equation*}
by an application of the Cauchy-Schwartz inequality and Lemma~\ref{le:ws:mom2} with $p$ replaced by $2p$.
\end{proof}

\section{Appendix to Section 4.2}

Lemmas~\ref{le:prop55VF1} and~\ref{le:erg:gra} are used in the proof of Proposition~\ref{prop:b3} to control the convergence of arithmetic means defined in terms of the sequences $(G_i(\bstheta))_{i \in \N}$ and $(\gradtheta G_i(\bstheta))_{i \in \N}$ with $\gradtheta G_i(\bstheta)$ defined in the sense of Remark~\ref{re:diff:G}.

%
%

\begin{lemma}\label{le:erg:gra}
Let $\bstheta = (\beta,\eta,\phi)=:(\theta_1,\theta_2,\theta_3)$ with $\beta,\eta,\phi > 0$ and $\Delta > 0$. 
Suppose that $\E |L_1|^{2} < \infty$ and $\Psi_{\bstheta}(1) < 0$. 
Let $(\si_t(\bstheta))_{t\ge0}$ be the stationary volatility process starting with $\si_0(\bstheta)$ as in \eqref{eq:def:sig0} independent of $L$.
Then for all three components of $\bstheta$ the sequences
\begin{equation*}
\Big(\int_{(i-1)\Delta}^{i\Delta} \sigma_s(\bstheta) \diff L_s ,
\int_{(i-1)\Delta}^{i\Delta}  \frac{\partial}{\partial \theta_j} \sigma_s(\bstheta) \diff L_s \Big)_{i \in \N}
\end{equation*}
 are stationary and ergodic.
\end{lemma}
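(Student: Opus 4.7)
The plan is to recognise both stochastic integrals as Borel-measurable functionals of a stationary and ergodic underlying system, so that stationarity and ergodicity are inherited automatically.

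First, I would establish strict stationarity of the joint continuous-time process $(\sigma_t(\bstheta),\frac{\partial}{\partial \theta_j}\sigma_t(\bstheta))_{t\ge 0}$. Stationarity of $(\sigma_t(\bstheta))_{t\ge 0}$ itself is guaranteed by the choice of $\sigma_0^2(\bstheta)$ in \eqref{eq:def:sig0} (Theorem~3.2 of \cite{Kluppelberg04}). Using the pathwise representation of $\frac{\partial}{\partial \theta_j}\sigma_s(\bstheta)$ derived in the proof of Theorem~\ref{le:cog:dif2} in terms of $e^{-Y_s(\bstheta)}$, $\int_0^s e^{\pm Y_v(\bstheta)}\,\diff v$ and $K_s(\vp)$, the derivative is a Borel functional of $\sigma_0(\bstheta)$ and of the L\'evy path on $[0,s]$. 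A direct extension of Lemma~\ref{le:sigt:dist} (via difference quotients with dominated convergence justified by the bound $l_s(\bstheta)$ in \eqref{eq:Hutt4}) then yields strict stationarity of the joint process.

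Specialising to the discrete grid $t = i\Delta$, the augmented sequence
\beao
X_i := \Big( (L_{(i-1)\Delta+u}-L_{(i-1)\Delta})_{0\le u \le \Delta},\,\sigma_{(i-1)\Delta}(\bstheta),\,\tfrac{\partial}{\partial \theta_j}\sigma_{(i-1)\Delta}(\bstheta)\Big), \quad i \in \N,
\eeao
is then strictly stationary. Both $\int_{(i-1)\Delta}^{i\Delta} \sigma_s(\bstheta)\,\diff L_s$ and $\int_{(i-1)\Delta}^{i\Delta} \frac{\partial}{\partial \theta_j}\sigma_s(\bstheta)\,\diff L_s$ are Borel-measurable functionals of $X_i$, since on $[(i-1)\Delta,i\Delta]$ the volatility and its derivative are determined by the boundary values at $(i-1)\Delta$ together with the restriction of $L$ to that window. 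Hence the bivariate sequence in the statement is strictly stationary. For ergodicity I would invoke the driving L\'evy process: $L$ has stationary and independent increments, so the $\Delta$-shift on its path space is mixing and in particular ergodic. The stationary starting values $(\sigma_0(\bstheta),\frac{\partial}{\partial \theta_j}\sigma_0(\bstheta))$ in \eqref{eq:def:sig0} depend only on (a copy of) the past L\'evy path and can be absorbed into the initial state, so that $(X_i)_{i\in\N}$ is ergodic, and Borel-measurable functionals preserve ergodicity.

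The main obstacle is making Step~1 rigorous, i.e.\ extending strict stationarity from $\sigma_t(\bstheta)$ to the pair including $\frac{\partial}{\partial \theta_j}\sigma_t(\bstheta)$: Lemma~\ref{le:sigt:dist} alone covers only (vector-valued) volatilities, and one must pass to limits of difference quotients using the integrability bounds from the proof of Theorem~\ref{le:cog:dif2}. Once that is done, the remaining steps are standard applications of preservation of stationarity and ergodicity under Borel functionals and of time-discretisation on the grid $i\Delta$.
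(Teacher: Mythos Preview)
Your proposal is correct and close in spirit to the paper's proof, but the packaging differs in a way worth noting. The paper works directly with the two-sided i.i.d.\ sequence $S_k := (\Delta L_u : (k-1)\Delta < u \le k\Delta)$, $k\in\Z$, and simply observes that the volatility path on window $i$ is a measurable functional $g$ of $(S_k)_{k\le i}$, that the derivative path is then also a measurable functional $h$ (as a limit over rational difference quotients), and that the stochastic integral is a measurable map of integrand and integrator. Stationarity and ergodicity then follow in one stroke from the standard fact (Proposition~5 in \citet{Straumann06}, Theorem~2.1 in \citet{Krengel85}) that shift-covariant measurable functionals of an i.i.d.\ sequence are stationary and ergodic.

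Your route---first establishing stationarity of the continuous-time pair $(\sigma_t,\partial_{\theta_j}\sigma_t)$ via difference quotients and Lemma~\ref{le:sigt:dist}, then discretising into the Markov-like triple $X_i$---also works, but what you flag as the ``main obstacle'' is bypassed entirely in the paper's approach: there is no need to prove stationarity of the derivative process separately, because once everything is written as a deterministic functional of the i.i.d.\ blocks $(S_k)_{k\in\Z}$, both stationarity and ergodicity come for free. Your ergodicity step (``absorbed into the initial state'') amounts precisely to extending the L\'evy increments to a two-sided sequence, which the paper does at the outset rather than at the end.
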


\begin{proof}
Consider without loss of generality $j = 1$. 
Define the i.i.d. sequence $(S_k)_{k \in \Z}$ with 
$$
S_k = ( \Delta L_u, (k-1)\Delta < u \leq k\Delta).
$$
We consider 
$$ ((\sigma_s(\bstheta),\bstheta\in\Theta), (i-1)\Delta < s \leq i\Delta) =: g(\bstheta,\bstheta\in\Theta, (S_k)_{k =-\infty}^{i})$$
as a measurable function of all relevant jumps $\Delta L_u$. 
Additionally, since limits of differentiable functions are measurable, there exists a measurable map $h$ such that 
$$
 ( \frac{\partial}{\partial \theta_1} \sigma_s(\bstheta), (i-1)\Delta < s \leq i\Delta) = h( (S_k)_{k = -\infty}^{i}, (\bstheta + (c,0,0))_{c \in \Q}).
$$
By observing that a stochastic integral is defined as a measurable map depending on the integrand and integrator processes, we can write
$$
\int_{(i-1)\Delta}^{i\Delta} \sigma_s(\bstheta) \diff L_s = g( (S_k)_{k= -\infty}^{i}, \bstheta) \,\, \text{and} \,\, \int_{(i-1)\Delta}^{i\Delta}  \frac{\partial}{\partial \theta_1} \sigma_s(\bstheta) \diff L_s =  h( (S_k)_{k = -\infty}^{i}, (\bstheta + (c,0,0))_{c \in \Q} ).
$$
Using Proposition 5 in \citet{Straumann06} (see also Theorem~2.1 in \citet{Krengel85}) we can conclude the stationarity and ergodicity of the process $(G_i(\bstheta),\nabla G_i(\bstheta))_{i \in \N}$ based on the stationarity and ergodicity of the sequence $(S_i)_{i \in \Z}$ and the measurability of  $g$ and $h$.
\end{proof}

\begin{lemma}\label{le:prop55VF1}
If $\E |L_1|^{4p(1+\epsilon)} < \infty$ for some $p > 5/2$ and $\epsilon > 0$ then for every $l \in \{1,2,3\}$ and $h \in \N_0$ we have
\begin{equation}\label{eq:supDeco}
\sup_{\bstheta \in \ThetaHCtwo} \Big|\partiall \Big(\frac{1}{n} \sum_{i=1}^{n-h}  G^2_i(\bstheta)G^2_{i+h}(\bstheta) \Big) - \partiall \big(\E G^2_1(\bstheta)G^2_{1+h}(\bstheta)\big) \Big| \stp 0, \quad \nto.
\end{equation}
\end{lemma}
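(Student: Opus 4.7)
My plan is to reduce the claim to a uniform strong law of large numbers (SLLN) applied to the differentiated summand, after first commuting $\partiall$ with both the finite sum and the expectation. By Theorem~\ref{le:cog:dif2} and Remark~\ref{re:diff:G}, the sample paths $\bstheta\mapsto G_i(\bstheta)$ are continuously differentiable on $\ThetaHCtwo$, so the product rule yields the $\omega$-wise identity
\begin{equation*}
\partiall\bigl(G^2_i(\bstheta)G^2_{i+h}(\bstheta)\bigr)=2\,G_i(\bstheta)\,\partiall G_i(\bstheta)\,G^2_{i+h}(\bstheta)+2\,G^2_i(\bstheta)\,G_{i+h}(\bstheta)\,\partiall G_{i+h}(\bstheta),
\end{equation*}
and linearity trivially interchanges $\partiall$ with the finite sum on the left of \eqref{eq:supDeco}. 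Exchanging $\partiall$ with the expectation on the right will be justified by dominated convergence using the integrability bound established below.

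The technical core is the moment bound
\begin{equation*}
\E\sup_{\bstheta\in\ThetaHCtwo}\bigl|G_i(\bstheta)\,\partiall G_i(\bstheta)\,G^2_{i+h}(\bstheta)\bigr|<\infty
\end{equation*}
together with the symmetric estimate for the second summand. I would obtain this via a three-factor H\"older inequality with exponents $p_1,p_2,p_3>0$ satisfying $1/p_1+1/p_2+1/p_3=1$. Applying Theorem~\ref{th:cog:cont} with $2p$ in place of $p$ (admissible since $\E|L_1|^{4p(1+\epsilon)}<\infty$ and $2p>2$) yields $\E\sup_{\bstheta\in\ThetaHCtwo}|G_i(\bstheta)|^q<\infty$ for every $q<4p$, which covers the first factor when $p_1<4p$ and the third (which is $|G_{i+h}|^2$ raised to the power $p_3$) when $2p_3<4p$; Lemma~\ref{le:cog:diff} supplies $\E\sup_{\bstheta\in\ThetaHCtwo}|\partiall G_i(\bstheta)|^q<\infty$ for every $q<p$, which covers the middle factor when $p_2<p$. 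The infimum of $1/p_1+1/p_2+1/p_3$ under these constraints is $1/(4p)+1/p+1/(2p)=7/(4p)$, so valid exponents exist as soon as $7/(4p)<1$; the hypothesis $p>5/2$ guarantees this with room to spare.

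For the rest, stationarity and ergodicity of $(G_i(\bstheta),\partiall G_i(\bstheta))_{i\in\N}$ are provided by Lemma~\ref{le:erg:gra}, while $\omega$-wise continuity of $\bstheta\mapsto\partiall(G^2_i(\bstheta)G^2_{i+h}(\bstheta))$ on $\ThetaHCtwo$ follows from the continuous differentiability assertions in Theorem~\ref{th:cog:cont} and Lemma~\ref{le:cog:diff}. Together with the integrable supremum just derived, Theorem~7 of \citet{Straumann06} applies to each of the two summands and delivers
\begin{equation*}
\sup_{\bstheta\in\ThetaHCtwo}\Big|\frac{1}{n-h}\sum_{i=1}^{n-h}\partiall\bigl(G^2_i(\bstheta)G^2_{i+h}(\bstheta)\bigr)-\E\,\partiall\bigl(G^2_1(\bstheta)G^2_{1+h}(\bstheta)\bigr)\Big|\stas 0;
\end{equation*}
multiplying by the deterministic factor $(n-h)/n\to1$ and weakening almost-sure to in-probability convergence then gives \eqref{eq:supDeco}. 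The main obstacle is the H\"older bookkeeping of the previous paragraph: the three factors have different available moment ranges, and one has to verify that the reciprocal exponents can be balanced to sum to~$1$. Everything else is a direct assembly of results already proved in the paper.
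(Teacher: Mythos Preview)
Your argument is correct, but it departs from the paper's own proof in a meaningful way. The paper does \emph{not} invoke a uniform SLLN for the differentiated summand. Instead it proceeds in three steps: (i) pointwise convergence of $\partiall\hat\mu_n(h;\bstheta)$ via Birkhoff's ergodic theorem, using H\"older with the explicit exponents $1/5+2/5+2/5=1$ (so $\E|G_1|^5<\infty$ and $\E|\partiall G_1|^{5/2}<\infty$, which is exactly where the threshold $p>5/2$ enters); (ii) a local H\"older-continuity estimate for $\bstheta\mapsto\partiall\hat\mu_n(h;\bstheta)$ built from the random H\"older constants $U_i$ and $V_i$ of Theorem~\ref{th:cog:cont} and Lemma~\ref{le:cog:diff}; (iii) stochastic equicontinuity via Theorem~10.2 of \citet{Pollard90EP}, which combined with (i) upgrades pointwise to uniform convergence in probability.

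Your route is shorter and, as you observed, the H\"older bookkeeping shows that $p>7/4$ already suffices for the supremum moment bound---so the hypothesis $p>5/2$ is an artefact of the particular exponents $5,5/2,5/2$ chosen by the paper in Step~(i), not an intrinsic requirement. Your approach also yields almost-sure (hence in-probability) uniform convergence and reuses the same uniform SLLN machinery (Theorem~7 of \citet{Straumann06}) that the paper already deploys in the proof of Theorem~\ref{th:ii:cons}. What the paper's route buys is that it makes explicit use of the H\"older-constant random variables $U_i,V_i$---so the second halves of \eqref{eq:Esu2} and its analogue in Lemma~\ref{le:cog:diff} are actually exercised---and it follows the template of Proposition~5.5 in \citet{Fasen18II}, to which the paper refers. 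One small point: Lemma~\ref{le:erg:gra} is stated for fixed $\bstheta$, so to feed Straumann's Theorem~7 you are implicitly using that its \emph{proof} (via measurable functions of the i.i.d.\ blocks $(S_k)$) delivers stationarity and ergodicity of the $C(\ThetaHCtwo)$-valued sequence; this is the same tacit step the paper makes before Theorem~\ref{th:cog:cont}, so it is acceptable here as well.
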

\begin{proof}
The proof follows closely the strategy in the proof of Proposition~5.5 in \citet{Fasen18II}, which divides the proof into three steps: Pointwise convergence, local H\"{o}lder continuity, and stochastic equicontinuity. Let $l \in \{1,2,3\}$ and $h \in \N_0$ be fixed. Write $ \hat{\mu}_n(h;\bstheta) = \frac{1}{n} \sum_{i=1}^{n-h} G^2_i(\bstheta)G^2_{i+h}(\bstheta)$. Then, a simple application of the chain and product rule gives
\begin{equation}\label{eq:dYthen}
\partiall \hat{\mu}_n(h;\bstheta) = \frac{1}{n} \sum_{i=1}^{n-h} \Big[ 2G_i(\bstheta) \Big( \partiall G_i(\bstheta)   \Big) G^2_{i+h}(\bstheta) + 2 G^2_i(\bstheta) G_{i+h}(\bstheta) \Big( \partiall G_{j+h}(\bstheta)\Big)   \Big].
\end{equation}
\textbf{Step 1. Pointwise convergence}. Let $\bstheta \in \ThetaHCtwo$ be fixed. It follows from Lemma~\ref{le:erg:gra} that the sequence $(G_i(\bstheta), \partiall G_i(\bstheta))_{i \in \N}$ is stationary and ergodic. Additionally, it follows from the lemma's assumptions combined with Theorem~\ref{th:cog:cont}, Lemma~\ref{le:cog:diff} and the H\"{older} inequality with $\frac{1}{5} + \frac{2}{5} + \frac{2}{5} = 1$ that
\begin{equation}\label{eq:Edevmpa}
\begin{split}
     \quad\quad & \E G_1(\bstheta) \Big( \partiall G_1(\bstheta)   \Big) G^2_{1+h}(\bstheta)   \\
     & \leq (\E G^5_1(\bstheta))^{1/5} \bigg( \E \Big(  \partiall G_1(\bstheta)\Big)^{5/2} \bigg)^{2/5}  (\E G^5_{1+h}(\bstheta))^{2/5} < \infty.
\end{split}
\end{equation}
The same calculations in \eqref{eq:Edevmpa} can be applied to show that the expectation of the second term in the summation \eqref{eq:dYthen} is also finite. This allows us to apply Birkhoff convergence theorem to conclude that
\begin{equation*}
\partiall \hat{\mu}_n(h;\bstheta) \stp \E G^2_1(\bstheta)G^2_{1+h}(\bstheta), \quad \nto.
\end{equation*}
\\
\textbf{Step 2. $\partiall \hat{\mu}_n(h;\bstheta)$ is locally H\"{o}lder-continuous on $\ThetaHCtwo$}. For $i \in \N$ let $U_i$ and $V_i$ be as defined in \eqref{eq:DefUi} and
\eqref{eq:DefVi}, respectively. By stationarity of $(G_i(\bstheta), \bstheta \in \Theta)_{i \in \N}$ and $(\partiall G_i(\bstheta), \bstheta \in \Theta)_{i \in \N}$, $U_i \eqd U_1$, $V_i \eqd V_1$ and for every $\bstheta_1,\bstheta_2 \in \ThetaHCtwo$ with $\|\bstheta_1-\bstheta_2\| < 1$ it follows from Theorem~\ref{th:cog:cont} and Lemma~\ref{le:cog:diff} that there exists $\gamma \in (0,1)$ such that for all $i \in \N$:
\begin{equation*}\label{eq:Gk123}
|G_i(\bstheta_1) - G_i(\bstheta_2)| \leq U_i \|\bstheta_1-\bstheta_2\|^{\gamma}
\end{equation*}
and 
\begin{equation*}\label{eq:Gk12}
\Big|\partiall G_i(\bstheta_1) - \partiall G_i(\bstheta_2)\Big| \leq V_i \|\bstheta_1-\bstheta_2\|^{\gamma}.
\end{equation*}
Using the inequality
\begin{equation*}\label{eq:abc2}
|a_1 b_1 c_1^2 - a_2 b_2 c_2^2| \leq |a_1||b_1||c_1 + c_2||c_1 - c_2| + |a_1||c_2^2||b_1 - b_2| + |b_2 c_2^2||a_1 - a_2|,
\end{equation*}
valid for every $a_1,a_2,b_1,b_2,c_1,c_2 \in \R$ gives for all $i \in \N$,
\begin{equation}\label{eq:SEFd}
\begin{split}
& \Big|G_i(\bstheta_1) \Big( \partiall G_i(\bstheta_1)   \Big) G^2_{i+h}(\bstheta_1) - G_i(\bstheta_2) \Big( \partiall G_i(\bstheta_2)   \Big) G^2_{i+h}(\bstheta_2) \Big|  \\
& \leq 2\bigg(  \sup_{\bstheta \in \ThetaHCtwo} |G_i(\bstheta)| \Big|\partiall G_i(\bstheta)\Big||G_{i+h} (\bstheta)| \bigg) U_{i+h} \|\bstheta_1-\bstheta_2\|^{\gamma} \\
& + \bigg(  \sup_{\bstheta \in \ThetaHCtwo} |G_i(\bstheta)||G^2_{i+h}(\bstheta)| \bigg) V_i \|\bstheta_1-\bstheta_2\|^{\gamma} \\
& + \bigg(  \sup_{\bstheta \in \ThetaHCtwo} \Big|\partiall G_i(\bstheta)\Big| |G^2_{i+h}(\bstheta)| \bigg) U_i              \|\bstheta_1-\bstheta_2\|^{\gamma} \\
& =: I_{i,h} \|\bstheta_1-\bstheta_2\|^{\gamma}.
\end{split}
\end{equation}
Another application of the H\"{o}lder inequality combined with \eqref{eq:Esu2} and an analogous result for $\gradtheta G_i(\bstheta)$ gives $\E I_{1,h} < \infty$. Similar calculations as in \eqref{eq:SEFd} can be used to show that for all $i \in \N$
\begin{equation}\label{eq:SEFd2}
\Big|G^2_i(\bstheta_1) G_{i+h}(\bstheta_1) \Big( \partiall G_{i+h}(\bstheta_1)   \Big)  - G^2_i(\bstheta_2) G_{i+h}(\bstheta_2) \Big( \partiall G_{j+h}(\bstheta_2) \Big) \Big|  \leq  I^\ast_{i,h} \|\bstheta_1-\bstheta_2\|^{\gamma},
\end{equation}
with $\E I^\ast_{1,h} < \infty$.\\
\textbf{Step 3. Stochastic equicontinuity.} Let $\xi, \nu > 0$ and $0 < \delta < \min\{1, \eta\xi/\E (I_{1,h}+I^{\ast}_{1,h})\}$. Then, it follows from \eqref{eq:dYthen}, \eqref{eq:SEFd}, \eqref{eq:SEFd2} and Markov's inequality that
\begin{equation*}
\P\Bigg( \sup_{\substack{0 < \|\bstheta_1-\bstheta_2\| < \delta \\ \bstheta_1,\bstheta_2 \in \ThetaHCtwo}} \Big|\partiall \hat{\mu}_n(h;\bstheta_1) - \partiall \hat{\mu}_n(h;\bstheta_2) \Big| > \eta \Bigg) \leq \E (I_{1,h}+I^\ast_{1,h}) \frac{\delta^\gamma}{\eta} < \xi.
\end{equation*}
This together with the pointwise convergence in Step 1 allow us to conclude the uniform convergence in \eqref{eq:supDeco} by means of Theorem~10.2 in \citet{Pollard90EP}.
\end{proof}


%
%

\begin{small}
\bibliographystyle{abbrvnat}
\bibliography{references}       

\begin{thebibliography}{43}
\providecommand{\natexlab}[1]{#1}
\providecommand{\url}[1]{\texttt{#1}}
\expandafter\ifx\csname urlstyle\endcsname\relax
  \providecommand{\doi}[1]{doi: #1}\else
  \providecommand{\doi}{doi: \begingroup \urlstyle{rm}\Url}\fi

\bibitem[Applebaum(2009)]{Applebaum09}
D.~Applebaum.
\newblock \emph{L{\'e}vy Processes and Stochastic Calculus}.
\newblock Cambridge University Press, 2009.

\bibitem[Bayrac{\i} and {\"U}nal(2014)]{Bayraci14}
S.~Bayrac{\i} and G.~{\"U}nal.
\newblock Stochastic interest rate volatility modeling with a continuous-time
  {GARCH}(1,1) model.
\newblock \emph{Journal of Computational and Applied Mathematics},
  259:\penalty0 464--473, 2014.

\bibitem[Bianchi and Cleur(1996)]{Bianchi96}
C.~Bianchi and E.~M. Cleur.
\newblock {Indirect estimation of stochastic differential equation models: some
  computational experiments}.
\newblock \emph{Computational Economics}, 9\penalty0 (3):\penalty0 257--274,
  1996.

\bibitem[Bibbona and Negri(2015)]{Bibbona15}
E.~Bibbona and I.~Negri.
\newblock {Higher moments and prediction-based estimation for the COGARCH(1,1)
  model}.
\newblock \emph{Scandinavian Journal of Statistics}, 42\penalty0 (4):\penalty0
  891--910, 2015.

\bibitem[Bibbona et~al.(2014)Bibbona, Negri, Arcagni, and Mercuri]{Bibbona14}
E.~Bibbona, I.~Negri, A.~Arcagni, and L.~Mercuri.
\newblock \emph{COGARCH: Simulation and inference for COGARCH(1,1) processes},
  2014.
\newblock URL \url{http://R-Forge.R-project.org/projects/cogarch/}.
\newblock R package version 1.3-1/r32.

\bibitem[Bradley(2007)]{Bradley07}
R.~Bradley.
\newblock \emph{Introduction to Strong Mixing Conditions}, volume~1.
\newblock Kendrick Press, Heber City, 2007.

\bibitem[Brockwell and Davis(2013)]{Brockwell13}
P.~J. Brockwell and R.~A. Davis.
\newblock \emph{Time Series: Theory and Methods}.
\newblock Springer, New York, 2013.

\bibitem[de~Luna and Genton(2001)]{deLuna01}
X.~de~Luna and M.~G. Genton.
\newblock Robust simulation-based estimation of {ARMA} models.
\newblock \emph{Journal of Computational and Graphical Statistics}, 10\penalty0
  (2):\penalty0 370--387, 2001.

\bibitem[Do~R\^ego~Sousa()]{Thiago}
T.~Do~R\^ego~Sousa.
\newblock \emph{Simulation-based Estimation of Stochastic Volatility
  Processes}.
\newblock PhD thesis, Department of Mathematics, Technical University of
  Munich.
\newblock In preparation.

\bibitem[Fasen-Hartmann and Kimmig(2018)]{Fasen18II}
V.~Fasen-Hartmann and S.~Kimmig.
\newblock Robust estimation of continuous-time arma models via indirect
  inference.
\newblock \emph{arXiv:1804.00849}, 2018.

\bibitem[Garcia et~al.(2011)Garcia, Renault, and Veredas]{Garcia11}
R.~Garcia, E.~Renault, and D.~Veredas.
\newblock {Estimation of stable distributions by Indirect Inference}.
\newblock \emph{Journal of Econometrics}, 161\penalty0 (2):\penalty0 325--337,
  2011.

\bibitem[Gourieroux et~al.(1993)Gourieroux, Monfort, and Renault]{Gourieroux93}
C.~Gourieroux, A.~Monfort, and E.~Renault.
\newblock Indirect inference.
\newblock \emph{Journal of Applied Econometrics}, 8:\penalty0 S85--S118, 1993.

\bibitem[Gourieroux et~al.(2000)Gourieroux, Renault, and Touzi]{Gourieroux00}
C.~Gourieroux, E.~Renault, and N.~Touzi.
\newblock Calibration by simulation for small sample bias correction.
\newblock In R.~Mariano, T.~Schuermann, and M.~J. Weeks, editors,
  \emph{Simulation-based Inference in Econometrics: Methods and Applications},
  pages 328--358. Cambridge University Press, Cambridge, 2000.

\bibitem[Gourieroux et~al.(2010)Gourieroux, Phillips, and Yu]{Gourieroux10}
C.~Gourieroux, P.~C.~B. Phillips, and J.~Yu.
\newblock {Indirect Inference for dynamic panel models}.
\newblock \emph{Journal of Econometrics}, 157\penalty0 (1):\penalty0 68--77,
  2010.
\newblock ISSN 03044076.
\newblock \doi{10.1016/j.jeconom.2009.10.024}.

\bibitem[Haug et~al.(2007)Haug, Kl{\"u}ppelberg, Lindner, and Zapp]{Haug07}
S.~Haug, C.~Kl{\"u}ppelberg, A.~Lindner, and M.~Zapp.
\newblock Method of moment estimation in the {COGARCH}(1,1) model.
\newblock \emph{Econometrics Journal}, 10\penalty0 (2):\penalty0 320--341,
  2007.

\bibitem[Hutton and Nelson(1984)]{James84}
J.~Hutton and P.~Nelson.
\newblock {Interchanging the order of stochastic integration and ordinary
  differentiation}.
\newblock \emph{Stochastic Processes and their Applications}, 18:\penalty0
  371--377, 1984.

\bibitem[Iacus et~al.(2017)Iacus, Mercuri, and Rroji]{Iacus17cogarch}
S.~Iacus, L.~Mercuri, and E.~Rroji.
\newblock Cogarch (p, q): Simulation and inference with the yuima package.
\newblock \emph{Journal of Statistical Software}, 80\penalty0 (4), 2017.

\bibitem[Iannace(2014)]{Iannace14}
M.~Iannace.
\newblock \emph{{COGARCH Processes: Theory and Asymptotics for the
  Pseudo-Maximum Likelihood Estimator}}.
\newblock PhD thesis, University of Milano-Bicocca, 2014.

\bibitem[Ibragimov and Linnik(1971)]{Ibragimov71}
I.~Ibragimov and Y.~Linnik.
\newblock \emph{Independent and Stationary Sequences of Random Variables}.
\newblock Wolters-Noordhoff, Groningen, 1971.

\bibitem[Jiang(1998)]{Jiang98}
G.~J. Jiang.
\newblock {Estimation of jump-diffusion processes based on Indirect Inference}.
\newblock \emph{IFAC Proceedings Volumes}, 31\penalty0 (16):\penalty0 385--390,
  1998.

\bibitem[Khoshnevisan(2002)]{Khoshnevi02}
D.~Khoshnevisan.
\newblock \emph{Multiparameter Processes: An Introduction to Random Fields}.
\newblock Springer, New York, 2002.

\bibitem[Kim and Lee(2013)]{Kim13}
M.~Kim and S.~Lee.
\newblock {On the maximum likelihood estimator for irregularly observed time
  series data from COGARCH(1,1) models}.
\newblock \emph{Revstat Statistical Journal}, 11\penalty0 (2):\penalty0
  135--168, 2013.

\bibitem[Kl{\"u}ppelberg et~al.(2004)Kl{\"u}ppelberg, Lindner, and
  Maller]{Kluppelberg04}
C.~Kl{\"u}ppelberg, A.~Lindner, and R.~Maller.
\newblock A continuous-time {GARCH} process driven by a {L}\'evy process:
  stationarity and second-order behaviour.
\newblock \emph{Journal of Applied Probability}, 41\penalty0 (3):\penalty0
  601--622, 2004.

\bibitem[Kl{\"{u}}ppelberg et~al.(2006)Kl{\"{u}}ppelberg, Lindner, and
  Maller]{Kluppelberg06}
C.~Kl{\"{u}}ppelberg, A.~Lindner, and R.~Maller.
\newblock {Continuous time volatility modelling: COGARCH versus
  {O}rnstein-{U}hlenbeck models}.
\newblock In Y.~Kabanov, R.~Liptser, and J.~Stoyanov, editors, \emph{The
  Shiryaev Festschrift: From Stochastic Calculus to Mathematical Finance},
  pages 393--419. Springer, Berlin, 2006.

\bibitem[Kl{\"u}ppelberg et~al.(2011)Kl{\"u}ppelberg, Maller, and
  Szimayer]{Kluppelberg10}
C.~Kl{\"u}ppelberg, R.~Maller, and A.~Szimayer.
\newblock The {COGARCH}: a review, with news on option pricing and statistical
  inference.
\newblock In J.~Blath, P.~Imkeller, and S.~Roelly, editors, \emph{Survey in
  Stochastic Processes. Proc. of the 33rd SPA Conference in Berlin}, pages
  29--50. EMS Series of Congress Reports, EMS Publishing House, Z{\"{u}}rich,
  2011.

\bibitem[K\"onigsberger(2004)]{Koenigsberger}
K.~K\"onigsberger.
\newblock \emph{Analysis~1}.
\newblock Springer, Berlin, 6 edition, 2004.

\bibitem[Krengel(1985)]{Krengel85}
U.~Krengel.
\newblock \emph{Ergodic Theorems}.
\newblock Walter de Gruyter, Berlin, 1985.

\bibitem[Laurini and Hotta(2013)]{Laurini13}
M.~P. Laurini and L.~K. Hotta.
\newblock Indirect inference in fractional short-term interest rate diffusions.
\newblock \emph{Mathematics and Computers in Simulation}, 94:\penalty0
  109--126, 2013.

\bibitem[Lindner and Maller(2005)]{LM}
A.~Lindner and R.~Maller.
\newblock {L}\'evy integrals and the stationarity of generalised
  {O}rnstein-{U}hlenbeck processes.
\newblock \emph{Stochastic Processes and their Applications}, 115\penalty0
  (10):\penalty0 1701--1722, 2005.

\bibitem[Lombardi and Calzolari(2008)]{Lombardi08}
M.~J. Lombardi and G.~Calzolari.
\newblock {Indirect estimation of alpha-stable distributions and processes}.
\newblock \emph{Econometrics Journal}, 11\penalty0 (1):\penalty0 193--208,
  2008.

\bibitem[Maller et~al.(2008)Maller, M{\"{u}}ller, and Szimayer]{Maller08}
R.~A. Maller, G.~M{\"{u}}ller, and A.~Szimayer.
\newblock {GARCH modelling in continuous time for irregularly spaced time
  series data}.
\newblock \emph{Bernoulli}, 14\penalty0 (2):\penalty0 519--542, 2008.

\bibitem[M{\"{u}}ller(2010)]{Muller10}
G.~M{\"{u}}ller.
\newblock {MCMC estimation of the COGARCH(1,1) model}.
\newblock \emph{Journal of Financial Econometrics}, 8\penalty0 (4):\penalty0
  481--510, 2010.

\bibitem[Newey and McFadden(1994)]{Newey94Large}
W.~K. Newey and D.~McFadden.
\newblock Large sample estimation and hypothesis testing.
\newblock \emph{Handbook of econometrics}, 4:\penalty0 2111--2245, 1994.

\bibitem[Pollard(1990)]{Pollard90EP}
D.~Pollard.
\newblock \emph{Empirical Processes: Theory and Applications}, volume~2.
\newblock Institute of Mathematical Statistics, Hayward, CA, 1990.

\bibitem[Protter(2005)]{Protter90}
P.~Protter.
\newblock \emph{Stochastic Integration and Differential Equations}.
\newblock Springer, New York, 2 edition, 2005.

\bibitem[{R Core Team}(2015)]{Rsoftware}
{R Core Team}.
\newblock \emph{R: A Language and Environment for Statistical Computing}.
\newblock R Foundation for Statistical Computing, Vienna, Austria, 2015.
\newblock URL \url{https://www.R-project.org/}.

\bibitem[Raknerud and Skare(2012)]{Raknerud12}
A.~Raknerud and {\O}.~Skare.
\newblock Indirect inference methods for stochastic volatility models based on
  non-{G}aussian {O}rnstein-{U}hlenbeck processes.
\newblock \emph{Computational Statistics \& Data Analysis}, 56\penalty0
  (11):\penalty0 3260--3275, 2012.

\bibitem[Schilling and Partzsch(2014)]{Schilling14BM}
R.~L. Schilling and L.~Partzsch.
\newblock \emph{Brownian Motion: An Introduction to Stochastic Processes}.
\newblock De Gruyter, Berlin/Boston, 2014.

\bibitem[Shaman and Stine(1988)]{Shaman88}
P.~Shaman and R.~A. Stine.
\newblock {The bias of autoregressive coefficient estimators}.
\newblock \emph{Journal of the American Statistical Association}, 83\penalty0
  (403):\penalty0 842--848, 1988.

\bibitem[Smith(1993)]{Smith93}
A.~A. Smith.
\newblock {Estimating nonlinear time series models using simulated vector
  autoregressions}.
\newblock \emph{Journal of Applied Econometrics}, 8:\penalty0 S63--S84, 1993.

\bibitem[S{\o}rensen(2000)]{Sorensen00}
M.~S{\o}rensen.
\newblock {Prediction-based estimating functions}.
\newblock \emph{Econometrics Journal}, 3:\penalty0 123--147, 2000.

\bibitem[Straumann and Mikosch(2006)]{Straumann06}
D.~Straumann and T.~Mikosch.
\newblock Quasi-maximum-likelihood estimation in conditionally heteroscedastic
  time series: A stochastic recurrence equations approach.
\newblock \emph{Annals of Statistics}, 34\penalty0 (5):\penalty0 2449--2495,
  2006.

\bibitem[Wahlberg et~al.(2015)Wahlberg, Welsh, and Ljung]{Wahlberg15}
B.~Wahlberg, J.~Welsh, and L.~Ljung.
\newblock {Identification of stochastic Wiener systems using Indirect
  Inference}.
\newblock \emph{IFAC-PapersOnLine}, 48\penalty0 (28):\penalty0 620--625, 2015.

\end{thebibliography}
\end{small}
       
\end{document}